\documentclass[11pt]{article}
\usepackage{fullpage}
\usepackage{algorithm} %
\usepackage[noend]{algpseudocode}
\usepackage{xcolor}

\def\showauthornotes{0}

\def\showdraftbox{0}

\usepackage{fancybox}

\usepackage{amsmath,amssymb,amsthm,amstext,amsfonts,bbm,algorithm,algorithmicx,graphicx,xspace,nicefrac}
\usepackage{color,stmaryrd,enumerate,latexsym,bm,amsfonts,wrapfig,verbatim,tabularx,textcomp,
subfig}
\usepackage{amsfonts}
\usepackage{comment} 
\usepackage{epsfig} 
\usepackage{latexsym,nicefrac,bbm}
\usepackage{xspace}
\usepackage{color,fancybox,graphicx,url}
\usepackage{enumitem}
\usepackage{booktabs}
\usepackage{commath}
\usepackage{mdframed}
\usepackage{pdfsync}

\usepackage{thm-restate}

\usepackage{fancybox}
\newenvironment{fminipage}%
  {\begin{Sbox}\begin{minipage}}%
  {\end{minipage}\end{Sbox}\fbox{\TheSbox}}

\newcommand{\defeq}{\stackrel{\textup{def}}{=}}

\newtheorem{theorem}{Theorem}[section]

\newtheorem{lemma}[theorem]{Lemma}
\newtheorem{definition}[theorem]{Definition}
\newtheorem{corollary}[theorem]{Corollary}

\newcommand{\diag}[1]{{\bf Diag}\left({#1}\right)}

\newcommand{\nfrac}[2]{\nicefrac{#1}{#2}} \def\abs#1{\left| #1
  \right|} \renewcommand{\norm}[1]{\ensuremath{\left\lVert #1
    \right\rVert}}

\newcommand{\pair}[1]{\left\langle{#1}\right\rangle} %

\newcommand\rea{\mathbb R}

\newcommand{\marginlabel}[1]%
{\mbox{}\marginpar{\it{\raggedleft\hspace{0pt}#1}}}

\newcommand\poly{{\textrm{poly}}}  %
 
\newcommand\polyloglog{{\textrm{polyloglog}}}

\definecolor{Mygray}{gray}{0.8}

\ifcsname ifcommentflag\endcsname\else
\expandafter\let\csname ifcommentflag\expandafter\endcsname
\csname iffalse\endcsname
\fi

\ifnum\showauthornotes=1
\newcommand{\todo}[1]{\colorbox{Mygray}{\color{red}\parbox{\textwidth}{#1}}}
\else
\newcommand{\todo}[1]{}
\fi

\ifnum\showauthornotes=1
\newcommand{\Authornote}[2]{{\sf\small\color{red}{[#1: #2]}}}
\newcommand{\Authoredit}[2]{{\sf\small\color{red}{[#1]}\color{blue}{#2}}}
\newcommand{\Authorcomment}[2]{{\sf \small\color{gray}{[#1: #2]}}}
\newcommand{\Authorfnote}[2]{\footnote{\color{red}{#1: #2}}}
\newcommand{\Authorfixme}[1]{\Authornote{#1}{\textbf{??}}}
\newcommand{\Authormarginmark}[1]{\marginpar{\textcolor{red}{\fbox{%
#1:!}}}}
\else
\newcommand{\Authornote}[2]{}
\newcommand{\Authoredit}[2]{}
\newcommand{\Authorcomment}[2]{}
\newcommand{\Authorfnote}[2]{}
\newcommand{\Authorfixme}[1]{}
\newcommand{\Authormarginmark}[1]{}
\fi

\newcommand{\paren}[1]{\left({#1}\right)}

\def\iff{\Leftrightarrow}

\newlength{\pgmtab}  %
\setlength{\pgmtab}{1em}  %

 {
	\begin{enumerate}}{\end{enumerate}}

\def\qedsketch{\ifmmode\Box\else{\unskip\nobreak\hfil
\penalty50\hskip1em\null\nobreak\hfil$\Box$
\parfillskip=0pt\finalhyphendemerits=0\endgraf}\fi}

\newlength{\tpush}
\setlength{\tpush}{2\headheight}
\addtolength{\tpush}{\headsep}

\newcommand{\handout}[5]{
   \noindent
   \begin{center}
   \framebox{ \vbox{ \hbox to \textwidth { {\bf \coursenum\ :\  \coursename} \hfill #5 }
       \vspace{3mm}
       \hbox to \textwidth { {\Large \hfill #2  \hfill} }
       \vspace{1mm}
       \hbox to \textwidth { {\it #3 \hfill #4} }
     }
   }
   \end{center}
   \vspace*{4mm}
   \newcommand{\lecturenum}{#1}
   \addcontentsline{toc}{chapter}{Lecture #1 -- #2}
}

\ifnum\showdraftbox=1

\else

\fi

\allowdisplaybreaks

\usepackage[
    backend=biber,
    giveninits=true,
    style=alphabetic,
    url=false, 
    doi=false,
    hyperref,
    backref=true,
    backrefstyle=none,
    minnames=4,
    maxnames=10,
    sortcites
]{biblatex}

\addbibresource{refs.bib}
\usepackage{hyperref}

\newcommand{\sushant}{\Authornote{Sushant}}

\def\defeq{\stackrel{\mathrm{def}}{=}}
\def\setof#1{\left\{#1  \right\}}

\def\sgn#1{\mathrm{sgn} (#1)}

\def\union{\cup}

\def\abs#1{\left|#1  \right|}

\def\norm#1{\left\| #1 \right\|}

\newcommand\residual{\mathit{\alpha}}

\newcommand\FasterGammaApprox{Gamma-Solver}

\newcommand{\etal}{\emph{et al}}

\def\aa{\pmb{\mathit{a}}}
\newcommand\bb{\boldsymbol{\mathit{b}}}
\newcommand\cc{\boldsymbol{\mathit{c}}}
\newcommand\dd{\boldsymbol{\mathit{d}}}

\newcommand\ff{\boldsymbol{\mathit{f}}}
\renewcommand\gg{\boldsymbol{\mathit{g}}}

\newcommand\rr{\boldsymbol{\mathit{r}}}
\renewcommand\ss{\boldsymbol{\mathit{s}}}
\def\tt{\boldsymbol{\mathit{t}}}

\newcommand\vv{\boldsymbol{\mathit{v}}}
\newcommand\ww{\boldsymbol{\mathit{w}}}
\newcommand\yy{\boldsymbol{\mathit{y}}}
\newcommand\zz{\boldsymbol{\mathit{z}}}
\newcommand\xx{\boldsymbol{\mathit{x}}}

\renewcommand\AA{\boldsymbol{\mathit{A}}}
\newcommand\BB{\boldsymbol{\mathit{B}}}
\newcommand\CC{\boldsymbol{\mathit{C}}}

\newcommand\MM{\boldsymbol{\mathit{M}}}
\newcommand\LL{\boldsymbol{\mathit{L}}}
\newcommand\RR{\boldsymbol{\mathit{R}}}

\newcommand\UU{\boldsymbol{\mathit{U}}}

\newcommand\VV{\boldsymbol{\mathit{V}}}

\newcommand\ZZ{\boldsymbol{\mathit{Z}}}

\newcommand\AAhat{\boldsymbol{\widehat{\mathit{A}}}}

\newcommand\ZZhat{\boldsymbol{\widehat{\mathit{Z}}}}

\newcommand\CChat{\boldsymbol{\widehat{\mathit{C}}}}

\newcommand\xxtil{\boldsymbol{\widetilde{\mathit{x}}}}
\newcommand\yytil{\boldsymbol{\widetilde{\mathit{y}}}}
\newcommand\yyhat{\boldsymbol{\widehat{\mathit{y}}}}

\newcommand\Otil{\widetilde{O}}

\newcommand\ddhat{\boldsymbol{\widehat{d}}}
\newcommand\rrhat{\boldsymbol{\widehat{r}}}
\newcommand\rrtil{\boldsymbol{\widetilde{r}}}
\newcommand\xxhat{\boldsymbol{\widehat{x}}}

\newcommand{\one}{\mathbf{1}}

\DeclareMathOperator*{\argmin}{arg\,min}

\newenvironment{tight_enumerate}{
\begin{enumerate}
 \setlength{\itemsep}{2pt}
 \setlength{\parskip}{1pt}
}{\end{enumerate}}
\newenvironment{tight_itemize}{
\begin{itemize}
 \setlength{\itemsep}{2pt}
 \setlength{\parskip}{1pt}
}{\end{itemize}}

\newcommand{\gammap}{\gamma_{p}}
\newcommand{\pnorm}[1]{\norm{#1}_{p}}
\newcommand{\opt}{\textsc{OPT}}
\newcommand{\eps}{\varepsilon}

\newcommand{\smallnorm}[1]{\|#1\|}

\newcommand{\energy}[1]{\Psi\left(#1\right)}

\begin{document}

\title{Iterative Refinement for $\ell_p$-norm Regression
\footnote{This paper has been published at SODA 2019~\cite{AdilKPS19}, and was
  initially submitted to SODA on July 12, 2018.}}
\author{
  Deeksha Adil\thanks{University of Toronto. 
    \texttt{deeksha@cs.toronto.edu}.
    Supported by an Ontario Graduate Scholarship, and by a Connaught
    New Researcher award to Sushant Sachdeva.
  }
  \and
  Rasmus Kyng\thanks{  Harvard.   \texttt{rjkyng@gmail.com}.
    Supported by ONR grant
    N00014-18-1-2562.}
  \and
  Richard Peng\thanks{
  Georgia Tech.
  \texttt{richard.peng@gmail.com}. Supported in part by the National Science Foundation under Grant No. 1718533.}
  \and
  Sushant Sachdeva  \thanks{University of Toronto. \texttt{sachdeva@cs.toronto.edu}. Research supported in part by the Natural Sciences and
    Engineering Research Council of Canada (NSERC), and a Connaught
    New Researcher award.}  }

\maketitle
\thispagestyle{empty}

\begin{abstract}

  We give improved algorithms for the $\ell_{p}$-regression problem,
  $\min_{\xx} \|\xx\|_{p}$ such that $\AA\xx=\bb,$ for all
  $p \in (1,2) \cup (2,\infty).$ Our algorithms obtain a high accuracy
  solution in
  $\Otil_{p}(m^{\frac{|p-2|}{2p + |p-2|}}) \le
  \Otil_{p}(m^{\nfrac{1}{3}})$ iterations, where each iteration
  requires solving an $m \times m$ linear system, with $m$ being the
  dimension of the ambient space.
  
  Incorporating a procedure for maintaining an approximate inverse of
  the linear systems that we need to solve at each iteration, we give
  algorithms for solving $\ell_{p}$-regression to $1 / \poly(n)$
  accuracy that runs in time $\Otil_p(m^{\max\{\omega, 7/3\}}),$ where
  $\omega$ is the matrix multiplication constant.  For the current
  best value of $\omega > 2.37$, this means that we can solve
  $\ell_{p}$ regression as fast as $\ell_{2}$ regression, for all
  constant $p$ bounded away from $1.$

  Our algorithms can be combined with nearly-linear time solvers for
  linear systems in graph Laplacians to give minimum $\ell_{p}$-norm
  flow / voltage solutions to $1 / \poly(n)$ accuracy on an undirected
  graph with $m$ edges in
  $\Otil_{p}(m^{1 + \frac{|p-2|}{2p + |p-2|}}) \le
  \Otil_{p}(m^{\nfrac{4}{3}})$ time.

  For sparse graphs and for matrices with similar dimensions, our
  iteration counts and running times improve upon the $p$-norm
  regression algorithm by [Bubeck-Cohen-Lee-Li STOC`18], as well as
  general purpose convex optimization algorithms.  At the core of our
  algorithms is an iterative refinement scheme for $\ell_{p}$-norms,
  using the quadratically-smoothed $\ell_{p}$-norms introduced in the
  work of Bubeck \etal.  Formally, given an initial solution, we
  construct a problem that seeks to minimize a quadratically-smoothed
  $\ell_{p}$ norm over a subspace, such that a crude solution to this
  problem allows us to improve the initial solution by a constant
  factor, leading to algorithms with fast convergence.

\end{abstract}

\newpage
\tableofcontents

\newpage
\setcounter{page}{1}

\section{Introduction}
Iterative methods that converge rapidly to a solution are of
fundamental importance to numerical analysis, optimization, and more
recently, graph algorithms.  In the study of iterative methods, there
are significant discrepancies between iterative methods geared towards
linear problems, and ones that can handle more general convex
objectives.  For systems of linear equations, which corresponds to
minimizing $\ell_{2}$-norm objectives over a subspace,
most
iterative methods obtain $\epsilon$-approximate solutions in iteration
counts that scale as $\log(1 / \epsilon)$.  More generally, for
appropriately defined notions of accuracy, a constant-accuracy
linear system solver can be iterated to give a much higher accuracy
solver using a few  calls to the crude solver.  Such phenomena are not limited to
linear systems either: an algorithm that produces approximate maximum
flows on directed graphs can be iterated on the residual graph to
quickly obtain high-accuracy answers.

On the other hand, for the much wider space of non-linear optimization
problems arising from optimization and machine learning, it's
significantly more expensive to obtain high accuracy solutions.  Many
widely used methods such as (accelerated) gradient descent, obtain
$\epsilon$-approximate answers using iteration counts that scale as
$\poly(1 / \epsilon).$ Such discrepancies also occur in the overall
asymptotic running times.  An important and canonical problem in this
space is $\ell_p$-norm regression:
\begin{align}
  \label{eq:primal}
  \tag{*}
  \min_{\xx \in \rea^{m}: \AA \xx = \bb} \pnorm{\xx}^{p}, 
\end{align}
for some $\AA \in \rea^{n \times m} (m \ge n),$ and
$\bb \in \rea^{n}.$
For $p=2,$ this corresponds exactly to solving a linear system, and
hence is solvable by a matrix inversion in $O(m^\omega)$ time
\footnote{$\omega$ is the matrix multiplication exponent. Currently we
  know $\omega \le 2.3728639..$~\cite{Williams12,Legall14}.}
For $p = 1$ and $p = \infty,$ this problem is inter-reducible
to linear programming~\cite{Tillmann13:book,Tillmann15,BubeckCLL18}.

Interior point methods also allow us to solve $\ell_{p}$-norm
regression problems in $\sqrt{rank}$ iterations~\cite{NesterovN94, LeeS14}, where each iteration
requires solving an $m \times m$ linear system for any
$p \in [1,\infty]$.  Bubeck \etal~\cite{BubeckCLL18} show that this
iteration count is tight for the interior point method framework, and
instead propose a different method which requires only
$\Otil_p(m^{\abs{\frac{1}{2}-\frac{1}{p}}})$ \footnote{$O_p(\cdot)$
  notation hides constant factors that depend on $p$, and its dual
  norm $\frac{p}{p - 1}.$ $\Otil_p(\cdot)$ notation also hides
  $\poly(\log \frac{mn}{\eps})$ factors in addition.} iterations for
$p \neq 1,\infty,$ which for large constant $p$ still tends to about
$m^{1/2}$.  On the other hand, $\epsilon$-approximate solutions can be
computed in about $m^{1/3} \poly(1 / \epsilon)$
iterations~\cite{ChinMMP13} \footnote{this result only addressed the
  $p = \infty$ case, but its techniques generalize to all other $p$}.

Furthermore, this discrepancy also carries over to the graph theoretic
case.  If the matrix $\AA$ is the vertex-edge incidence matrix of a
graph, then this problem captures graph problems such as $p$-norm
Lipschitz learning and finding $\ell_{p}$-norm minimizing flows
meeting demands given by $\bb.$ Here low accuracy approximate solutions can be
obtained in nearly-linear time when $p=\infty$~\cite{Peng16,
  Sherman17b}, and almost-linear time for all other values of
$p$~\cite{Sherman17a,Sidford17:talk}.
However, the current best high accuracy solutions take at least
$m^{\min\{10/7, 1+ \abs{1/2 - 1/p}\}}$ time~\cite{Madry13,
  BubeckCLL18}.

\subsection{Contributions}

\paragraph{Iterative Refinement for $\ell_{p}$-norms.}
In this paper, we propose a new iterative method for $\ell_{p}$-norm
regression problems~\eqref{eq:primal} that achieves geometric
convergence to the optimal solution. Our method only requires solving
$O_{p}(\log \nfrac{1}{\eps})$ \emph{residual problems} to find an
$\eps$-approximate solution, or $O_{p}(\kappa\log \nfrac{1}{\eps})$
residual problems, each solved to a $\kappa$-approximation
factor.
Such an iterative method was previously known only for $p=2$
and $\infty.$ Curiously, our residual problems look very similar to
the original problem~\eqref{eq:primal}, with the $\ell_{p}$ norms
replaced by their quadratically-smoothed versions introduced by
Bubeck~\etal~\cite{BubeckCLL18}.
This result, Theorem~\ref{thm:IterativeRefinement}, can be stated
informally as:
\begin{theorem}
\label{thm:IterativeRefinementInformal}
There exists a class of residual problems for $p$-norm regression
(which we will define in Definition~\ref{def:ResidualProblem})
such that any $p$-norm regression problem can be solved to
$\epsilon$-relative accuracy by solving to relative error $\kappa$
a sequence of
$O_p ( \kappa \log(\frac{m}{\eps}))$ residual problems.
\end{theorem}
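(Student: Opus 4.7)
The plan is to set up an iterative refinement scheme in direct analogy with the $\ell_2$ case. At each step we keep a feasible iterate $\xx$ with $\AA\xx = \bb$ and look for an update $\dd$ in the nullspace of $\AA$ such that $\xx - \dd$ has smaller $p$-norm; this $\dd$ is obtained by approximately minimizing a carefully chosen local second-order model of $\|\xx-\dd\|_p^p - \|\xx\|_p^p$, which is precisely the residual problem of Definition~\ref{def:ResidualProblem}. The main work is (i)~exhibiting such a model that sandwiches the true change in the objective between two constant multiples of itself, and (ii)~showing that an approximate minimizer of the model still yields a constant-factor multiplicative improvement toward $\opt$.

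For step (i) I would establish a scalar sandwich inequality of the form
\[
    c_p \bigl( |x|^{p-2} t^2 + |t|^p \bigr)
    \;\le\; |x - t|^p - |x|^p + p\,\sgn{x}\,|x|^{p-1} t
    \;\le\; C_p \bigl( |x|^{p-2} t^2 + |t|^p \bigr),
\]
valid for all $x, t \in \rea$, with $0 < c_p \le C_p$ depending only on $p$ (and with the usual modification at $x = 0$ when $p < 2$). Summing over coordinates expresses $\|\xx - \dd\|_p^p - \|\xx\|_p^p$ as a linear contribution $-\gg^\top \dd$, with $\gg$ the coordinate-wise gradient of $\|\xx\|_p^p$, plus a term trapped between constant multiples of $\Psi_\xx(\dd) \defeq \sum_i \bigl( |x_i|^{p-2} d_i^2 + |d_i|^p \bigr)$, which is exactly the quadratically-smoothed $p$-norm of Bubeck~\etal. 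The residual problem is then to minimize $-\gg^\top \dd + C_p \Psi_\xx(\dd)$ subject to $\AA\dd = 0$: by the upper sandwich, the residual value at any $\dd$ is an upper bound on $\|\xx - \dd\|_p^p - \|\xx\|_p^p$, so any $\dd$ with negative residual value gives a strict improvement.

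For step (ii), let $\xx^\star$ be the optimum, set $\dd^\star = \xx - \xx^\star$ (feasible for the residual problem), and let $\Delta = \|\xx\|_p^p - \|\xx^\star\|_p^p$ be the current additive gap. Applying the lower sandwich to $\dd^\star$ gives $\gg^\top \dd^\star \ge \Delta + c_p \Psi_\xx(\dd^\star)$. Probing the residual objective along $t\dd^\star$ for $t \in (0,1]$ yields
\[
    R(t\dd^\star) \;=\; - t \gg^\top \dd^\star + C_p \bigl( t^2 A + t^p B \bigr),
\]
where $A, B$ collect the quadratic and $p$-th power pieces of $\Psi_\xx(\dd^\star)$. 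Choosing $t$ to be a suitable $p$-dependent constant in $(0,1]$ makes the linear term dominate both correction terms, so that the residual optimum is at most $-\Omega_p(\Delta)$. A $\kappa$-approximate residual solution $\hat{\dd}$ therefore certifies a residual value of at most $-\Omega_p(\Delta / \kappa)$, and the upper sandwich converts this into the progress estimate $\|\xx - \hat{\dd}\|_p^p - \|\xx^\star\|_p^p \le \bigl( 1 - \Omega_p(1/\kappa) \bigr) \Delta$. Iterating this contraction gives $O_p(\kappa \log(1/\eps_0))$ residual solves to shrink the gap by a factor $\eps_0$; combining with a polynomial initial bound on $\Delta / \opt$ (e.g.\ from an $\ell_2$ warm start) supplies the extra $\log m$ factor and yields the advertised $O_p(\kappa \log(m/\eps))$ bound.

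The principal obstacle is the rescaling argument in step (ii). Because the residual mixes linear, weighted-quadratic, and $\ell_p^p$ terms of different homogeneities, the right choice of $t$ depends on whether the quadratic or the $p$-th power piece of $\Psi_\xx(\dd^\star)$ dominates, and the argument has to be carried out separately for $p \ge 2$ and $p < 2$; in the latter regime the weights $|x_i|^{p-2}$ can blow up on coordinates where $x_i$ is near zero, so both the sandwich inequality itself and the progress estimate require a careful statement (using the regularized quadratic correction of Bubeck~\etal) in order to keep all $p$-dependent constants finite.
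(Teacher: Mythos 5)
Your overall plan mirrors the paper's: establish a local sandwich of $\|\xx - \dd\|_p^p - \|\xx\|_p^p$ by a linear term plus a quadratically-smoothed correction, pose the residual problem as (approximately) optimizing that model over the nullspace of $\AA$, and iterate from an $\ell_2$ warm start to pick up the $\log m$ factor. The iteration-count calculation, the role of $\xx^{(0)}$ (Lemma~\ref{lem:initialpoint}), and the need for a $p$-dependent rescaling before invoking the sandwich are all present. One departure from Definition~\ref{def:ResidualProblem} is that you fold the \emph{upper}-sandwich constant $C_p$ into a residual to be minimized, while the paper folds the \emph{lower} constant $\frac{p-1}{p2^p}$ into a residual to be maximized. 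This is a valid alternative: it makes the conversion from residual value to true $p$-norm decrease immediate (no analogue of Lemma~\ref{lem:LambdaBound} is needed), but shifts the rescaling burden to your probe-along-$t\dd^\star$ argument, whereas the paper gets $\residual(\xx - \xx^\star) \ge \|\xx\|_p^p - \opt$ for free and rescales at step time. The two are mirror images, though strictly speaking your residual problem is not the one the theorem statement references.

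The genuine gap is the sandwich inequality for $1 < p < 2$. Your surrogate $\Psi_\xx(\dd) = \sum_i\bigl(|x_i|^{p-2}d_i^2 + |d_i|^p\bigr)$ is equivalent to $\gamma_p(|\xx|, \dd)$ up to $O_p(1)$ factors only when $p \ge 2$. For $p < 2$ the proposed \emph{lower} sandwich is false, and not only at $x = 0$. Take $p = 3/2$, $x = 1$, $t = \eps \to 0^+$: the middle expression $|x - t|^p - |x|^p + p\,\sgn{x}\,|x|^{p-1}t = \Theta(\eps^2)$, while $c_p\bigl(|x|^{p-2}t^2 + |t|^p\bigr) = \Theta(\eps^{3/2})$, so no positive $c_p$ can work. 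The failure mode is that the $|t|^p$ term dominates the true second-order change whenever $|t| \ll |x|$ and $p < 2$ --- this is separate from the $|x_i|^{p-2}$ blow-up you mention. Lemma~\ref{lem:LocalApprox} holds for all $p > 1$ precisely because $\gamma_p(t, x)$ is piecewise: the $|x|^p$ branch only activates when $|x| > t$, so it cannot overwhelm the quadratic behaviour of $|x + \Delta|^p - |x|^p - g\Delta$ near $\Delta = 0$. To repair the argument you must use the piecewise $\gamma_p$ together with its rescaling bound $\gamma_p(t, \lambda x) \le \max\{\lambda^2, \lambda^p\}\gamma_p(t, x)$ (Lemma~\ref{lem:Rescaling}), which replaces your $t^2 A + t^p B$ decomposition and requires redoing the algebra in the probing step; this is not merely a cosmetic change.
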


\paragraph{Improved Iteration Count for $\ell_{p}$-Regression.}
We then give an algorithm for quickly solving the residual problem
motivated by the approximate maximum flow by electrical flows algorithm
by Christiano et al.~\cite{ChristianoKMST10} and its generalizations
to regression problems~\cite{ChinMMP13}.
This is given as Theorem~\ref{thm:MainResult}, and can be stated
informally as:
\begin{theorem}
\label{thm:MainResultInformal}
For any $p > 2$,
an instance of a residual problem for $p$-norm regression
as defined in Definition~\ref{def:ResidualProblem} can be solved
in $\Otil_{p}(m^{\frac{p-2}{3p-2}})$ iterations, each of which
consist of solving a system of linear equations plus updates
that take linear time.
\end{theorem}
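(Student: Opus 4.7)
The plan is to adapt the multiplicative-weights / width-reduction framework of Christiano \etal~\cite{ChristianoKMST10} and Chin \etal~\cite{ChinMMP13} to the residual problem introduced in Definition~\ref{def:ResidualProblem}. Since the residual problem asks only for a constant-factor approximation to a quadratically-smoothed $\ell_{p}$ objective subject to affine constraints, we should be able to replace the linear objective / hard $\ell_{\infty}$ constraint that drives those earlier analyses by the smoothed $\ell_{p}$ term, and reduce each inner iteration to one weighted $\ell_{2}$ problem, i.e.\ one linear system solve.

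First, I would set up weights $\ww \in \rea^{m}_{>0}$, initialized to a small uniform floor. The inner loop, at each step, solves the weighted least-squares problem $\min \{\gg^{\top} \Delta + \sum_{i}(r_{i} + \ww_{i}) \Delta_{i}^{2} : \AA \Delta = 0\}$ (the residual problem's affine constraints), reads off the solution $\tilde\Delta$, and either returns it if its smoothed $\ell_{p}$ value already certifies a constant-factor optimum, or updates the weights multiplicatively on the coordinates whose $|\tilde\Delta_{i}|^{p}$ contribution is too large relative to the budget. The quadratic $r_{i} \Delta_{i}^{2}$ piece of the residual objective is absorbed directly into the weighted $\ell_{2}$; the job of the weights $\ww_{i}$ is to imitate the action of the $|\Delta_{i}|^{p-2}$ Hessian of the $\ell_{p}$ piece.

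Next, I would introduce a width parameter $\rho$ and a potential $\Phi(\ww) = \sum_{i} \ww_{i}^{q}$ for a $q$ chosen as a function of $p$ (so that its gradient matches the $p-2$ powers that arise when the $\ell_{p}$ term is linearized). Each ordinary iteration would be shown to increase $\Phi$ by at least a multiplicative factor depending on $\rho$ and $m$, while duality / the affine-constraint feasibility of the optimum would be used to upper bound $\Phi$ throughout the run. A coordinate whose weight would grow faster than $\rho$ per iteration is declared stuck and its weight is frozen at the cap; these width-reduction events are charged separately and cost a small additive number of iterations. Balancing (i) the number of ordinary iterations needed before $\Phi$ saturates with (ii) the number of width-reduction events needed to clear stuck coordinates gives, after optimizing $\rho$, the exponent $\tfrac{p-2}{3p-2}$.

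The main obstacle will be Step~3: verifying that the quadratic smoothing does not spoil the per-iteration progress lemma. In the pure $\ell_{p}$ / $\ell_{\infty}$ settings of~\cite{ChristianoKMST10,ChinMMP13} the key inequality comes from how a weighted $\ell_{2}$ minimizer relates to the $\ell_{p}$ value of the same vector; with the extra $r_{i}\Delta_{i}^{2}$ term the lower bound on progress must be re-proved, and the choice of $q$ and of the weight-update rule has to match both the quadratic regime (small $\Delta_{i}$) and the $p$-th-power regime (large $\Delta_{i}$) of the smoothed norm simultaneously. Once the invariant $\Phi \le \mathrm{poly}(m) \cdot \text{(optimum)}$ is in hand, balancing the two rates and specializing to $\kappa$-approximation (rather than $1/\poly$ accuracy) yields the $\Otil_{p}(m^{(p-2)/(3p-2)})$ iteration bound, each iteration being a single linear system solve plus linear-time weight updates.
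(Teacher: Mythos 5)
Your outline shares the paper's high-level template (MWU plus width reduction, where each iteration is a weighted least-squares solve), but there is a genuine gap in the accounting that would make the iteration bound fail: you never explain how the number of width-reduction events is controlled, beyond the assertion that they are ``charged separately and cost a small additive number of iterations.'' The paper's proof of Theorem~\ref{lem:FasterAlgo} is a \emph{two}-potential argument. One potential is $\Phi(\ww) = \gammap(m^{1/p}\tt,\ww)$, whose job is to stay bounded by $O_p(1)\,m$ so that the averaged iterate has constant $\gammap$-cost (Lemma~\ref{lem:ReduceWidthGammaPotential}); flow steps increase it only slowly, which is what bounds the number $T$ of flow steps. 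The second, independent potential is the electrical energy $\energy{\rr}$ of the weighted $\ell_2$ problem: each width-reduction step multiplies it by $\bigl(1+\Omega_p(1)\,\tau^{2/p}/m^{(p-2)/p}\bigr)$ (Lemma~\ref{lem:ReduceWidthElectricalPotential}), while its total growth is capped because $\energy{\rr} \le m^{(p-2)/p} + \Phi^{(p-2)/p}$ (Lemma~\ref{lem:ElectricalPotentialStartFinishBounds}). It is only the interplay of these two potentials that lets one bound width-reduction steps by $\rho^2 m^{2/p}\beta^{-2/(p-2)}$ and then optimize $\rho,\beta,\tau,\alpha$ to get the exponent $(p-2)/(3p-2)$. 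Your single power-sum potential $\Phi(\ww)=\sum_i\ww_i^q$ cannot simultaneously serve as the quantity that must \emph{increase} to count ordinary iterations and the quantity that must stay \emph{bounded} to certify optimality, and it says nothing about width-reduction events at all; without the electrical-energy argument the ``balancing'' in your step~3 has nothing to balance against.

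Two further points where your formulation diverges and would need repair. First, you keep the linear term $\gg^\top\Delta$ inside the $\ell_2$ objective; the paper instead binary-searches over its value and imposes $\gg^\top\Delta = c$ as a constraint (Lemmas~\ref{lem:BoundResidual}--\ref{lem:Dual} and Algorithm~\ref{alg:MainAlgo}). This matters because Lemma~\ref{lem:Oracle} bounds $\sum_e \rr_e\Delta_e^2$ by comparing the oracle's minimizer to the \emph{same} feasible $\Delta^\star$ of the scaled problem; if $\gg^\top\Delta$ sits in the objective rather than the constraint set, that comparison breaks and the key progress inequality would need a new proof. Second, your ``freeze the weight at the cap'' rule for stuck coordinates is not what the paper does: width reduction \emph{increases} the weights on entries with $|\Delta_e|\ge\rho$, and crucially only on those whose resistance is still below a threshold $\beta$ (Line~\ref{lin:WidthReduceEdge}). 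Entries that are both wide and already high-resistance are left alone and handled by a separate argument (the set $B$ in Lemma~\ref{lem:ReduceWidthElectricalPotential}); omitting the $\beta$-cutoff would spoil the bound on how much $\Phi$ grows per width-reduction step.
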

This improves on the work of Bubeck~\etal~\cite{BubeckCLL18}
for all $p > 2,$ with
the number of iterations equaling $\Otil_{p}(1)$ for $p=2$
(essentially the same as Bubeck~\etal) and
tending to $\Otil(m^{\nfrac{1}{3}})$ as $p$ goes to $\infty$ (compared
to $\Otil(m^{\nfrac{1}{2}})$ for Bubeck~\etal).
However, our results don't give anything for $p=\infty$ 
due to the dependency in $p$ in the $\Otil_{p}(\cdot)$ term.
It's worth noting that even in the constant error regime,
this improves by a factor of about $\min\{m^{\frac{(p-2)^2}{2p(3p - 2)}},
m^{\frac{4}{3(3p - 2)}}\}$ over the current state of the art,
which for small $p$ is due to Bubeck et al.~\cite{BubeckCLL18},
and for large $p$ is based on unpublished modifications to
Christiano et al.~\cite{ChristianoKMST10,Madry11:communication}.

\paragraph{A Duality Based Approach to $\ell_{p}$-Regression.}
For the remaining case of $1 < p < 2$, we instead solve the dual problem,
which is a $\frac{p}{p - 1}$-norm regression problem, and utilize its
solution to solve our original $\ell_{p}$-Regression.  This leads to
iteration counts of the form
$\Otil_{p}(m^{\frac{2 - p}{p + 2}} \log(1 / \epsilon))$ for solving such
problems.  Note that this result also does not give anything when
$p = 1$, as the constants related to its dual norm, $\frac{p}{1 - p}$
become prohibitive. For all $p \in (1,\infty),$ our
iteration count achieves the following exponent on $m,$
\[
\frac{\abs{\frac{1}{2} - \frac{1}{p}}}{1 + \abs{\frac{1}{2} - \frac{1}{p}}},
\]
while the exponent from the previous result~\cite{BubeckCLL18} is
$\abs{\frac{1}{2} - \frac{1}{p}}$: our algorithm has better dependence
on $m$ on all constant $p$ (albeit with larger constants depending on
$p$).

For the case of $p = 4$, a manuscript by
Bullins~\cite{Bullins18:arxiv} from December 2018 (after our paper was
accepted to SODA 2019, but independently developed), gives the same
iteration count as our algorithm of $n^{1/5}\log(1/\epsilon)$ up to
polylogs. Bullins' approach requires a linear system solve per
iteration, similar to our approach when implemented without inverse
maintenance.  Bullins' algorithm is based on higher-order
acceleration, and the agreement between running times suggests there
may be a strong connection between our ``accelerated'' multiplicative
weight method and his accelerated gradient-based method.

\paragraph{Faster $\ell_{p}$-Regression.}
Our improved iteration counts can be readily combined with methods
for speeding up optimization algorithms that utilize linear system solvers,
including inverse maintenance~\cite{Vaidya89,LeeS15}.
This results in an $\Otil_{p}(m^{\max\{\omega, \nfrac{7}{3}\}} )$ time
algorithm for solving $\ell_{p}$ regression problems for all $p \in (1,\infty)$,
which we formalize in Theorem~\ref{thm:FasterMatrixAlgo}.

This bound for $p$-norm regression with general matrices brings us to the somewhat
surprising conclusion that for the current value of $\omega > 7/3$,
$p$-norm regression problems
(with constant $p$ that's also constant-bounded away from $1$)
on square matrices can be solved as fast as solving the underlying linear
systems, or equivalently, $\ell_2$ regression problems.

This is based on maintaining an approximate inverse to the linear
systems we need to solve in each step of the iterative method as
pioneered by Vaidya~\cite{Vaidya89}.  However, our modification
interacts directly with the potential functions we use to control
iteration counts in the inner loop of our iterative method.  A similar
approach for maintaining an approximate inverse was used by Cohen et
al.~\cite{CohenLS18:arxiv} to give an $\tilde{O}(m^\omega)$ algorithm for Linear Programming,
after our initial submission to SODA, but before our paper was
publicly available.  Both works build on ideas developed by Cohen,
see~\cite{Lee17:talk}.

\paragraph{Faster $p$-Norm Flows.}

When solving $p$-norm flow problems,
our algorithm can made faster by using  
Laplacian solvers for graph
problems~\cite{Vaidya90,Teng10:survey} to solve the linear equations
that arise during our iterations.
This gives algorithms for finding $p$-norm flows on undirected graphs
to accuracy $\epsilon$ with running time
$\Otil_{p} \left(m^{1 + \frac{\abs{p - 2}}{2p + \abs{p - 2}}}
\log(1 / \epsilon)\right)$ for $p \in (1, \infty)$
via direct invocations of fast Laplacian
solvers~\cite{SpielmanTengSolver:journal}.

Our results thus give the first evidence that wide classes of graph optimization
problems can be solved in time $m^{4/3}$ or faster.
While such a bound (via. fast Laplacian solvers) is by now well-known
in the approximate setting~\cite{ChristianoKMST10}, the $m^{10/7}$
iteration bounds due to Madry~\cite{Madry13,Madry16} represent the only
results to date in this direction for high accuracy answers on sparse
graphs.

\paragraph{Generalizations and Extensions.}
  While we focus on Problem~\eqref{eq:primal},
  under mild assumptions about polynomially bounded objectives,
  we can solve the following more general problem:
\begin{align*}
\min_{\xx} \qquad & \norm{\CC \xx - \dd}_{p}\\
& \AA \xx = \bb
\end{align*}
The reduction is discussed in Section~\ref{sec:Variants}.
The combination of
an affine constraint on $\xx$ with an affine transformation in the
$p$-norm objective means we can solve most variants of $p$-norm
optimization problems.

Similar ideas can be used to solve $p$-norm Lipschitz learning
problems~\cite{KyngRSS15} on graphs quickly.

\subsection{Comparison to Previous Works}

\paragraph{Numerical Methods and Preconditioning.}

Iterative methods and preconditioning are the most fundamental
tools in numerical algorithms~\cite{Axelsson94:book,Saad03:book}.
As studies of such methods often focus on linear problems, many existing
analyses of iterative methods are restricted to linear systems.
Generalizing such methods, as well as numerical methods,
to broader settings is a major topic of
study~\cite{Henson03,NocedalW06:book,Kelley99:book,KnollK04}.

The study of more efficient algorithms for combinatorial flow problems
has benefited enormously from ideas from linear and non-linear preconditioning.
Recent advances in approximate maximum flow and transshipment
algorithms~\cite{LeeRS13,Sherman13,KelnerLOS14,
  RackeST14,GhaffariKKLP15,Peng16,Sherman17a,Sherman17b} build upon
such ideas.  However, these methods rely on the preconditioner being a
linear operator, and give $\poly(1 / \epsilon)$ dependence. 

\paragraph{Optimization Algorithms.}
Our techniques for solving the residual problems are directly
motivated by approximating maximum flow using electrical
flows~\cite{ChristianoKMST10}.  While this algorithm has been extended
to multicommodity flows and regression
problems~\cite{KelnerMP12,ChinMMP13}, all these results have
$\poly(1 / \epsilon)$ dependencies.

Several recent results for obtaining $\log(1 / \epsilon)$ dependencies are
all motivated by convex optimization techniques.  In particular, the
state of the art running times are by interior point methods.  These
include directly modifying the interior point method
(IPM)~\cite{LeeS14,LeeS15,KyngRSS15},
combining techniques from the electrical flow algorithms with IPM update
steps~\cite{Madry13,KyngRSS15,KyngRS15,Madry16,CohenMSV17}, and
increasing the `confidence interval', and in turn step lengths, of the
IPM update steps~\cite{ALOW17,CohenMTV17,BubeckCLL18}.
Our result based on creating intermediate problems has the most
in common with the last of these.
However, our method differs in that our guarantees
for this intermediate problem holds over the entire space.

\paragraph{Inverse Maintenance}

Our final running time of
$\Otil_{p}(m^{\max\{\omega, \nfrac{7}{3}\}} )$ for
$\ell_{p}$-regression incorporates inverse maintenance.  This is a
method introduced by Vaidya~\cite{Vaidya89} for speeding up
optimization algorithms for solving minimum cost and multicommodity
flows.  It takes advantage of the controllable rate at which
optimization algorithms modify the solution variables to reuse
inverses of matrices constructed from such variables.

Previous studies of inverse maintenance~\cite{Vaidya89,LeeS14,LeeS15}
have been geared towards the interior point method.
Here the norm per update step can be controlled, and we believe this also
holds for their applications in faster cutting plane methods~\cite{LeeSW15}.
While such methods also give gains in the case of our algorithm,
for the final bound of about $m^{\omega}$, we instead bound the
progress of the steps against a global potential function motivated
by the electrical flow max-flow algorithm~\cite{ChristianoKMST10}.

\paragraph{Speedups for Matrices with Uneven Dimensions}

Our algorithm on the other hand does not take into account
sparsity of the input matrix, or possibly uneven dimensions
(e.g. $m \approx n^{2})$).
In these settings, the methods based on accelerated stochastic gradient
descent from~\cite{BubeckCLL18} obtain better performances. 
On the other hand, we believe our methods have the potential of extending
to such settings by combining the intermediate problems with
$\ell_p$ row sampling~\cite{CohenP15}.
However, analysis of such row sampling routines for our residual problems
containing mixed $\ell_2$ and $\ell_p$ norm functions is outside the
scope of this paper.

\section{Technical Overview}

\paragraph{Iterative Refinement for $\ell_{p}$-norms.}  To design
their algorithm for $\ell_{p}$-norm regression,
Bubeck~\etal~\cite{BubeckCLL18} construct a function $\gammap(t,x),$
which is $C_1$, \footnote{A function is said to be $C_1$ if
  it's continuous, differentiable, and has a continuous derivative},
 quadratic in the range $\abs{x} \le t,$ and behaves as
$|x|^{p}$ asymptotically (see Def.~\ref{def:gamma}). Our key lemma
states one can locally approximate $\pnorm{x + \Delta}^{p}$ as a linear
function plus a $\gammap(|x|, \Delta)$ ``error'' term \footnote{It is
  useful to compare the $\gammap$ term to the second-order Hessian
  term in Taylor expansion}
(Lemma~\ref{lem:LocalApprox}):
\[|x + \Delta|^{p} = \abs{x}^{p} + \Delta \frac{d}{dx}\abs{x}^{p} +
  O_{p}(1)\gammap(\abs{x},\Delta).\] Surprisingly, this approximation
only has an $O_p(1)$ ``condition number''. Proceeding just as for
gradient descent, or Newton's method, means that if at each step we
solve the following local approximation problem to a factor $\kappa,$
\[\max_{\AA \Delta = 0} \gg^{\top}\Delta -  \gammap(\xx,
  \Delta),\] where $\gg$ is the gradient of our loss function, we can
converge to an $\eps$-approximate solution in roughly
$\kappa \log \nfrac{1}{\eps}$ iterations
(Theorem~\ref{thm:IterativeRefinement}).

\paragraph{Improved Algorithms for $\ell_{p}$-regression for $p \ge 2$.}
The key advantage afforded by our iterative algorithm is that we now
only need to design a algorithm for the residual problem that achieves
a crude approximation factor (we achieve $O_p(1)$). As a first step,
by a binary search and some rescaling, we show (Lemma~\ref{lem:Dual})
that it suffices to achieve a constant factor approximation to
$O_p(\log \nfrac{m}{\eps})$ problems of the following form,
\[
  \min_{\AA\xx = 0, \gg^{\top}\xx=c} \gammap(\tt, \xx).
\]
The technical heart of our proof is to give an algorithm
(\FasterGammaApprox, Algorithm~\ref{alg:FasterOracleAlgorithm})
inspired by the multiplicative weight update (MWU) method
(see \cite{MultiplicativeSurvey12} for a survey), combined with the
width-reduction inspired by the faster flow algorithm of
Christiano~\etal.~\cite{ChristianoKMST10}, and its matrix version by
Chin~\etal.\cite{ChinMMP13}.  At each iteration, we solve a weighted
$\ell_{2}$ minimization problem to find the next update step. If this
update step has small $\ell_{p}$ norm, we add this to our current
solution, and update the weights. Otherwise, we identify a set of
coordinates that have small current weights, and yet are contributing
most of the $\ell_{p}$ norm, and we penalize them by increasing their
weights (and do not add our update step to the current
solution). Setting the parameters carefully, we show that after
$\Otil_{p}(m^{\frac{p-2}{3p-2}})$ iterations, the average of the
update steps achieves an $O_{p}(1)$-approximation to our modified
residual problem (Theorem~\ref{lem:FasterAlgo}). Combining this with
our iterative refinement algorithm, we obtain our algorithms for
$\ell_{p}$-norm regression that require only
$\Otil_{p}(m^{\frac{p-2}{3p-2}})$ iterations (or linear system
solves).

\paragraph{Maintaining Inverses for Improved Algorithm.}

Our inverse maintenance procedure utilizes the same combination of
low-rank updates and matrix multiplications as in previous
results~\cite{Vaidya89,LeeS14,LeeS15}.
However, the rate of convergence of our algorithm, and in turn the rate
at which we adjust the weights from the MWU procedure, are governed
by growths in the $\ell_2$ minimization problem.
This leads to the difficulty of uneven progress across the iterations.

We solve this issue by a simple yet subtle scheme motivated
by lazy updates in data structures~\cite{GuptaP13,AbrahamDKKP16}.
We bucket changes to the values of entries based on their
magnitudes, and update entries that received too many updates
of a certain magnitude separately.
This differs with previous methods that update weights exceeding
approximation thresholds as they happen,
and enables a closer interaction with the overall potential function
based convergence analysis.

\section{Preliminaries}
\label{sec:Prelims}

We use the following family of functions, $\gammap(t,x)$ defined in \cite{BubeckCLL18}. 
\begin{definition}[$\gamma_p$ function]
  \label{def:gamma}
For $t \geq 0$ and $p \ge 1$, define
\begin{equation*}
  \gammap(t, x) =
  \begin{cases}
    \frac{p}{2}t^{p-2}x^2 & \text{if } |x| \leq t,\\
    |x|^p + (\frac{p}{2} - 1)t^p & \text{otherwise }.
  \end{cases}
\end{equation*}
\end{definition}
\noindent These functions can be thought of a quadratic approximation of $\abs{x}^p$ in a small range around zero. The following properties follow directly from the definition.
 \begin{tight_enumerate}
\item $\gammap(0,x) = \abs{x}^p$.
\item $\gammap(t,x)$ is quadratic in the range $-t \leq x \leq t$.
\item $\gammap$ is $C^{1}$ in both $x, t.$
\end{tight_enumerate}
\noindent We show several other important properties of $\gammap$ in
the following lemmas. Their proofs are straightforward and deferred to
Appendix \ref{sec:ProofsSec3}

\begin{restatable}{lemma}{Gamma}
\label{lem:gamma}
Function $\gammap$ is as defined above.
  \begin{tight_enumerate}
  \item For any $p \ge 2,$ $t \ge 0,$ and $x\in \mathbb{R},$ we have
    $\gammap(t,x) \ge |x|^{p},$ and
    $\gammap(t, x) \ge \frac{p}{2} t^{p-2}x^{2}.$
  \item It is homogeneous under rescaling of \emph{both} $t$ and
    $x,$ i.e., for any $t, \lambda \ge 0, p \geq 1,$ and any $x$ we have
    $ \gammap\left(\lambda t, \lambda x\right) = \lambda^{p}
    \gammap\left(t, x\right).$
  \item For any $t > 0, p\geq1$ and any $x,$ we have
    $\gammap^{\prime}(t, x) = p\max\{t, \abs{x}\}^{p-2}x.$
  \end{tight_enumerate}
\end{restatable}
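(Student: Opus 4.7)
The lemma decomposes into three parts, each of which I would verify by case analysis on whether $|x| \leq t$ or $|x| > t$, matching the piecewise definition of $\gammap$.

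For Part (1), the inequality $\gammap(t,x) \geq |x|^p$ is essentially built into the definition. In the regime $|x| \leq t$, I would show $\tfrac{p}{2} t^{p-2} x^2 \geq |x|^p$ by factoring out $|x|^2$ and using $t^{p-2} \geq |x|^{p-2}$ (valid since $p\geq 2$) together with $p/2 \geq 1$. In the regime $|x|>t$, the inequality is immediate since $(\tfrac{p}{2}-1) t^p \geq 0$ when $p \geq 2$. The second inequality $\gammap(t,x) \geq \tfrac{p}{2} t^{p-2} x^2$ is an equality for $|x| \leq t$. The $|x|>t$ case is the one nontrivial calculation: setting $s = |x|/t \geq 1$ (the $t=0$ case being trivial) and normalizing by $t^p$, it suffices to prove $f(s) := s^p - \tfrac{p}{2} s^2 + \tfrac{p}{2} - 1 \geq 0$ for $s \geq 1$. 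I would check $f(1)=0$ and $f'(s) = p s(s^{p-2} - 1) \geq 0$ for $s \geq 1$ and $p \geq 2$, so $f$ is nondecreasing from zero. This is the step I expect to be the main (mild) obstacle, since the other assertions are nearly mechanical.

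Part (2) is a direct substitution. If $|x| \leq t$ then $|\lambda x| \leq \lambda t$ and both $t^{p-2}$ and $x^2$ scale to pull out a common $\lambda^{p-2+2}=\lambda^p$ factor. If $|x| > t$ then $|\lambda x| > \lambda t$, and $|\lambda x|^p + (\tfrac{p}{2}-1)(\lambda t)^p = \lambda^p\bigl(|x|^p + (\tfrac{p}{2}-1) t^p\bigr)$. The case $\lambda = 0$ is immediate from $\gammap(0,0)=0$.

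Part (3) is obtained by differentiating each piece: for $|x|<t$, $\tfrac{d}{dx}\bigl(\tfrac{p}{2} t^{p-2} x^2\bigr) = p t^{p-2} x = p\max\{t,|x|\}^{p-2} x$; and for $|x|>t$, using $|x|^p = (x^2)^{p/2}$ we have $\tfrac{d}{dx}|x|^p = p|x|^{p-2}x = p\max\{t,|x|\}^{p-2} x$. At the transition point $|x|=t$, both formulas coincide at $p t^{p-2} x$, confirming the $C^1$ property noted after the definition and showing the single closed form $p\max\{t,|x|\}^{p-2}x$ holds everywhere for $t>0$. I would combine these three pieces into the restated lemma, with the entire proof deferred to the appendix as indicated.
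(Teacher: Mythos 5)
Your proposal is correct and matches the paper's approach on all three parts. The only minor variation is in the second inequality of Part (1): the paper compares the derivatives of $\gammap(t,\cdot)$ and $s(x) = \tfrac{p}{2}t^{p-2}x^2$ directly (noting they agree at $|x|=t$ and that $\gammap$ moves away from $s$ on both sides), whereas you normalize by $t^p$ and reduce to the one-variable inequality $f(s) = s^p - \tfrac{p}{2}s^2 + \tfrac{p}{2}-1 \ge 0$ for $s\ge 1$; your version sidesteps the sign bookkeeping around $x<0$ a bit more cleanly, but the underlying monotonicity argument is the same.
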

\noindent The next lemma shows a bound on the value of $\gammap$ when $x$ is scaled up or down. %

\begin{restatable}{lemma}{Rescaling}
\label{lem:Rescaling}
  For any $p>1, \Delta \in \mathbb{R}$ and $\lambda \geq 0$, we have,
  \[ {\min\{2,p\}} \leq x\frac{\gammap'(t,x)}{\gammap(t,x)} \leq \max \{2,p\}.\]
  This implies, 
  \begin{equation*}
 \min\{\lambda^2,\lambda^p\} \gammap(t,\Delta)  \leq  \gammap(t,\lambda \Delta) \leq \max\{\lambda^2,\lambda^p\} \gammap(t,\Delta).
  \end{equation*}
  \end{restatable}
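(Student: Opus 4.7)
The plan is a two-step argument: first establish the pointwise bound on the logarithmic derivative $x\gammap'(t,x)/\gammap(t,x)$, then integrate it against $d\log\lambda$ to recover the rescaling inequality for $\gammap(t,\lambda\Delta)$. The computations are short in both steps; the main subtlety is keeping track of the sign of $p/2-1$, which changes the direction of the inequalities between the regimes $p\ge 2$ and $1<p<2$, and the sign of $\log\lambda$ in the integration step.

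For the first inequality I would use the closed form $\gammap'(t,x)=p\max\{t,|x|\}^{p-2}x$ from Lemma~\ref{lem:gamma} and split on whether $|x|\le t$ or $|x|>t$. In the quadratic regime $|x|\le t$, both $x\gammap'(t,x)$ and $\gammap(t,x)$ are constant multiples of $t^{p-2}x^{2}$, and the ratio simplifies to exactly $2\in[\min\{2,p\},\max\{2,p\}]$. In the remaining regime $|x|>t$, the ratio equals $p|x|^{p}/\bigl(|x|^{p}+(p/2-1)t^{p}\bigr)$. For $p\ge 2$ the coefficient $p/2-1$ is nonnegative, and combining it with $t<|x|$ yields $|x|^{p}\le |x|^{p}+(p/2-1)t^{p}\le (p/2)|x|^{p}$, giving a ratio in $[2,p]$. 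For $1<p<2$ the coefficient is negative and the same two bounds flip to give a ratio in $[p,2]$. The only special value is $x=0$, where the claim follows by continuity of $\gammap$ and $\gammap'$.

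For the second inequality I may assume $\Delta\ne 0$ and $\lambda>0$, the boundary cases being immediate from $\gammap(t,0)=0$. For fixed $t$ and $\Delta$, set $h(\lambda)=\log\gammap(t,\lambda\Delta)$. Since $\gammap$ is $C^{1}$ in $x$, $h$ is differentiable on $(0,\infty)$ (including across the gluing point $\lambda=t/|\Delta|$), and by the chain rule
\[
  \frac{dh}{d\log\lambda}
  \;=\;\lambda h'(\lambda)
  \;=\;\frac{(\lambda\Delta)\,\gammap'(t,\lambda\Delta)}{\gammap(t,\lambda\Delta)}
  \;\in\;[\min\{2,p\},\max\{2,p\}]
\]
by the first bound applied at $x=\lambda\Delta$. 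Integrating against $d\log s$ between $s=1$ and $s=\lambda$ and exponentiating gives $\lambda^{a}\gammap(t,\Delta)\le \gammap(t,\lambda\Delta)\le \lambda^{b}\gammap(t,\Delta)$ with $a=\min\{2,p\}$ and $b=\max\{2,p\}$ when $\lambda\ge 1$, and with the roles of $a$ and $b$ swapped when $\lambda<1$ (since $\log\lambda<0$ reverses the direction). In either case the two extreme bounds are the pair $\{\lambda^{2},\lambda^{p}\}$, which folds into $\min\{\lambda^{2},\lambda^{p}\}$ and $\max\{\lambda^{2},\lambda^{p}\}$ as claimed. The main, and very minor, obstacle is this sign bookkeeping for $\lambda<1$, which is absorbed cleanly by the $\min/\max$ packaging in the statement.
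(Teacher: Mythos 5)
Your proposal is correct and follows essentially the same route as the paper: compute the closed form of the logarithmic derivative $x\gammap'(t,x)/\gammap(t,x)$ by splitting on $|x|\le t$ vs.\ $|x|>t$, bound it in $[\min\{2,p\},\max\{2,p\}]$, and then integrate against $d\log\lambda$ (splitting on $\lambda\ge 1$ vs.\ $\lambda<1$) to get the rescaling bound. Your version is a bit more careful about edge cases ($x=0$, $\Delta=0$, $\lambda=0$) and packages the integration more cleanly, but the substance is the same.
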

  \noindent
  The following lemma allows us to bound the second order change in
  $\gammap(\xx)$ as $\xx$ changes to $\xx + \Delta.$
\begin{restatable}{lemma}{FirstOrderAdditive}
  \label{lem:FirstOrderAdditive}
  For any $p\geq 2, t \ge 0$ and any $x, \Delta,$
  we have
\[
 \gammap(t,x+\Delta) \le \gammap(t,x) + \abs{\gammap'(t,
      x)\Delta} \\ + p^{2} 2^{p-3} \max\{t, \abs{x}, \abs{\Delta}\}^{p-2}
    \Delta^{2}.
\]
\end{restatable}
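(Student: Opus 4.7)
The plan is to apply Taylor's theorem with second-order integral remainder to the one-variable function $x \mapsto \gammap(t,x)$, bounding the remainder by controlling $\gammap''$ uniformly along the segment from $x$ to $x+\Delta$. First, I would compute $\gammap''(t,\cdot)$ by piecewise differentiation: using Lemma~\ref{lem:gamma} (part 3) we have $\gammap'(t,x) = p\max\{t,|x|\}^{p-2}x$, which gives $\gammap''(t,x) = p t^{p-2}$ on $|x|<t$ and $\gammap''(t,x) = p(p-1)|x|^{p-2}$ on $|x|>t$. Since $p\ge 2$ implies $p-1\ge 1$, in both regimes we get the uniform bound
\[
\gammap''(t,x) \;\le\; p(p-1)\max\{t,|x|\}^{p-2}.
\]

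Next, I would write the remainder in integral form:
\[
\gammap(t,x+\Delta) - \gammap(t,x) - \gammap'(t,x)\Delta \;=\; \int_0^1 \bigl[\gammap'(t,x+s\Delta) - \gammap'(t,x)\bigr]\,\Delta\,ds,
\]
and, for each $s$, bound the bracketed difference by $s|\Delta|\cdot \sup_{r\in[0,s]} \gammap''(t, x+r\Delta)$. For any $r\in[0,1]$ we have $|x+r\Delta|\le |x|+|\Delta|\le 2\max\{|x|,|\Delta|\}$, so
\[
\max\{t,|x+r\Delta|\}^{p-2} \;\le\; 2^{p-2}\max\{t,|x|,|\Delta|\}^{p-2}.
\]
Combining this with the pointwise bound on $\gammap''$ and integrating $\int_0^1 s\,ds = 1/2$ yields a remainder of at most $\tfrac{1}{2}\cdot p(p-1)\cdot 2^{p-2}\max\{t,|x|,|\Delta|\}^{p-2}\Delta^2 = p(p-1)\,2^{p-3}\max\{t,|x|,|\Delta|\}^{p-2}\Delta^2$, which is bounded by $p^2\,2^{p-3}\max\{t,|x|,|\Delta|\}^{p-2}\Delta^2$ since $p\ge 2$. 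Finally, since $y\le |y|$ for every real $y$, replacing $\gammap'(t,x)\Delta$ by $|\gammap'(t,x)\Delta|$ on the right-hand side only weakens the inequality, giving the claimed bound.

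The only subtle point, and arguably the main obstacle, is that $\gammap$ is merely $C^1$ in $x$: the second derivative $\gammap''(t,\cdot)$ jumps at $|x|=t$ (from $p\,t^{p-2}$ to $p(p-1)\,t^{p-2}$), so the classical $C^2$ form of Taylor's theorem does not apply directly. I would handle this by observing that $\gammap'(t,\cdot) = p\max\{t,|x|\}^{p-2}x$ is locally Lipschitz in $x$, hence absolutely continuous, so the integral forms of the fundamental theorem of calculus and of Taylor's remainder remain valid and the argument above goes through with $\gammap''$ interpreted in the almost-everywhere sense. No other analytic subtlety is required, and the constants fall out cleanly from the power-of-two bound on $\max\{t,|x+r\Delta|\}$.
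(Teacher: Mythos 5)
Your proof is correct and takes essentially the same route as the paper: both proofs bound the increment $\gammap'(t,x+z)-\gammap'(t,x)$ over the segment and integrate, obtaining the same constant $p^2 2^{p-3}$. The only cosmetic difference is that you compute $\gammap''$ piecewise and invoke the absolutely-continuous form of Taylor's theorem (correctly flagging the $C^1$-but-not-$C^2$ kink at $|x|=t$), while the paper bounds the first-derivative difference directly from the closed form $\gammap'(t,x)=p\max\{t,|x|\}^{p-2}x$ using the mean value theorem on the smooth pieces, which sidesteps the regularity issue.
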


\paragraph{Notation.}
For a vector $\xx,$ let $\abs{\xx}$ denote the vector with its
$i^{\text{th}}$ coordinate as $\abs{\xx_i}.$ For any two vectors $\tt$
and $\xx$, $\gammap(\tt,\xx)$ denotes the sum
$\sum_i \gammap(\tt_i,\xx_i)$.

\section{Main Iterative Algorithm}
\label{sec:MainAlgo}

In this section we analyze procedure {\sc p-Norm}, i.e., Algorithm \ref{alg:MetaAlgo}. Our main result for this section is,
\begin{theorem}[$\ell_{p}$-norm Iterative Refinement]
  \label{thm:IterativeRefinement}
  For any $p  \in (1,\infty),$ and $\kappa \ge 1.$ Given an initial feasible
  solution $\xx^{(0)}$ (Definition \ref{start}) to our optimization problem
  (Equation~\eqref{eq:primal}), Algorithm \ref{alg:MetaAlgo} finds an $\epsilon$-approximate
  solution to \eqref{eq:primal} in
  $O_p \left( \kappa \log\left(\frac{m}{\eps}\right)\right)$ calls
  to a $\kappa$-approximate solver for the residual problem (Equation \eqref{eq:residual}).
\end{theorem}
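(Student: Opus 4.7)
The plan is to build on the two-sided local expansion in Lemma~\ref{lem:LocalApprox},
\[
\|\xx + \Delta\|_p^p \;-\; \|\xx\|_p^p \;=\; \gg^\top \Delta \;+\; \Theta_p(1)\,\gammap(|\xx|,\Delta),
\]
where $\gg = \nabla \|\xx\|_p^p$ and $C_L \le C_U$ are positive constants (depending on $p$) realizing the lower and upper sides. I would then take the residual problem at iterate $\xx^{(k)}$ to be the minimization of the upper envelope
\[
\min_{\AA\Delta=\mathbf{0}}\;\; \gg^\top \Delta \;+\; C_U\,\gammap(|\xx^{(k)}|,\Delta).
\]
Because this objective is, at every feasible $\Delta$, an upper bound on $\|\xx^{(k)}+\Delta\|_p^p - \|\xx^{(k)}\|_p^p$, any $\tilde\Delta$ whose residual value is negative yields an actual decrease in $\|\cdot\|_p^p$ of at least the same magnitude. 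The update $\xx^{(k+1)} := \xx^{(k)} + \tilde\Delta$ then shrinks the suboptimality gap $r^{(k)} := \|\xx^{(k)}\|_p^p - \|\xx^*\|_p^p$ by that amount.

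The crux of the proof is to lower-bound the magnitude of the residual optimum by $\Omega_p(r^{(k)})$. For this I would test the scaled feasible direction $\lambda \Delta^{\star}$ with $\Delta^{\star} := \xx^* - \xx^{(k)}$ (which satisfies $\AA\Delta^{\star}=\mathbf{0}$) and $\lambda \in (0,1]$ to be chosen. Convexity of $\|\cdot\|_p^p$ gives $-\gg^\top \Delta^{\star} \ge r^{(k)}$, while the lower side of Lemma~\ref{lem:LocalApprox} applied at $\Delta^{\star}$ controls the curvature cost, $C_L\,\gammap(|\xx^{(k)}|,\Delta^{\star}) \le -\gg^\top \Delta^{\star} - r^{(k)} \le -\gg^\top \Delta^{\star}$. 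Using Lemma~\ref{lem:Rescaling} to pull $\lambda$ out of $\gammap$, the residual value at $\lambda\Delta^{\star}$ is at most
\[
-\lambda\bigl(-\gg^\top\Delta^{\star}\bigr) \;+\; C_U \max\{\lambda^2,\lambda^p\}\,\gammap(|\xx^{(k)}|,\Delta^{\star}).
\]
Balancing the two terms over $\lambda \in (0,1]$ (or taking $\lambda = 1$ in the regime where the linear term already dominates), and substituting the curvature bound above, yields a minimum of $-\Omega_p(-\gg^\top \Delta^{\star}) \le -\Omega_p(r^{(k)})$, which lower-bounds the magnitude of the residual optimum by $\Omega_p(r^{(k)})$.

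Combining the two steps, a $\kappa$-approximate solver returns $\tilde\Delta$ whose residual value has magnitude at least $\Omega_p(r^{(k)}/\kappa)$, so $r^{(k+1)} \le r^{(k)}\bigl(1 - \Omega_p(1/\kappa)\bigr)$, which is geometric decay. After $T = O_p\bigl(\kappa \log(r^{(0)}/(\eps\|\xx^*\|_p^p))\bigr)$ iterations, the gap drops below $\eps\|\xx^*\|_p^p$. The starting-point hypothesis of Definition~\ref{start} provides $r^{(0)} \le \poly(m)\,\|\xx^*\|_p^p$, collapsing the iteration count to $O_p(\kappa \log(m/\eps))$, matching the theorem.

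The main obstacle will be the balancing step in the second paragraph, where the correct exponent on $\lambda$ inside the $\gammap$ bound from Lemma~\ref{lem:Rescaling} splits according to whether $p \ge 2$ or $1 < p < 2$: one branch uses $\lambda^2$ and the other $\lambda^p$, and both must be carried through to give $\Omega_p(r^{(k)})$. A secondary subtlety is the corner case in which the unconstrained optimizer $\lambda^{\star}$ escapes $(0,1]$, where one simply sets $\lambda = 1$ and appeals to dominance of the linear term. Finally, one must verify that the hidden constants track cleanly as $O_p(1)$ rather than degenerating near $p=1$ or $p=\infty$, so that the final iteration count is honestly $O_p(\kappa \log(m/\eps))$.
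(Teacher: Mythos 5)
Your proposal is essentially correct and arrives at the right iteration count, but it takes a mirror-image route to the paper's and, as written, does not quite prove the theorem \emph{as stated}. The paper's residual problem (Definition~\ref{def:ResidualProblem}, Equation~\eqref{eq:residual}) uses the \emph{lower} constant $\frac{p-1}{p2^p}$ from Lemma~\ref{lem:LocalApprox} and is a \emph{maximization}; because that objective only lower-bounds the achievable decrease, the paper must rescale the step in the algorithm itself ($\xx^{(t+1)} = \xx^{(t)} - \lambda\Delta$ with $\lambda^{\min\{1,p-1\}} \le \frac{p-1}{p4^p}$, encapsulated in Lemma~\ref{lem:LambdaBound}), and the progress bound is immediate: $\residual(\Delta^\star) \ge \residual(\xx - \xx^\star) \ge \|\xx\|_p^p - \opt$. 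You instead build the residual out of the \emph{upper} constant $2^p$ and \emph{minimize}; since that objective genuinely upper-bounds $\|\xx^{(k)}+\Delta\|_p^p - \|\xx^{(k)}\|_p^p$, you take the raw step $\tilde\Delta$ with no algorithmic rescaling, but you must then prove the residual optimum is $\le -\Omega_p(r^{(k)})$, which you do by plugging $\lambda(\xx^*-\xx^{(k)})$ into the objective, using Lemma~\ref{lem:Rescaling} to pull out $\lambda^{\min\{2,p\}}$, and the lower half of Lemma~\ref{lem:LocalApprox} to bound $\gammap(|\xx^{(k)}|,\Delta^\star)$ against $-\gg^\top\Delta^\star$, then balancing — the $\lambda$ you optimize over lives only in the analysis. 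Both routes use the same ingredients (Lemmas~\ref{lem:LocalApprox}, \ref{lem:Rescaling}, \ref{lem:initialpoint}) and both exploit the $O_p(1)$ gap between $C_L$ and $C_U$, so the constants track cleanly; your balancing optimum is always interior to $(0,1)$ since $C_U/C_L > 1$, so the corner case you flag does not actually arise. The one substantive caveat: Theorem~\ref{thm:IterativeRefinement} is pinned to Algorithm~\ref{alg:MetaAlgo} (which rescales by $\lambda$) and to Equation~\eqref{eq:residual} (which uses $C_L$); your proof establishes an equivalent theorem for a $C_U$-weighted residual and an unrescaled update, not a proof of the paper's exact statement. The paper's choice of $C_L$ and explicit $\lambda$-rescaling is not arbitrary: it feeds cleanly into the later binary search (Lemma~\ref{lem:BoundResidual}, Algorithm~\ref{alg:MainAlgo}), where having the decrease sandwiched between $\lambda\residual(\Delta)$ and $\residual(\lambda\Delta)$ is used directly.
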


\begin{algorithm}
\caption{Meta-Iterative Algorithm}
\label{alg:MetaAlgo}
 \begin{algorithmic}[1]
 \Procedure{p-Norm}{$\AA, \bb, \eps$}
 \State $\xx^{(0)} \leftarrow  \min_{\AA\xx=\bb} \norm{\xx}^2_2.$
 \State $T \leftarrow O_p(\kappa \log \left(\frac{m}{\eps}\right))$
 \State $\lambda \leftarrow \left(\frac{p-1}{p4^p}\right)^{\frac{1}{\min (1,p-1) }}$
 \For{$t = 0$ to $T$} 
\State{$\Delta \leftarrow$ {\sc $\kappa$-Approx}$(\AA,\bb,\xx^{(t)}),\lambda,\xx^{(0)}$}\quad \Comment{$\kappa$-approximate solution to \eqref{eq:residual}}
 \State $\xx^{(t+1)} \leftarrow \xx^{(t)} - \lambda \Delta$
 \If{$\smallnorm{\xx^{(t+1)}}_p^p \geq \smallnorm{\xx^{(t)}}_p^p$}
\State \Return $\xx^{(t)}$
\EndIf
\EndFor
\State \Return $\xx^{(T)}$
 \EndProcedure 
 \end{algorithmic}
\end{algorithm}

\noindent The theorem says that it is sufficient to solve an instance
of the residual problem \eqref{eq:residual} crudely, and only a
logarithmic number of times. Before we prove the theorem, we define
the terms used in the statement and prove some results that would be
needed for the proof. We begin by defining an $\eps$-approximate
solution to our main optimization problem.
\begin{definition}[$\eps$-approximate solution] \label{def:epsApprox}
We say our solution $\xx$ is an $\eps$-approximate solution to \eqref{eq:primal} if $\AA\xx =\bb$ and 
\[
\smallnorm{\xx}_p^p \leq (1 + \eps) \smallnorm{\xx^{\star}}_p^p,
\]
where $\xx^{\star}$ is the $\opt$ of \eqref{eq:primal}.
\end{definition}
\noindent We next define what we use as our residual problem and what we mean by a $\kappa$-approximate solution.

\begin{definition}[Residual Problem]\label{def:ResidualProblem}
For any given $\xx$ and $p>1$, let
\[\residual(\Delta) \defeq \langle \gg, \Delta \rangle - 
 \frac{p-1}{p2^p} \gammap(|\xx|,\Delta),\] where $\gg$ is the gradient,
$\gg = p\abs{\xx}^{p-2} \xx$.  We call the following problem to
be the {\em residual problem} of \eqref{eq:primal} at $\xx$.
 \begin{equation}
   \label{eq:residual}
\max_{\AA \Delta = 0} \residual(\Delta).
\end{equation}
\end{definition}

\begin{definition}[$\kappa$-approximate solution]
\label{ApproxSolver}
Let $\kappa \geq 1$. A $\kappa$-approximate solution for the residual problem is
$\tilde{\Delta}$ such that $\AA \tilde{\Delta} = 0$ and,
$
\residual(\tilde{\Delta}) \geq \frac{1}{\kappa}
\residual(\Delta^{\star}).
$
Here $\Delta^{\star} = \max_{\AA \Delta = 0} \residual(\Delta)$.
\end{definition}

\noindent In order to see why we choose this problem as our residual problem we show that the objective of the residual problem bounds the change in $p$-norm of a vector $\xx$ when perturbed by $\Delta$ (Lemma \ref{lem:LambdaBound}).
\begin{restatable}{lemma}{BoundInitial}
    \label{lem:LocalApprox}
    Let $p \in (1,\infty)$. Then for any $\xx$ and any $\Delta$,
\[
\abs{x}^p + g \Delta + \frac{p-1}{p2^p} \gammap(\abs{x},\Delta) \leq \abs{x + \Delta}^p \\ \leq  \abs{x}^p + g \Delta + 2^p\gammap(\abs{x},\Delta),
\]
where $g = p\abs{x}^{p-2}x$ is the derivative of the function
$\abs{x}^p$.
\end{restatable}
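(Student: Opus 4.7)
The plan is to reduce the bivariate inequality to a single-variable one via homogeneity, and then bound the remainder using Taylor's theorem in integral form.

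First, I would handle $x=0$ separately: then $g=0$ and $\gammap(0,\Delta) = |\Delta|^p$, so both inequalities collapse to $\frac{p-1}{p\,2^p}|\Delta|^p \le |\Delta|^p \le 2^p|\Delta|^p$, which holds for $p > 1$. For $x \ne 0$, the evenness of $|y|^p$ and of $\gammap$ in its second argument lets me assume $x > 0$. Setting $\lambda = 1/x$ in the homogeneity identity of Lemma~\ref{lem:gamma} and writing $u = \Delta/x$, the claim becomes the univariate statement
\[
\frac{p-1}{p\,2^p}\,\gammap(1, u) \;\le\; H(u) \;\le\; 2^p\,\gammap(1, u), \qquad H(u) := |1+u|^p - 1 - pu,
\]
to be verified for all $u \in \mathbb{R}$.

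Since $H(0) = H'(0) = 0$ and $H''(s) = p(p-1)|1+s|^{p-2}$ is locally integrable for $p > 1$, Taylor's theorem with integral remainder gives
\[
H(u) \;=\; p(p-1)\int_0^u (u - s)\,|1+s|^{p-2}\,ds,
\]
which is nonnegative (using the usual orientation convention when $u<0$). I would then bound the integrand above and below on the integration range, with the case split matching the two regimes of $\gammap(1,\cdot)$: the quadratic regime $|u|\le 1$ with $\gammap(1,u) = \tfrac{p}{2}u^2$, and the power regime $|u| > 1$ with $\gammap(1,u) = |u|^p + (\tfrac{p}{2}-1)$. For the upper bound, the crude estimate $|1+s| \le 1 + |u|$ throughout yields $H(u) \le \tfrac{p(p-1)}{2}(1+|u|)^{p-2} u^2$, which is straightforward to compare against $2^p\,\gammap(1,u)$ in each regime. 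For the lower bound, I would restrict integration to a subinterval on which $|1+s|$ is bounded away from zero by an $O_p(1)$ constant (say $|s| \le \min(|u|, \tfrac12)$ in the quadratic regime), yielding a matching quadratic (resp.\ $|u|^p$) lower bound with prefactor at least $\tfrac{p-1}{p\,2^p}$.

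The main obstacle is the lower bound when $u$ is close to $-1$, where $|1+s|^{p-2}$ either vanishes (if $p \ge 2$) or diverges (if $1 < p < 2$). The point is that $H$ itself stays strictly positive for $u \ne 0$: one can either restrict the integration to $s$ at distance at least $|u|/2$ from $-1$, which still covers a constant fraction of the range with a $p$-dependent lower bound on $|1+s|^{p-2}$, or split the integral at $s = -1$ and estimate the two pieces separately. The worst-case $p$-dependent constants are then absorbed into the generous $\tfrac{1}{p\,2^p}$ prefactor. A clean organization further splits on the sign of $1+u$: for $u \ge -1$ the integrand has constant sign, and for $u < -1$ one handles $s \in [-1, 0]$ and $s \in [u, -1]$ separately.
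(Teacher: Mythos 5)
Your reduction to the univariate inequality via homogeneity is sound and matches the paper's first move (they also set $\Delta = \alpha x$). The Taylor-with-integral-remainder route is a genuinely different strategy from the paper, which instead proves the scalar inequalities by introducing auxiliary functions $h(\alpha)$, $s(\alpha)$ and arguing from the sign of their derivatives (Lemma~\ref{smallalpha} and Lemma~\ref{beta}). Conceptually the integral remainder is cleaner, but the bounds you propose for it are not tight enough, and this is a real gap, not a constant to be "absorbed."

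Concretely, the upper-bound chain breaks. In the quadratic regime $|u| \le 1$ you claim $H(u) \le \tfrac{p(p-1)}{2}(1+|u|)^{p-2}u^2$, and then compare this to $2^p\gammap(1,u) = p\,2^{p-1}u^2$. This requires $(p-1)(1+|u|)^{p-2} \le 2^p$, which at $u = 1$ becomes $p-1 \le 4$; it fails for every $p > 5$. For example, $p = 6$, $u = 1$: your intermediate bound gives $240$ while the target is $192$ (the actual value $H(1) = 57$ is fine — it is the crude pointwise estimate $|1+s| \le 1+|u|$ that overshoots, because the weight $(u-s)$ concentrates the integral near $s = 0$ where $|1+s|$ is small). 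A similar failure occurs in the $|u| > 1$ regime. Separately, for $1 < p < 2$ the exponent $p-2$ is negative, so the pointwise bound $|1+s| \le 1+|u|$ yields a \emph{lower} bound on $|1+s|^{p-2}$, not an upper one; your plan flags the $p<2$ singularity near $s=-1$ only as a lower-bound concern, but it invalidates the proposed upper-bound estimate too. To salvage the integral approach you would need to keep the exact value of $\int_0^u (u-s)(1+s)^{p-2}\,ds = \tfrac{(1+u)^p - 1 - pu}{p(p-1)}$ (which just reproduces $H$) or use an estimate that exploits the weight $(u-s)$, rather than a uniform sup bound on the integrand. At that point you essentially have to prove the pointwise inequality $(1+u)^p \le 1 + pu + p\,2^{p-1}u^2$ for $|u| \le 1$ directly, which is what the paper's Lemma~\ref{smallalpha} does via a monotonicity argument. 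Your lower-bound plan (restrict the range to keep $|1+s|$ bounded away from $0$ by an $O_p(1)$ factor and absorb the loss into $\tfrac{p-1}{p2^p}$) is plausible, though the details for $|u| > 1$ and $u < -1$ need to be filled in.
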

The proof can be found in Appendix \ref{sec:ProofsSec4}.  

\begin{lemma} \label{lem:LambdaBound} Let $p  \in (1,\infty)$ and $\lambda$ be such that
  $\lambda^{\min\{1,p-1\}} \le \frac{p-1}{p4^p}$. Then for any $\Delta$ we
  have,
\[
  \norm{\xx}_{p}^{p} - \residual(\lambda \Delta) \leq \norm{\xx -\lambda \Delta}_{p}^{p}
  \leq \norm{\xx}_{p}^{p} - \lambda \residual(\Delta).
\]
\end{lemma}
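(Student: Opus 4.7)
The plan is to derive both inequalities from the coordinate-wise sandwich given by Lemma~\ref{lem:LocalApprox}, applied with perturbation $-\lambda \Delta_i$ in place of $\Delta$, and then to control the rescaling from $\gammap(|\xx|,\lambda \Delta)$ back to $\gammap(|\xx|,\Delta)$ using Lemma~\ref{lem:Rescaling}. Concretely, applying Lemma~\ref{lem:LocalApprox} coordinate-wise with the substitution $\Delta_i \mapsto -\lambda \Delta_i$, noting that $\gammap(t,\cdot)$ is even in its second argument, and summing over $i$ yields
\[
  \norm{\xx}_p^p - \lambda \langle \gg, \Delta\rangle + \tfrac{p-1}{p 2^p}\gammap(|\xx|,\lambda \Delta) \;\le\; \norm{\xx - \lambda \Delta}_p^p \;\le\; \norm{\xx}_p^p - \lambda \langle \gg, \Delta\rangle + 2^p \gammap(|\xx|,\lambda \Delta).
\]

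For the lower bound in the lemma, observe that $\residual(\lambda\Delta) = \lambda \langle \gg, \Delta\rangle - \tfrac{p-1}{p 2^p}\gammap(|\xx|,\lambda\Delta)$ by definition, so the left inequality above is precisely $\norm{\xx}_p^p - \residual(\lambda \Delta) \le \norm{\xx - \lambda \Delta}_p^p$. This part requires no extra work beyond the definition and Lemma~\ref{lem:LocalApprox}.

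For the upper bound, I would rewrite the target inequality $\norm{\xx - \lambda \Delta}_p^p \le \norm{\xx}_p^p - \lambda \residual(\Delta) = \norm{\xx}_p^p - \lambda \langle \gg,\Delta\rangle + \tfrac{\lambda(p-1)}{p 2^p}\gammap(|\xx|,\Delta)$ and then reduce via the right inequality above to showing
\[
  2^p \, \gammap(|\xx|,\lambda\Delta) \;\le\; \tfrac{\lambda(p-1)}{p 2^p}\, \gammap(|\xx|,\Delta).
\]
Here Lemma~\ref{lem:Rescaling} gives $\gammap(|\xx|,\lambda\Delta) \le \max\{\lambda^2,\lambda^p\}\gammap(|\xx|,\Delta)$ (the hypothesis forces $\lambda \le 1$), so it suffices to verify $\max\{\lambda^2,\lambda^p\} \le \lambda \cdot \tfrac{p-1}{p 4^p}$, i.e., $\max\{\lambda, \lambda^{p-1}\} \le \tfrac{p-1}{p 4^p}$. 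This is exactly the condition $\lambda^{\min\{1,p-1\}} \le \tfrac{p-1}{p 4^p}$: when $p \ge 2$ the binding exponent is $1$, while when $1 < p < 2$ it is $p-1$.

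The only subtlety, and hence the main thing to get right, is the case split driven by the sign of $p-2$, which controls whether $\lambda^2$ or $\lambda^p$ dominates in Lemma~\ref{lem:Rescaling}. This is precisely why the hypothesis on $\lambda$ uses the exponent $\min\{1,p-1\}$, and once the two cases are handled cleanly the inequality drops out. No nonlinearity beyond Lemmas~\ref{lem:LocalApprox} and~\ref{lem:Rescaling} is needed, and there is no obstacle I expect to be delicate beyond this bookkeeping.
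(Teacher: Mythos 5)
Your proof is correct and follows essentially the same route as the paper: both sides come from the coordinate-wise sandwich of Lemma~\ref{lem:LocalApprox}, the lower bound is immediate from the definition of $\residual$ applied at $\lambda\Delta$, and the upper bound is reduced to $2^p\gammap(|\xx|,\lambda\Delta)\le\tfrac{\lambda(p-1)}{p2^p}\gammap(|\xx|,\Delta)$, which is discharged via the rescaling bound $\gammap(t,\lambda x)\le\max\{\lambda^2,\lambda^p\}\gammap(t,x)$ from Lemma~\ref{lem:Rescaling} together with the hypothesis on $\lambda$ (noting $\lambda\le 1$). The only cosmetic difference is that the paper applies Lemma~\ref{lem:LocalApprox} first with $\Delta$ and substitutes $\lambda\Delta$ afterwards, whereas you substitute $-\lambda\Delta$ directly; the content is identical.
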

\begin{proof}
Applying lemma \ref{lem:LocalApprox} to all the coordinates, we obtain,
\[\label{precondition}
  \pnorm{\xx}^p - \langle g, \Delta \rangle + \frac{p-1}{p2^p}  \gammap(|\xx|,\Delta) \leq \pnorm{\xx - \Delta}^p\\  \leq  \pnorm{\xx}^p - \langle g, \Delta \rangle + 2^p \gammap(|\xx|,\Delta).
\]
Using definition \ref{def:ResidualProblem}, equation \eqref{precondition} directly implies,
$\norm{\xx - \Delta}_{p}^{p} \ge \norm{\xx}_{p}^{p} - \residual(\Delta)$ for
all $\Delta.$ Now to prove the other side, note that for any $\lambda \in [0,1],$ and any $\Delta,$ we have from Lemma 
\ref{lem:LocalApprox} and Lemma \ref{lem:gamma}
\[
   \norm{\xx - \lambda \Delta}_{p}^{p}
     \le \norm{\xx}_{p}^{p}
      - \pair{g, \lambda \Delta} + 2^p
      \gammap(|\xx|,\lambda \Delta) 
  \\
       \le 
      \norm{\xx}_{p}^{p}
      - \lambda \left( \langle g, \Delta \rangle -
      \lambda^{\min\{1,p-1\}} 2^p
      \gammap(|\xx|,\Delta) \right). 
\]
Picking $\lambda$ such that $\lambda^{\min\{1,p-1\}} \le \frac{p-1}{p4^p},$
we obtain that for any $\Delta,$
\[
  \label{delta3}
    \norm{\xx - \lambda \Delta}_{p}^{p}
    \le
    \norm{\xx}_{p}^{p}
    - \lambda \left( \langle g, \Delta \rangle - \frac{p-1}{p2^p} 
    \gammap(|\xx|,\Delta) \right) \\
    =
    \norm{\xx}_{p}^{p} -  \lambda \residual(\Delta),
\]
thus concluding the proof of the lemma.
\end{proof}

 \noindent For any iterative algorithm we need a starting feasible solution. We could potentially start with any feasible solution but we define the following starting solution which we claim is a {\it good} starting point. Lemma \ref{lem:initialpoint} shows us that our chosen starting point is only polynomially away from the optimum solution, and is thus a {\it good} choice. The proof of the lemma can be found in Appendix \ref{sec:ProofsSec4}.
\begin{definition}[Initial Solution]
\label{start}
We define $\xx^{(0)}$ to be our initial feasible solution to be
$ \xx^{(0)} = \min_{\AA\xx=\bb} \norm{\xx}^2_2.$
\end{definition} 
\begin{restatable}{lemma}{InitialPoint}
\label{lem:initialpoint}
For $\xx^{(0)}$ as defined in Definition \ref{start}, $\smallnorm{\xx^{(0)}}_p^p \leq m^{\nfrac{(p-2)}{2}} \opt$.
\end{restatable}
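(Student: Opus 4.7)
The proof is essentially a routine comparison of the $\ell_2$ and $\ell_p$ norms on $\rea^m$, leveraging that $\xx^{(0)}$ is the minimizer of $\norm{\cdot}_2^2$ (equivalently $\norm{\cdot}_2$) among all feasible points, while $\xx^{\star}$ is merely feasible. The plan is to sandwich $\norm{\xx^{(0)}}_p$ between appropriate $\ell_2$ norms in three steps and then raise to the $p$-th power.

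\textbf{Step 1 (Pass from $\ell_p$ to $\ell_2$).} Since $p \ge 2$, the $\ell_p$ norms are monotonically non-increasing in $p$ on a fixed vector, so $\norm{\xx^{(0)}}_p \le \norm{\xx^{(0)}}_2$. This follows from a one-line application of the power mean inequality (or from Jensen's inequality applied to $|\cdot|^{p/2}$ convex for $p\ge 2$).

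\textbf{Step 2 (Use optimality of $\xx^{(0)}$ for the $\ell_2$ problem).} Because $\xx^{\star}$ satisfies $\AA\xx^{\star} = \bb$, it is a feasible point for $\min_{\AA\xx=\bb}\norm{\xx}_2^2$, while $\xx^{(0)}$ is its minimizer. Hence $\norm{\xx^{(0)}}_2 \le \norm{\xx^{\star}}_2$.

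\textbf{Step 3 (Pass from $\ell_2$ back to $\ell_p$ at a cost of $m^{1/2 - 1/p}$).} By Cauchy--Schwarz (or H\"older with exponents $p/2$ and $p/(p-2)$) applied to $\sum_i (\xx^\star_i)^2$, we get $\norm{\xx^{\star}}_2 \le m^{\frac{1}{2} - \frac{1}{p}} \norm{\xx^{\star}}_p$.

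\textbf{Step 4 (Combine and raise to the $p$-th power).} Chaining the three inequalities,
\[
  \norm{\xx^{(0)}}_p \;\le\; \norm{\xx^{(0)}}_2 \;\le\; \norm{\xx^{\star}}_2 \;\le\; m^{\frac{1}{2} - \frac{1}{p}} \norm{\xx^{\star}}_p,
\]
and taking $p$-th powers yields $\norm{\xx^{(0)}}_p^p \le m^{(p-2)/2}\norm{\xx^{\star}}_p^p = m^{(p-2)/2}\,\opt$, as claimed.

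There is no real obstacle here; the only thing to be careful about is that all three norm inequalities point in the direction we need, which crucially uses $p \ge 2$ so that $m^{(p-2)/2} \ge 1$. (For the complementary range $p \in (1,2)$, the paper reduces to the dual $p/(p-1)$-norm regression problem, where an analogous initial-point lemma then applies with $p' = p/(p-1) \ge 2$.)
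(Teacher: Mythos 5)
Your proof is correct and follows exactly the same route as the paper's: compare $\ell_p$ and $\ell_2$ norms via power-mean/H\"older ($\|\cdot\|_p \le \|\cdot\|_2 \le m^{(p-2)/(2p)}\|\cdot\|_p$ for $p\ge 2$), invoke optimality of $\xx^{(0)}$ for the $\ell_2$ objective to get $\|\xx^{(0)}\|_2 \le \|\xx^\star\|_2$, chain the inequalities, and raise to the $p$-th power. Your closing remark that the argument as written requires $p\ge 2$ is accurate and in fact a caveat the paper's proof leaves implicit.
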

\noindent We are now ready to prove Theorem \ref{thm:IterativeRefinement}.
\begin{proof}
  Let $\widetilde{\Delta}$ denote the solution returned by the $\kappa$-approximate solver. We know that
  $\residual(\widetilde{\Delta}) \geq \frac{1}{k} \cdot
  \residual(\Delta^{\star})$. We have,
\[
  \residual(\widetilde{\Delta})  \geq \frac{1}{\kappa} \cdot \residual(\Delta^{\star}) \geq
  \frac{1}{\kappa} \residual(\xx - \xx^{\star})
  \\ \geq \frac{1}{\kappa}\left( \pnorm{\xx}^p - \pnorm{\xx^{\star}}^p \right)  = \frac{1}{\kappa} \left(\pnorm{\xx}^p - \opt \right).
\]
From Lemma \ref{lem:LambdaBound}, for
$\lambda = \left(\frac{p-1}{p4^p}\right)^{\frac{1}{\min\{1,p-1\}}} =
\Omega_p(1),$ we get,
\begin{equation*}
  \norm{\xx - \lambda \widetilde{\Delta}}_{p}^{p} \le \norm{\xx}_{p}^{p} -
  \lambda \residual(\widetilde{\Delta}) .
\end{equation*}
Combining the above two equations and subtracting $\opt$ from both sides gives us
\begin{align*}
  \pnorm{\xx- \lambda \widetilde{\Delta}}^p -  \opt 
  &\leq -\lambda \residual(\widetilde{\Delta})+\pnorm{\xx}^p - \opt\\
  & \leq - \frac{\lambda}{\kappa}\left(\pnorm{\xx}^p -\opt \right) +\left(\pnorm{\xx}^p -\opt \right)\\
  & \leq \left(1 - \frac{\lambda}{\kappa} \right)\left(\pnorm{\xx}^p
    -\opt \right).
\end{align*}
Using lemma \ref{lem:initialpoint} we get,
\[
\xx^{(t)} - \opt   \leq \left(1 - \frac{\lambda}{\kappa} \right)^t \left(\xx^{(0)} - \opt \right) \\ 
\leq \left(1 - \frac{\lambda}{\kappa} \right)^t \left(m^{\frac{p-2}{2}} - 1 \right) \opt. 
\]
\noindent Setting $t = O_p \left( \kappa \log\left(\frac{m}{\eps}\right)\right)$ gives us an $\eps$-approximate solution.
\end{proof}
\noindent This concludes the discussion on the analysis of Algorithm
\ref{alg:MetaAlgo}. In the following sections we move on to analyzing how to solve the
residual problem approximately.

\section{Solving the Residual Problem}
\label{sec:ResidualAlgo}

In this section, we give an algorithm that solves the
  residual problem to a constant approximation.  Combined with the
  iterative refinement scheme from
  Theorem~\ref{thm:IterativeRefinement}, we obtain the following
  result.
\begin{restatable}{theorem}{MainResult}
\label{thm:MainResult}
For $p \geq 2$, we can find an $\eps$-approximate solution to \eqref{eq:primal} in time 
\[\Otil_p\left((m+n)^{\omega + \frac{p-2}{3p-2}} \log^{2}
    \nfrac{1}{\eps} \right).\]
Here $\omega$ is the matrix multiplication constant.
\end{restatable}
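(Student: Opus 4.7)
The plan is to assemble Theorem~\ref{thm:MainResult} as a direct composition of three ingredients that are developed elsewhere in the paper: the outer iterative refinement loop (Theorem~\ref{thm:IterativeRefinement}), a reduction of the residual problem to a sequence of constrained $\gammap$-minimization problems (the binary-search reduction of Lemma~\ref{lem:Dual} alluded to in the technical overview), and the \FasterGammaApprox{} algorithm that solves each such constrained problem to an $O_p(1)$ factor using $\Otil_p(m^{(p-2)/(3p-2)})$ linear-system solves. The proof therefore reduces to carefully multiplying iteration counts and bounding the cost per iteration.

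First, I would invoke Theorem~\ref{thm:IterativeRefinement} with $\kappa = O_p(1)$: this says that solving \eqref{eq:primal} to $\epsilon$ relative accuracy requires $O_p(\log(m/\epsilon))$ approximate solutions to the residual problem of Definition~\ref{def:ResidualProblem}, each accurate to a constant factor. Next, each residual problem is itself reduced, via Lemma~\ref{lem:Dual} combined with a standard binary search over the target value of the linear term $\langle \gg,\Delta\rangle$, to $O_p(\log(m/\epsilon))$ instances of the constrained problem $\min_{\AA \xx = 0, \gg^\top \xx = c} \gammap(\tt, \xx)$, each of which need only be solved to an $O_p(1)$ factor. Finally, \FasterGammaApprox{} produces such an $O_p(1)$-approximation in $\Otil_p(m^{(p-2)/(3p-2)})$ iterations, where each iteration solves one reweighted $\ell_2$ regression, i.e.\ an $(m+n)\times(m+n)$ linear system.

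The running-time accounting is then immediate: the total number of linear systems solved is the product
\[
  O_p(\log(m/\epsilon)) \cdot O_p(\log(m/\epsilon)) \cdot \Otil_p\!\left(m^{\tfrac{p-2}{3p-2}}\right) = \Otil_p\!\left(m^{\tfrac{p-2}{3p-2}} \log^2(1/\epsilon)\right),
\]
and each such system can be solved in time $O((m+n)^\omega)$ by direct Gaussian elimination/matrix inversion. Multiplying gives the claimed bound of $\Otil_p((m+n)^{\omega + (p-2)/(3p-2)} \log^2(1/\epsilon))$.

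The only nontrivial work in this composition is keeping track of where the $O_p$-hidden constants live: Theorem~\ref{thm:IterativeRefinement} has constants depending on the contraction factor $\lambda = ((p-1)/p4^p)^{1/\min(1,p-1)}$, the binary search for Lemma~\ref{lem:Dual} needs enough precision that a constant-factor solution to the $\gammap$-subproblem lifts back to a constant-factor solution of the residual problem, and \FasterGammaApprox{} has an iteration count whose implicit constants also depend on $p$. The genuinely technical work — which I would defer to the dedicated section for \FasterGammaApprox{} and Lemma~\ref{lem:Dual} — is proving the $\Otil_p(m^{(p-2)/(3p-2)})$ iteration bound itself, through the MWU-plus-width-reduction potential analysis in the style of \cite{ChristianoKMST10,ChinMMP13}. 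For Theorem~\ref{thm:MainResult} itself, however, the argument is purely the black-box composition described above.
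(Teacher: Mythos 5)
Your proposal matches the paper's proof essentially verbatim: the paper composes exactly the same three ingredients in the same order — Theorem~\ref{thm:IterativeRefinement} giving $O_p(\log(m/\eps))$ residual solves with $\kappa=O_p(1)$, Theorem~\ref{thm:Algo2} (the binary-search/rescaling wrapper encapsulating Lemmas~\ref{lem:BoundResidual}, \ref{lem:stepProblemSearch}, \ref{lem:Dual}) giving $O_p(\log(m/\eps))$ scaled $\gammap$-subproblems per residual solve, and Theorem~\ref{lem:FasterAlgo} giving $\Otil_p(m^{(p-2)/(3p-2)})$ oracle calls per subproblem, each costing $O((m+n)^{\omega})$. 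The only cosmetic difference is that you cite Lemma~\ref{lem:Dual} plus an informal binary search rather than the consolidated Theorem~\ref{thm:Algo2}, but the content is identical.
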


\noindent Recall that the residual problem 
\[
\max_{\AA \Delta = 0} \gg^{\top}\Delta - \frac{p-1}{p2^p} \gammap \left(\tt,\Delta\right),
\]
has a linear term followed by the $\gammap$ function. Instead of
directly optimizing this function, we guess an approximate value of
the linear term, and for each such guess, we minimize the $\gammap$
function under this additional constraint. We can scale the problem so
that the optimum is at most $1.$ Finally, we can perturb $\tt$ so that
each $\tt_i$ lies in a polynomially bounded range without adding
significant error. Our final problem looks as follows,
\begin{align}
\label{eq:ScaledProblem}
\begin{aligned}
\min_{\Delta} &\quad  \gammap\left(\tt,\Delta\right)\\
& \AA \Delta = 0,\\
& \gg^{\top}\Delta = c,
\end{aligned}
\end{align}
 with $m^{-1/p} \leq \tt_i \leq 1, \forall i$.\\

\noindent  To sumarise, {\sc $\kappa$-Approx} (Algorithm
$\ref{alg:MainAlgo}$) formalizes this process and shows that we only
need to solve a logarithmic number of instances of the above program,
\eqref{eq:ScaledProblem} and solving each to a $\kappa$-approximation
gives a $\Omega_p\left(\kappa^{1/(\min\{2,p\}-1)}\right)$-approximate
solution to \eqref{eq:residual}. {\sc Gamma-Solver} (Algorithm
\ref{alg:FasterOracleAlgorithm}) solves problem \eqref{eq:ScaledProblem} to an $O_p(1)$ approximation. Therefore, using {\sc Gamma-Solver} as a subroutine for {\sc $\kappa$-Approx} we get an $O_p(1)$ approximate solution to \eqref{eq:residual}. Section \ref{sec:Scaling} gives an analysis for {\sc $\kappa$-Approx}. In Section \ref{sec:Oracle}, we give an oracle that is used in {\sc Gamma-Solver}. We give an analysis of {\sc Gamma-Solver} in Section \ref{sec:MWUAlgo}. Finally in Section \ref{sec:ProofMainTheorem}, we give a proof for Theorem \ref{thm:MainResult}.

\subsection{Equivalent Problems} \label{sec:Scaling}
\noindent In this section we prove the following theorem.
\begin{restatable}{theorem}{Algo2}
\label{thm:Algo2}
Procedure {\sc $\kappa$-Approx} (Algorithm \ref{alg:MainAlgo}) returns an $\Omega_p\left(\kappa^{\nfrac{1}{(\min\{2,p\}-1)}}\right)$-approximate solution to the residual problem given by \eqref{eq:residual}, by solving $O_p\left( \log \left(\frac{m}{\eps}\right)\right)$ instances of program \eqref{eq:ScaledProblem} to a $\kappa$-approximation.
\end{restatable}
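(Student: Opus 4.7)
The plan has three ingredients: (i) a first-order scaling balance at the true optimum of \eqref{eq:residual}, (ii) a reduction of a single fixed-$c$ subproblem to a well-conditioned instance of \eqref{eq:ScaledProblem}, and (iii) a binary search over $c$ combined with rescaling the oracle's output.

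Fix an optimum $\Delta^{\star}$ of \eqref{eq:residual} and write $c^{\star} = \gg^{\top}\Delta^{\star}$, $G^{\star} = \gammap(|\xx|,\Delta^{\star})$, and $r^{\star} = c^{\star} - \tfrac{p-1}{p 2^p} G^{\star}$. Step (i) establishes $r^{\star} = \Theta_p(c^{\star}) = \Theta_p(G^{\star})$. This follows by applying the first-order optimality condition of the single-variable map $\lambda \mapsto \residual(\lambda \Delta^{\star})$ at $\lambda = 1$ together with the pointwise bound $\min\{2,p\} \le x\,\gammap'(t,x)/\gammap(t,x) \le \max\{2,p\}$ from Lemma~\ref{lem:Rescaling}, which forces $c^{\star}$ and $\tfrac{p-1}{p 2^p}G^{\star}$ to agree up to a $p$-dependent constant. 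The edge case $c^{\star}=0$ corresponds to the trivial optimum $\Delta^{\star}=0$ and is handled separately.

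For step (ii), at a fixed $c$ the joint homogeneity $\gammap(\lambda t,\lambda x)=\lambda^p\gammap(t,x)$ (Lemma~\ref{lem:gamma}) reduces the problem to $\max_i\tt_i \le 1$, and a uniform rescaling of $\Delta$ alone makes the $\gammap$-minimum on the affine set equal to $1$. Coordinates with $\tt_i<m^{-1/p}$ are then rounded up to $m^{-1/p}$; case analysis on whether $|\Delta_i|$ lies above or below $m^{-1/p}$ shows the per-coordinate perturbation is $O_p(m^{-1})$, giving a total additive error of $O_p(1)$, which is within a constant of the normalized optimum and hence only degrades the oracle's approximation factor by an $O_p(1)$ multiplicative loss. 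So a $\kappa$-approximation of the perturbed problem yields an $O_p(\kappa)$-approximation of the original \eqref{eq:ScaledProblem}.

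For step (iii), $O_p(\log(m/\eps))$ geometrically spaced guesses of $c$ in a range bounded polynomially in the input size and $1/\eps$ contain some $\hat c \in [c^{\star}/4,\,c^{\star}/2]$. For this $\hat c$, the vector $(\hat c/c^{\star})\Delta^{\star}$ is feasible for \eqref{eq:ScaledProblem} and, by Lemma~\ref{lem:Rescaling} with $\lambda = \hat c/c^{\star}\le 1$, has $\gammap$ value at most $(\hat c/c^{\star})^q G^{\star}$ with $q=\min\{2,p\}$; hence the $\gammap$-optimum at $c=\hat c$ is bounded by this quantity. Calling the oracle from step (ii) returns $\widetilde\Delta$ with $\gg^{\top}\widetilde\Delta = \hat c$ and $\gammap(|\xx|,\widetilde\Delta)\le O_p(\kappa)(\hat c/c^{\star})^q G^{\star}$. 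I then scale $\widetilde\Delta\mapsto\alpha\widetilde\Delta$ for $\alpha\in(0,1]$ and bound $\gammap(|\xx|,\alpha\widetilde\Delta)\le\alpha^q\gammap(|\xx|,\widetilde\Delta)$ via Lemma~\ref{lem:Rescaling}, reducing the residual objective at $\alpha\widetilde\Delta$ to the single-variable expression $\alpha\hat c - O_p(\kappa)\alpha^q(\hat c/c^{\star})^q G^{\star}$; its maximizer yields $\Omega_p(c^{\star}/\kappa^{1/(q-1)}) = \Omega_p(r^{\star}/\kappa^{1/(\min\{2,p\}-1)})$ using step (i). The main obstacle is the truncation bookkeeping in step (ii): the additive error from rounding small $\tt_i$ up to $m^{-1/p}$ must remain $O_p(1)$ uniformly across all guesses of $c$ and must behave well under the later rescaling by $\alpha$ in every regime of $|\Delta_i|$ relative to the truncated and un-truncated thresholds.
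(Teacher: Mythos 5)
Your proposal follows essentially the same route as the paper: step (i) is the first-order-optimality-plus-Lemma~\ref{lem:Rescaling} argument inside Lemma~\ref{lem:stepProblemSearch} showing $\gg^{\top}\Delta^{\star}=\Theta_p(\gamma_p(|\xx|,\Delta^{\star}))=\Theta_p(\residual(\Delta^{\star}))$, step (ii) is the rescaling and $[m^{-1/p},1]$-truncation of $\tt$ with $O_p(m^{-1})$ per-coordinate error carried out in Lemma~\ref{lem:Dual}, and step (iii) combines the $O_p(\log(m/\eps))$-bin search of Lemma~\ref{lem:BoundResidual} with the concave single-variable $\mu$-rescaling that concludes Lemma~\ref{lem:stepProblemSearch}. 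The only cosmetic differences are that you bin directly on $c^{\star}=\gg^{\top}\Delta^{\star}$ rather than on $\residual(\Delta^{\star})$ (equivalent by step (i)) and fold the paper's intermediate decision problem~\eqref{eq:BinarySearchProblems} into a single step, and you leave implicit the final tournament among candidates (evaluating $\smallnorm{\xx-\lambda\Delta^{(i)}}_p^p$ for each guess and returning the best), which the paper spells out.
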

\noindent The following lemmas will lead to the proof of the above theorem. The first lemma gives an upper and lower bound on the objective of \eqref{eq:residual}.
\begin{restatable}{lemma}{BoundResidual}
\label{lem:BoundResidual}
Let $p \in (1,\infty)$ and assume that our current solution $\xx$ is not an $\eps$-approximate solution. Let $\lambda$ be such that $\lambda^{\min\{1,p-1\}} = \frac{p-1}{p4^p}$. For some \[i \in \left[\log \left(\frac{\eps\smallnorm{\xx^{(0)}}_p^p}{m^{\nfrac{|p-2|}{2}}}\right), \log \left(\frac{\smallnorm{\xx^{(0)}}_p^p}{\lambda} \right)\right],\] $\residual(\Delta^{\star}) \in [2^{i-1},2^i)$ where $\Delta^{\star}$ is the solution of \eqref{eq:residual}.
\end{restatable}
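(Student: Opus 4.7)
The plan is to establish the two-sided bound on $\residual(\Delta^\star)$ directly, since $i$ is just the index of the dyadic bucket containing $\residual(\Delta^\star)$. That is, I will prove that
\[
\frac{\eps \, \smallnorm{\xx^{(0)}}_p^p}{m^{|p-2|/2}} \;\leq\; \residual(\Delta^\star) \;\leq\; \frac{\smallnorm{\xx^{(0)}}_p^p}{\lambda},
\]
from which choosing $i$ so that $2^{i-1} \leq \residual(\Delta^\star) < 2^i$ places $i$ in the stated interval (up to a harmless $\pm 1$).

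For the \emph{upper bound}, I would invoke Lemma~\ref{lem:LambdaBound} with $\Delta = \Delta^\star$ and the prescribed $\lambda$ (which satisfies the hypothesis $\lambda^{\min\{1,p-1\}} \leq \frac{p-1}{p 4^p}$). That lemma gives $\smallnorm{\xx - \lambda \Delta^\star}_p^p \leq \smallnorm{\xx}_p^p - \lambda \residual(\Delta^\star)$, and since the left side is non-negative we get $\lambda\, \residual(\Delta^\star) \leq \smallnorm{\xx}_p^p$. The key observation is then that Algorithm~\ref{alg:MetaAlgo} explicitly ensures the $p$-norm is monotonically non-increasing across iterations (via the early-return check), so $\smallnorm{\xx}_p^p \leq \smallnorm{\xx^{(0)}}_p^p$, giving the upper bound.

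For the \emph{lower bound}, I would use $\xx - \xx^\star$ as a feasibility certificate: since $\AA \xx = \AA \xx^\star = \bb$, the vector $\xx - \xx^\star$ lies in the kernel of $\AA$ and is hence feasible for the residual problem. Applying Lemma~\ref{lem:LocalApprox} coordinatewise with $\Delta = \xx - \xx^\star$ yields $\smallnorm{\xx^\star}_p^p \geq \smallnorm{\xx}_p^p - \residual(\xx - \xx^\star)$, so $\residual(\Delta^\star) \geq \residual(\xx - \xx^\star) \geq \smallnorm{\xx}_p^p - \opt$. Since $\xx$ is assumed \emph{not} to be $\eps$-approximate (Definition~\ref{def:epsApprox}), this gap is strictly greater than $\eps \cdot \opt$. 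Finally, Lemma~\ref{lem:initialpoint} (in the form $\opt \geq \smallnorm{\xx^{(0)}}_p^p / m^{|p-2|/2}$, handling both $p\geq 2$ and $p<2$) converts this into the desired lower bound in terms of $\smallnorm{\xx^{(0)}}_p^p$.

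I do not anticipate any real obstacle here: both bounds follow from one-line applications of lemmas already proved. The only subtlety worth being careful about is the use of the monotonicity of $\smallnorm{\xx^{(t)}}_p^p$ — this is not a property of the residual problem itself but of the outer algorithm, and must be cited from Algorithm~\ref{alg:MetaAlgo}. A minor bookkeeping point is matching the statement's $|p-2|$ in the exponent to Lemma~\ref{lem:initialpoint}'s $(p-2)$; for $p\geq 2$ these agree, while for $p<2$ one gets $\opt \geq \smallnorm{\xx^{(0)}}_p^p$ directly and the $m^{|p-2|/2}$ factor is only a (harmless) weakening.
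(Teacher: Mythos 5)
Your proof is essentially the same as the paper's: upper bound from Lemma~\ref{lem:LambdaBound} plus monotonicity of $\smallnorm{\xx^{(t)}}_p^p$, lower bound from $\residual(\Delta^\star)\geq\smallnorm{\xx}_p^p-\opt>\eps\,\opt$ combined with the initial-point bound, then pick the dyadic bucket. One caveat in your bookkeeping aside on $p<2$: it is \emph{not} the case that $\opt\geq\smallnorm{\xx^{(0)}}_p^p$ for $p<2$ --- that would say the $\ell_2$-minimizer has smaller $p$-norm than the $p$-norm minimizer, which is impossible. What is really going on is that the literal statement of Lemma~\ref{lem:initialpoint} (with exponent $(p-2)/2$) is only proved, and only true, for $p\geq 2$ (its proof uses $\pnorm{\cdot}\leq\norm{\cdot}_2$, which reverses for $p<2$). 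For $p<2$ the correct comparison is $\norm{\cdot}_2\leq\pnorm{\cdot}\leq m^{(2-p)/(2p)}\norm{\cdot}_2$, yielding $\smallnorm{\xx^{(0)}}_p^p\leq m^{(2-p)/2}\opt$ --- i.e.\ exactly the $m^{|p-2|/2}$ form, which is what you actually invoke and what the paper's own proof derives in its separate $p\leq 2$ branch. So the ``harmless weakening'' framing is backwards, but since you cite the $|p-2|$ form as the thing you use, the argument stands once that step is justified properly.
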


\noindent We defer the proof to Appendix \ref{sec:ProofsSec5}. Lemma \ref{lem:BoundResidual} suggests that we can divide the range of the objective of our residual problem, $\residual$ into a logarithmic
number of bins and solve a {\it decision problem} that asks if the optimum belongs to the bin. The lemma guarantees that at least one of the decision problems will be feasible. The
following lemma defines the required {\it decision problems} and shows that solving these to a constant approximation is sufficient to get a constant approximate solution to \eqref{eq:residual}.
\begin{restatable}{lemma}{StepProblemSearch}
\label{lem:stepProblemSearch}
Let $p \in (1,\infty)$. Suppose $\residual(\Delta^{\star}) \in [2^{i-1},2^i)$ for some $i$ where $\Delta^{\star}$ is the solution of \eqref{eq:residual}. The following program is feasible:
\begin{align}
    \label{eq:BinarySearchProblems}
    \begin{aligned}
      \gamma_{p} \left( \tt , \Delta \right) &
      \le  \frac{p}{p-1}2^{i+p}, \\
      \gg^T \Delta & = 2^{i-1}, \\
      \AA \Delta & = 0.
    \end{aligned}
  \end{align}
 If $\Delta(i)$ is a $\beta$-approximate solution to 
  program \eqref{eq:BinarySearchProblems} for this choice of $i,$ then, we can pick $\mu \le 1$ such that the vector $\mu \Delta(i)$ is
  an $\Omega_{p}\left(\beta^{\frac{1}{\min\{p,2\} - 1}} \right)$-approximate solution to~\eqref{eq:residual}.
\end{restatable}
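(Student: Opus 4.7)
The plan has two main thrusts: \textbf{(i)} showing that the program \eqref{eq:BinarySearchProblems} is feasible by producing an explicit witness built from $\Delta^{\star}$, and \textbf{(ii)} showing that rescaling any $\beta$-approximate solution to this program yields a good approximate solution to the residual problem.

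\textbf{Part (i): Feasibility.} I will exhibit a rescaling $\mu_{0}\Delta^{\star}$ with $\mu_{0} \in (0,1]$ that satisfies all three constraints. First, $\AA(\mu_{0}\Delta^{\star}) = 0$ is automatic. Second, since $\residual(\Delta^{\star}) \ge 2^{i-1}$ and $\gammap(\tt, \Delta^{\star}) \ge 0$, the definition of $\residual$ forces $\gg^{\top}\Delta^{\star} \ge 2^{i-1}$, so setting $\mu_{0} = 2^{i-1}/\gg^{\top}\Delta^{\star} \in (0,1]$ gives $\gg^{\top}(\mu_{0}\Delta^{\star}) = 2^{i-1}$. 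The third constraint, $\gammap(\tt, \mu_{0}\Delta^{\star}) \le \frac{p}{p-1}2^{i+p}$, is the crux. I will combine first-order optimality of $\Delta^{\star}$ over the subspace $\{\AA\Delta = 0\}$ with Lemma~\ref{lem:Rescaling}: optimality yields
\[
\gg^{\top}\Delta^{\star} = \tfrac{p-1}{p2^{p}} \langle \gammap'(\tt, \Delta^{\star}), \Delta^{\star}\rangle,
\]
while Lemma~\ref{lem:Rescaling}, applied coordinate-wise, gives $\langle \gammap'(\tt, \Delta^{\star}), \Delta^{\star}\rangle \ge \min\{2,p\}\gammap(\tt, \Delta^{\star})$. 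Plugging these into $\residual(\Delta^{\star}) = \gg^{\top}\Delta^{\star} - \tfrac{p-1}{p2^{p}}\gammap(\tt, \Delta^{\star})$ and using $\residual(\Delta^{\star}) < 2^{i}$ bounds $\gammap(\tt, \Delta^{\star}) \le O_{p}(2^{i+p})$; for $p \ge 2$ the bound is exactly $\frac{p}{p-1}2^{i+p}$, while for $1 < p < 2$ any extra constant can be absorbed into the $\Omega_{p}$ of the conclusion. Finally, since $\mu_{0} \le 1$, Lemma~\ref{lem:Rescaling} gives $\gammap(\tt, \mu_{0}\Delta^{\star}) \le \mu_{0}^{\min\{2,p\}}\gammap(\tt, \Delta^{\star}) \le \gammap(\tt, \Delta^{\star})$, closing the feasibility argument.

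\textbf{Part (ii): Approximation.} Write $q = \min\{2,p\}$. Given $\Delta(i)$ with $\gammap(\tt, \Delta(i)) \le \beta \cdot \frac{p}{p-1}2^{i+p}$ along with the other two constraints, I will rescale by a scalar $\mu \in (0,1]$. For such $\mu$, Lemma~\ref{lem:Rescaling} gives $\gammap(\tt, \mu\Delta(i)) \le \mu^{q}\gammap(\tt, \Delta(i))$, so
\[
\residual(\mu\Delta(i)) \ge \mu \cdot 2^{i-1} - \mu^{q} \cdot \beta \cdot 2^{i} \;=\; 2^{i-1}\bigl(\mu - 2\beta\mu^{q}\bigr).
\]
Maximising the right-hand side over $\mu$ yields the choice $\mu^{\ast} = (2q\beta)^{-1/(q-1)}$, which lies in $(0,1]$ whenever $2q\beta \ge 1$ (this holds since $q > 1$ and $\beta \ge 1$). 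Substituting back shows $\residual(\mu^{\ast}\Delta(i)) \ge \Omega_{p}(\beta^{-1/(q-1)}) \cdot 2^{i-1}$, and combining with $\residual(\Delta^{\star}) < 2^{i}$ produces the advertised $\Omega_{p}\bigl(\beta^{1/(q-1)}\bigr)$-approximation ratio.

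\textbf{Main obstacle.} The delicate step is the feasibility bound on $\gammap(\tt, \Delta^{\star})$: it hinges on combining the first-order optimality identity with the coordinate-wise ratio bound $x\gammap'(t,x)/\gammap(t,x) \ge \min\{2,p\}$ from Lemma~\ref{lem:Rescaling}. The gap between $\min\{2,p\}$ and $1$ appearing here is precisely what produces the $1/(\min\{2,p\}-1)$ exponent in the final approximation guarantee, so any looseness in this step directly degrades the overall bound; this is why the $1<p<2$ regime inherits the worse exponent $1/(p-1)$.
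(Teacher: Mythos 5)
Your overall strategy is the same as the paper's: you produce a feasibility witness by rescaling $\Delta^{\star}$ to $\mu_0\Delta^{\star}$ with $\mu_0 = 2^{i-1}/\gg^\top\Delta^\star$, combine first-order optimality with the ratio bound from Lemma~\ref{lem:Rescaling} to control $\gammap(\tt,\Delta^\star)$, and then in Part (ii) rescale a $\beta$-approximate solution by an optimizing $\mu$. Part (ii) is correct and matches the paper (the paper's explicit choices of $\mu$ are within a constant of your optimizing $\mu^\ast$, so the asymptotics agree).

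There is, however, a genuine gap in Part (i) for $1 < p < 2$. Your derivation gives $\gammap(\tt,\Delta^\star) \le \frac{p2^p}{(p-1)(\min\{2,p\}-1)}\,2^i$, which for $p<2$ is $\frac{p2^p}{(p-1)^2}\,2^i$, strictly larger than the $\frac{p}{p-1}2^{i+p}=\frac{p2^p}{p-1}2^i$ required by program~\eqref{eq:BinarySearchProblems}. You then drop $\mu_0^{\min\{2,p\}}\le 1$ and declare feasibility, but that only yields the too-large bound. Saying the extra factor ``can be absorbed into the $\Omega_p$ of the conclusion'' is not a valid repair here, because feasibility is a yes/no statement about a program with a fixed constant; if the witness does not meet the constraint, it is not a witness, and the lemma as stated is not proved. (The constant does matter downstream: the scaling in Lemma~\ref{lem:Dual} and Algorithm~\ref{alg:MainAlgo} is tied to the exact $\frac{p}{p-1}2^{i+p}$.)

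The fix, which the paper takes, is to \emph{not} discard $\mu_0^{\min\{2,p\}}$ but to bound the product $\mu_0^{p}\,\gammap(\tt,\Delta^\star)$ directly. The point is that when $\gammap(\tt,\Delta^\star)$ is large, $\gg^\top\Delta^\star = \residual(\Delta^\star) + \frac{p-1}{p2^p}\gammap(\tt,\Delta^\star)$ is also large (since $\residual(\Delta^\star)\ge 2^{i-1}$), hence $\mu_0 = 2^{i-1}/\gg^\top\Delta^\star$ is small enough that $\mu_0^p$ compensates. Concretely, writing $u = \frac{p-1}{p2^p}\gammap(\tt,\Delta^\star)\ge 0$,
\[
\mu_0^p\,\gammap\left(\tt,\Delta^\star\right)
\;\le\;
\frac{p2^p}{p-1}\cdot\frac{\left(2^{i-1}\right)^p\,u}{\left(2^{i-1}+u\right)^p},
\]
and this function of $u$ is maximized at $u = 2^{i-1}/(p-1)$, where its value is $\frac{p2^p}{p-1}\cdot\frac{2^{i-1}(p-1)^{p-1}}{p^p}\le\frac{p2^p}{p-1}\cdot 2^i$ for all $p>1$. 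This closes the feasibility argument with the exact constant. Everything else in your writeup is sound, and you correctly identify where the $\min\{2,p\}-1$ gap enters the final exponent.
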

\noindent The proof can be found in Appendix \ref{sec:ProofsSec5}. We now scale down the objective of \eqref{eq:BinarySearchProblems} so that it is at most $1$. The next lemma shows what the scaled down problem looks like and how an approximate solution to the scaled down problem gives an approximate solution to \eqref{eq:BinarySearchProblems}. Again the proof of the lemma can be found in Appendix \ref{sec:ProofsSec5}.
\begin{restatable}{lemma}{Dual}
\label{lem:Dual}
Let $p \in (1, \infty)$. Let $i$ be such that \eqref{eq:BinarySearchProblems} is feasible. Let 
\[\textstyle
\hat{\tt}_j = \begin{cases}
m^{-\nfrac{1}{p}}  &\left(\frac{p-1}{p}\right)^{\nfrac{1}{p}}2^{\nfrac{-i}{p}-1}\tt_j \leq m^{\nfrac{-1}{p}},\\
1  &\left(\frac{p-1}{p}\right)^{1/p} 2^{-i/p-1}\tt_j \geq 1,\\
\left(\frac{p-1}{p}\right)^{\nfrac{1}{p}} 2^{\nfrac{-i}{p}-1}\tt_j  &\text{otherwise}.
\end{cases}
\] Note that $m^{-1/p} \leq \hat{\tt}_j \leq 1$. Then program \eqref{eq:ScaledProblem} with $\tt = \hat{\tt}$, and 
\[\cc =\left(\frac{2}{p}\right)^{1/2}\left(\frac{p-1}{p}\right)^{1/p} 2^{i \left(1-\frac{1}{p}\right) - 2},\] has $\opt \leq 1$.  
Let $\Delta^{\star}$ be a $\kappa$-approximate solution to \eqref{eq:ScaledProblem}. Then,
$\Delta = \left(\frac{p}{2}\right)^{1/2} \left(\frac{p}{p-1}\right)^{1/p}2^{1+ \nfrac{i}{p}} \Delta^{\star}$
is a $\Omega_p (\kappa)$- approximate solution to \eqref{eq:BinarySearchProblems}.
\end{restatable}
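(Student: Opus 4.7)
The proof will have two halves corresponding to the two claims in the lemma: first that the rescaled $\gammap$-problem has $\opt \leq 1$, and second that approximate solutions transfer back with only a constant $p$-dependent loss. Both halves rest on the same underlying idea: $\gammap$ is jointly $p$-homogeneous in $(t,x)$ by Lemma~\ref{lem:gamma}(2), so if we scale both $\tt$ and $\Delta$ by a common factor $\mu$, the objective scales by $\mu^p$ and the linear constraint scales by $\mu$. The scalar $\mu = (2/p)^{1/2}\,(\tfrac{p-1}{p})^{1/p}\,2^{-i/p-1}$ is precisely the value that converts the upper bound $\frac{p}{p-1}2^{i+p}$ on $\gammap(\tt,\Delta)$ into $(2/p)^{p/2}\leq 1$, and simultaneously converts the constraint $\gg^\top \Delta = 2^{i-1}$ into $\gg^\top \Delta' = c$ with the stated $c$. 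The inverse $\beta = 1/\mu = (p/2)^{1/2}(\tfrac{p}{p-1})^{1/p}2^{1+i/p}$ is exactly the scaling used in the lemma to go the other direction.

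For Part (i), I would start from any feasible $\Delta$ of \eqref{eq:BinarySearchProblems} (the hypothesis guarantees its existence), set $\Delta' = \mu \Delta$, and verify $\AA\Delta'=0$ and $\gg^\top \Delta' = c$ by direct computation. By joint homogeneity, $\gammap(\mu\tt,\mu\Delta)=\mu^p \gammap(\tt,\Delta) \leq (2/p)^{p/2}\leq 1$. The remaining task is to replace $\mu\tt$ by $\hat{\tt}$, which is the clipped version of $\alpha \tt$ (here $\alpha = \mu(p/2)^{1/2}$). I would split coordinates into three cases: (a) unclipped indices, where $\hat{\tt}_j = \alpha \tt_j \geq \mu \tt_j$, so increasing $t$ only changes $\gammap$ by an $O_p(1)$ factor; (b) indices clipped up to $m^{-1/p}$, where $\hat{\tt}_j < m^{-1/p}$ and $|\Delta'_j|$ is likewise small, so each contributes at most $\frac{p}{2}m^{(2-p)/p}\cdot m^{-2/p} = O_p(1/m)$, summing to $O_p(1)$ across $m$ coordinates; (c) indices clipped down to $1$, where $\gammap$ only decreases since $\gammap(\cdot,x)$ is monotone in $t$ for $p\geq 2$. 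Combining these yields $\gammap(\hat{\tt},\Delta')\leq 1$ (after absorbing the $O_p(1)$ slack into the constants — the stated bound of $1$ follows from the specific choice of $\alpha$'s prefactor).

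For Part (ii), given a $\kappa$-approximate $\Delta^\star$ with $\gammap(\hat{\tt},\Delta^\star)\leq \kappa$, set $\Delta = \beta \Delta^\star$. The constraints $\AA\Delta=0$ and $\gg^\top \Delta = \beta c = 2^{i-1}$ are immediate. For the objective, homogeneity gives $\gammap(\tt,\beta\Delta^\star) = \beta^p\,\gammap(\mu\tt,\Delta^\star)$, and $\beta^p = (p/2)^{p/2}\cdot \frac{p}{p-1}\cdot 2^{p+i}$, so it suffices to bound $\gammap(\mu\tt,\Delta^\star) \leq O_p(\kappa)$. Because $\mu = (2/p)^{1/2}\alpha \leq \alpha$, the vector $\mu\tt$ is pointwise dominated by $\alpha\tt$, and in both the unclipped and up-clipped cases we have $\mu\tt_j \leq \hat{\tt}_j$, so monotonicity of $\gammap$ in $t$ (for $p\geq 2$) yields $\gammap(\mu\tt_j,\Delta^\star_j)\leq \gammap(\hat{\tt}_j,\Delta^\star_j)$; these sum to at most $\gammap(\hat{\tt},\Delta^\star) \leq \kappa$. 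For down-clipped coordinates ($\alpha\tt_j>1$), $\mu\tt_j$ can exceed $\hat{\tt}_j=1$, but only by the bounded ratio $(p/2)^{1/2}\alpha\tt_j$, and the excess $\gammap(\mu\tt_j,\Delta^\star_j) - \gammap(\hat{\tt}_j,\Delta^\star_j)$ can be absorbed into an $O_p(1)$ multiplicative factor via Lemma~\ref{lem:Rescaling}. Packaging everything gives $\gammap(\tt,\Delta) \leq O_p(\kappa)\cdot \tfrac{p}{p-1}2^{i+p}$, which is exactly the $\Omega_p(\kappa)$-approximation guarantee.

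The main obstacle is the bookkeeping around clipping, especially the down-clipped coordinates in Part (ii), where naive homogeneity would overshoot. Handling this requires the observation that the multiplicative gap between $\mu\tt_j$ and $\hat{\tt}_j$ is controlled by the $(2/p)^{1/2}$ factor built into the definition of $c$, together with Lemma~\ref{lem:Rescaling} which gives a clean two-sided bound on $\gammap(t,\lambda x)/\gammap(t,x)$. A secondary subtlety is that monotonicity of $\gammap$ in $t$ cleanly goes one way for $p\geq 2$; for $p\in(1,2)$ (which the lemma covers), one must instead invoke the two-sided bounds in Lemma~\ref{lem:Rescaling} and redo the three-case split with the inequalities flipped, which ultimately is what produces the $(\min\{p,2\}-1)$ exponents that appear elsewhere in the paper's error propagation.
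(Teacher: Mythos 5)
Your high-level strategy is the same as the paper's --- use the joint $p$-homogeneity of $\gammap$ (Lemma~\ref{lem:gamma}) to rescale the feasible solution of~\eqref{eq:BinarySearchProblems} into the normalized form, then account for the clipping of $\tt$. However, there is a concrete gap in your Case~(b), and a structural issue with how you set up the scaling that makes the rest of the bookkeeping harder than it needs to be.

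The gap: for up-clipped coordinates (those with $\alpha\tt_j < m^{-1/p}$, $\hat\tt_j = m^{-1/p}$) you claim ``$|\Delta'_j|$ is likewise small, so each contributes at most $\frac{p}{2}m^{(2-p)/p}\cdot m^{-2/p} = O_p(1/m)$.'' That bound presupposes $|\Delta'_j|\le m^{-1/p}$, but feasibility of~\eqref{eq:BinarySearchProblems} gives no per-coordinate control on $\Delta$: an up-clipped coordinate can have $|\Delta'_j| = \Theta(1)$ and contribute $\Theta(1)$ to $\gammap(\hat\tt,\Delta')$ by itself, so the $O_p(1/m)$-per-coordinate claim is false. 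The paper's proof instead bounds the \emph{increase} caused by the floor-clip, not the total contribution: for $p\ge 2$ one has the pointwise inequality $\gammap(t,x) - |x|^p \le (\tfrac{p}{2}-1)t^p$, together with $|x|^p \le \gammap(t',x)$ for any $t'$ (Lemma~\ref{lem:gamma}), which gives
\[
\gammap\bigl(\hat\tt_j,\Delta'_j\bigr) - \gammap\bigl(\tilde\tt_j,\Delta'_j\bigr)
\;\le\;
\gammap\bigl(\hat\tt_j,\Delta'_j\bigr) - |\Delta'_j|^p
\;\le\;
\left(\tfrac{p}{2}-1\right)m^{-1}
\]
regardless of the size of $\Delta'_j$. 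Summed over $m$ coordinates this adds only $\tfrac{p}{2}-1$, which the subsequent $(2/p)^{1/2}$ rescaling of $\Delta$ is designed to absorb. Your argument needs this observation and does not currently contain it.

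A secondary issue: you scale both $\tt$ and $\Delta$ by a single $\mu = (2/p)^{1/2}\alpha$, but $\hat\tt$ is defined as the clipping of $\alpha\tt$, not of $\mu\tt$. Consequently, in your Cases~(a) and~(c) you must compare $\gammap(\hat\tt_j,\Delta'_j)$ against $\gammap(\mu\tt_j,\Delta'_j)$ across a multiplicative gap of $(p/2)^{1/2}$ in the $t$-argument (on top of the clipping). You wave this away as ``an $O_p(1)$ factor,'' but Lemma~\ref{lem:Rescaling} controls rescalings of $x$, not of $t$; no lemma in the paper gives a two-sided multiplicative bound on $\gammap(\rho t, x)/\gammap(t,x)$, and in the down-clipped case $\mu\tt_j/\hat\tt_j = \mu\tt_j$ is not even bounded. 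The paper sidesteps all of this by scaling $\tt$ and $\Delta$ by the \emph{same} factor $\alpha$ first (so $\tilde\tt = \alpha\tt$ matches the unclipped $\hat\tt$ exactly), then floor-clipping, then rescaling only $\Delta$ by $(2/p)^{1/2}$, and only then ceiling-clipping $\tt$ --- at which stage the objective is already $\le 1$ and the ceiling-clip is argued harmless. You should adopt this two-stage ordering; it eliminates the need to compare $\gammap$ at two different scalings of $t$ altogether.
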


\begin{algorithm}
\caption{Approximate Solver}
\label{alg:MainAlgo}
 \begin{algorithmic}[1]
 \Procedure{$\kappa$-Approx}{$\AA, \bb, \xx,\lambda,\xx^{(0)}$}
\State $\AAhat \leftarrow \begin{bmatrix}
    \AA^{\top},
    \gg
  \end{bmatrix}^{\top}$ \\
 \For{$i \in \left[\log \left(\frac{\eps\smallnorm{\xx^{(0)}}_p^p}{m^{\nfrac{(p-2)}{2}}}\right), \log \left(\frac{\smallnorm{\xx^{(0)}}_p^p}{\lambda} \right)\right]$}
 \State $c(i) \leftarrow \left(\frac{2}{p}\right)^{1/2}\left(\frac{p-1}{p}\right)^{1/p} 2^{i \left(1-\frac{1}{p}\right) - 2}$
 \State For every $e$,

 \Statex{\qquad \qquad $\tt_e \leftarrow \max\{m^{-\frac{1}{p}},(\frac{p-1}{p})^{1/p}2^{-i/p-1}|\xx^{(t)}_e|\}$}
 \State For every $e$, $\tt_e \leftarrow \min\{1,\tt_e\}$
 \State $\cc \leftarrow  \begin{bmatrix}
    {\boldsymbol{{0}}}^{\top}, 
    c(i)
  \end{bmatrix}^{\top}$
 \State{$\Delta^{\star} \leftarrow$ {\sc Gamma-Solver}$(\AAhat,\cc,\tt )$}\quad \Comment{$\kappa$-approximation to \eqref{eq:ScaledProblem}}
 \State $\Delta^{(i)} \leftarrow \left(\frac{p}{2}\right)^{1/2} \left(\frac{p}{p-1}\right)^{1/p}2^{1+ \nfrac{i}{p}} \Delta^{\star}$
 \State $\beta \leftarrow \left( \frac{p}{2}\right)^{p/2} \frac{p2^{p+1}}{p-1} 2^i \kappa$
 \State $\mu^{(i)} \leftarrow \begin{cases}
\left(\frac{1}{2\beta p}\right)^{1/(p-1)} & \text{   if $p \leq 2$}\\
\frac{1}{4\beta} & \text{otherwise.}
\end{cases}$
\State $\Delta^{(i)} \leftarrow \mu^{(i)} \Delta^{(i)}$
 \EndFor
\Return $\lambda \cdot  \argmin_{\Delta^{(i)}} \smallnorm{\xx^{(t)} - \lambda \Delta^{(i)}}_p^p$
\EndProcedure 
 \end{algorithmic}
\end{algorithm}

\noindent We now prove Theorem \ref{thm:Algo2}.
\begin{proof}
Lemma \ref{lem:BoundResidual} suggests that there exists an index
\[
j \in \left[\log \left(\frac{\eps\smallnorm{\xx^{(0)}}_p^p}{m^{\nfrac{(p-2)}{2}}}\right), \log \left(\frac{\smallnorm{\xx^{(0)}}_p^p}{\lambda} \right)\right],
\]
 such that $OPT = \residual(\Delta^{\star}) \in [2^{j-1},2^j)$. Lemma \ref{lem:stepProblemSearch} implies that \eqref{eq:BinarySearchProblems} is feasible for index $j$. Suppose $\Delta^{(j)}$ is a $\kappa$-approximate solution to the scaled down problem \eqref{eq:ScaledProblem} for index $j$. Lemma \ref{lem:Dual} implies that $\tilde{\Delta}^{(j)} = \left(\frac{p}{2}\right)^{1/2} \left(\frac{p}{p-1}\right)^{1/p}2^{1+ \nfrac{i}{p}} \Delta^{(j)}$ is an $\Omega_p (\kappa)$ approximate solution to \eqref{eq:BinarySearchProblems} for index $j$. Lemma \ref{lem:stepProblemSearch} now implies that $\tilde{\Delta}^{(j)} =\mu \Delta^{(j)}$ is a $\Omega_{p}\left(\kappa^{\frac{1}{\min\{p,2\} - 1}} \right)$-approximation to our residual problem \eqref{eq:residual}. Now, the algorithm solves the scaled down problem for every $i$ and returns the $\tilde{\Delta}^{(i)} $ that when added to our current solution gives the minimum $\ell_p$-norm. It either chooses $\tilde{\Delta}^{(j)} $ or some other solution $\tilde{\Delta}^{(i)} $. In case it returns $\tilde{\Delta}^{(i)} $,
\begin{align*}
\|\xx\|^p_p - \|\xx -\lambda \tilde{\Delta}^{(i)} \|_p^p & \geq \|\xx\|^p_p - \|\xx -\lambda \tilde{\Delta}^{(j)} \|_p^p \\
& \geq \lambda \cdot \residual(\tilde{\Delta}^{(j)}), \text{Lemma \ref{lem:LambdaBound}}, \\
& \geq \lambda \cdot \Omega_{p}\left(\kappa^{- \frac{1}{\min\{p,2\} - 1}} \right) \residual(\Delta^{\star})\\
& = \Omega_{p}\left(\kappa^{- \frac{1}{\min\{p,2\} - 1}} \right) \residual(\Delta^{\star}).
\end{align*}
From Lemma \ref{lem:LambdaBound} we know,
\[
\residual(\lambda \tilde{\Delta}^{(i)}) \geq \|\xx\|^p_p - \|\xx -\lambda \tilde{\Delta}^{(i)} \|_p^p. 
\]
We thus have $\residual(\lambda \tilde{\Delta}^{(i)} ) \geq \Omega_p\left(\kappa^{-\frac{1}{\min\{p,2\} - 1}} \right)\opt$, implying $\lambda \cdot \tilde{\Delta}^{(i)}$ is also a $\Omega_p\left(\kappa^{\frac{1}{\min\{p,2\} - 1}} \right)$ approximate solution as required.
\end{proof}

\noindent  It remains to solve problems of the form \eqref{eq:ScaledProblem} up to a $\kappa$-approximation. Recall that these problems look like,
\begin{align*}
\min_{\Delta} &\quad  \gammap\left(\tt,\Delta\right)\\
& \AAhat \Delta = \dd,
\end{align*}
and satisfy %
$OPT \leq 1$, and
  $m^{-1/p} \leq \tt_j \leq 1, \forall j$.

\subsection{Oracle}
\label{sec:Oracle}
Our approach follows the format of the approximate max-flow algorithm
by Christiano et al.~\cite{ChristianoKMST10}.
Specifically, we use a variant of multiplicative weights update to converge
to a solution with small $\gamma_{p}(t, \Delta)$.
The multiplicative weights update scheme repeatedly updates a set of weights
$\ww$ using partial, local solutions computed based on these weights.
The Christiano et al. algorithm can be viewed as picking these weights from
the gradients of the soft-max function on flows.
We will adapt this routine by showing that $\ww$'s chosen from the gradient
of $\gamma_{p}(t, \Delta)$ also suffices for approximately minimizing the
problem stated in~\ref{eq:ScaledProblem}.

The subroutine that this algorithm passes the $\ww$ onto is commonly
referred to as an oracle.
An oracle needs to compute a solution with both small dot-product against
$\ww$, and small \emph{width}, which is defined as the maximum value of
an entry.
In such an oracle, the dot product condition is the hard constraint, in that
the final approximation factor of the solution is directly related to the
value of these dot products.
The width, on the other hand, only affects the overall iteration count/
running time, and can even be manipulated/improved algorithmically.
Therefore we first need to define and show a good upper bound on the objective
of the optimization problem solved within the oracle.

Formally, our oracle subroutine Algorithm \ref{alg:oracle} takes as
input some affine constraints and vector of weights $\ww$.
It first computes a vector of non-negative weights $\rr$,
and then returns a minimizer to the following optimization problem
\begin{align}
\label{eq:oracleprog}
  \Delta = \argmin_{\Delta \in \mathbb{R}^m} \quad
  & \sum_e
  \rr_e \Delta_e^2\\
\nonumber
  \text{s.t. } \quad & \AAhat \Delta = \dd .
\end{align}
Appendix \ref{sec:L2Solver} contains an algorithm that solves such problems efficiently.

\begin{algorithm}
\caption{Oracle}
\label{alg:oracle}
 \begin{algorithmic}[1]
 \Procedure{\textsc{Oracle}}{$\AAhat, \dd, \ww, \tt$}
\State $\rr_e \leftarrow \left( m^{1/p}\tt_e\right)^{p-2} + \ww_e^{p-2}$
\label{algline:resistance}

\State Compute the $\Delta$ using resistances $\rr_e$
that satisfies solves the following optimization problem
\begin{align*}
  \Delta = \argmin_{\Delta'} \quad
  & \sum_e
  \rr_e {\Delta^{'}}^2_e\\
  \text{s.t. } \quad & \AAhat \Delta' = \dd 
\end{align*}
 \State\Return $ \Delta$
 \EndProcedure 
 \end{algorithmic}
\end{algorithm}

Let us now look at some properties of the solution returned by the oracle. Note that the objective of our problem \eqref{eq:ScaledProblem} is at most $1$. This implies that we have $\Delta^{}$ such that
\begin{tight_itemize}
\item $\sum_{e} (\Delta^*_e)^2 \tt_{e}^{p-2} \leq 1$,
\item $\sum_{e} \abs{\Delta^*_e}^p \leq 1$, or $\norm{\Delta^*}_p \leq 1$.
\end{tight_itemize}

 We next look at some relations on the weights and resistances. The following lemma is a simple application of H\"{o}lder's
inequality. Its proof is given in Appendix \ref{sec:ProofsSec5}.
\begin{restatable}{lemma}{BoundOpt}
\label{lem:BoundOpt}
 Let $p\geq 2$. For any set of weights $\ww$ on the edges,
$  \sum_e \ww_e^{p-2} (\Delta^*_e)^2 \leq \pnorm{\ww}^{p-2}.$
\end{restatable}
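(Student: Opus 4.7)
The plan is to recognize this as a direct consequence of Hölder's inequality with the conjugate pair of exponents $\frac{p}{p-2}$ and $\frac{p}{2}$, which are valid exponents precisely because $p \ge 2$ and because $\frac{p-2}{p} + \frac{2}{p} = 1$.

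Concretely, I would write
\[
\sum_e \ww_e^{p-2} (\Delta^*_e)^2 = \sum_e \left(\ww_e^{p-2}\right) \cdot \left((\Delta^*_e)^2\right),
\]
and then apply Hölder with the pair above to obtain
\[
\sum_e \ww_e^{p-2} (\Delta^*_e)^2 \leq \left(\sum_e \ww_e^{(p-2)\cdot \frac{p}{p-2}} \right)^{\frac{p-2}{p}} \left(\sum_e (\Delta^*_e)^{2\cdot \frac{p}{2}}\right)^{\frac{2}{p}} = \norm{\ww}_p^{p-2}\, \norm{\Delta^*}_p^2.
\]
Finally, I would invoke the assumed bound $\sum_e \abs{\Delta^*_e}^p \le 1$ (which is exactly $\norm{\Delta^*}_p \le 1$) to drop the last factor, giving $\norm{\ww}_p^{p-2}$ as desired.

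There is essentially no obstacle here; the only caveat is making sure the exponent $\frac{p}{p-2}$ is well-defined and at least $1$, which is exactly the case $p \ge 2$ in the hypothesis (the boundary case $p = 2$ reduces to $\sum_e (\Delta^*_e)^2 \le 1$, which follows trivially from $\norm{\Delta^*}_2 \le \norm{\Delta^*}_p \le 1$ in dimension-free form, or directly from the first bullet in the setup). Thus the lemma follows in one line of Hölder plus the normalization hypothesis on $\Delta^*$.
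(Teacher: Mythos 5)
Your proof is correct and is essentially identical to the paper's: the paper applies H\"older's inequality with the same conjugate exponents $\tfrac{p}{2}$ and $\tfrac{p}{p-2}$ and then invokes $\sum_e |\Delta^*_e|^p \le 1$ to drop the $\norm{\Delta^*}_p^2$ factor. No gap here.
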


\begin{lemma}
\label{lem:Oracle}
Let $p\geq 2$. For any $\ww$, let $\Delta$ be the electrical flow computed with
respect to resistances
\[\rr_e \defeq \paren{m^{\nfrac{1}{p}} \tt_e}^{p-2} + \ww_e^{p-2}, \]
and demand vector $\dd.$

Then the following hold,
\begin{tight_enumerate}
\item \label{item:oracle:l2}
    $\sum_e
  \Delta_e^2 \leq \sum_e \rr_e
  \Delta_e^2 \leq m^{\frac{p-2}{p}} + \pnorm{\ww}^{p-2},$
\item $\textstyle \sum_e \abs{\Delta_e} \abs{\gamma' (m^{\nfrac{1}{p}}\tt_e, \ww_e)}  \leq  p  \paren{\sum_e \gammap(m^{\nfrac{1}{p}}\tt_e, \ww_e) }^{\frac{p-1}{p}}+ pm^{\frac{p-2}{2p}} \paren{\sum_{e} \gammap(m^{\nfrac{1}{p}}\tt_e, \ww_e)}^{\frac{1}{2}}.$
  \end{tight_enumerate}
\end{lemma}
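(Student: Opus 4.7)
The plan is to exploit the fact that the oracle's $\Delta$ is the energy minimizer for resistances $\rr$, so $\sum_e \rr_e \Delta_e^2 \le \sum_e \rr_e (\Delta'_e)^2$ for any feasible $\Delta'$. A natural choice is $\Delta' = \Delta^\star$, the optimum of \eqref{eq:ScaledProblem} (which is feasible since it also satisfies $\AAhat \Delta^\star = \dd$), together with the two displayed consequences of $\opt \le 1$: $\sum_e (\Delta^\star_e)^2 \tt_e^{p-2} \le 1$ and $\|\Delta^\star\|_p \le 1$. For the second bound I plan to use Cauchy--Schwarz with splitting $|\Delta_e||\gamma'(m^{1/p}\tt_e, \ww_e)| = (|\Delta_e|\sqrt{\rr_e}) \cdot (|\gamma'|/\sqrt{\rr_e})$, bounding the first factor by item 1 and the second by $\gammap$.

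For item 1, the left inequality follows from $\tt_e \ge m^{-1/p}$, which gives $(m^{1/p}\tt_e)^{p-2} \ge 1$ for $p \ge 2$, so $\rr_e \ge 1$. For the right inequality, I expand
\[
\textstyle\sum_e \rr_e (\Delta^\star_e)^2 = m^{(p-2)/p}\sum_e \tt_e^{p-2}(\Delta^\star_e)^2 + \sum_e \ww_e^{p-2}(\Delta^\star_e)^2.
\]
The first sum is at most $m^{(p-2)/p}$ by the first consequence of $\opt \le 1$; the second is at most $\|\ww\|_p^{p-2}$ by Lemma~\ref{lem:BoundOpt} (which uses $\|\Delta^\star\|_p \le 1$ with H\"older). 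Combining with the energy-minimization property of $\Delta$ gives the claim.

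For item 2, the key pointwise estimate is
\[
\textstyle\frac{|\gamma'(m^{1/p}\tt_e, \ww_e)|^2}{\rr_e} \;=\; \frac{p^2 \max\{m^{1/p}\tt_e, |\ww_e|\}^{2(p-2)} \ww_e^2}{(m^{1/p}\tt_e)^{p-2}+\ww_e^{p-2}} \;\le\; p^2\,\gammap(m^{1/p}\tt_e, \ww_e).
\]
The numerator simplifies by casework on whether $m^{1/p}\tt_e \ge |\ww_e|$ or not, in each case one of the two terms in the denominator cancels the larger of the two factors in the max, leaving $p^2 \max\{m^{1/p}\tt_e,|\ww_e|\}^{p-2}\ww_e^2$, which is at most $p^2 \gammap(m^{1/p}\tt_e, \ww_e)$ via the lower bound $\gammap(t,x) \ge \max\{t,|x|\}^{p-2}x^2$ for $p \ge 2$ (verified directly from the two cases of Definition~\ref{def:gamma}). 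Then Cauchy--Schwarz plus item 1 and $\sqrt{a+b}\le \sqrt{a}+\sqrt{b}$ yields
\[
\textstyle\sum_e |\Delta_e||\gamma'| \;\le\; p\bigl(m^{(p-2)/(2p)} + \|\ww\|_p^{(p-2)/2}\bigr)\bigl(\sum_e \gammap(m^{1/p}\tt_e,\ww_e)\bigr)^{1/2}.
\]
To match the stated bound I finally use $\|\ww\|_p^p = \sum_e |\ww_e|^p \le \sum_e \gammap(m^{1/p}\tt_e, \ww_e)$ (from $\gammap(t,x) \ge |x|^p$), which converts $\|\ww\|_p^{(p-2)/2}\cdot(\sum \gammap)^{1/2}$ into $(\sum\gammap)^{(p-1)/p}$ after checking the exponents algebraically.

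The main obstacle I expect is the pointwise $\gamma'$-vs-$\rr$ estimate: one must handle the two regimes of $\gammap$ carefully, since the coefficient $p/2$ appearing in the quadratic regime is absent in the outer regime, and pick the right lower bound on $\gammap$ (namely the uniform one with constant $1$, not the $p/2$-constant) so that the pointwise ratio cleanly reduces to $p^2 \gammap$. Everything else is bookkeeping on exponents and an application of Cauchy--Schwarz plus the already-established item 1.
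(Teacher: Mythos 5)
Your proposal is correct and takes essentially the same approach as the paper: item~1 is identical (energy minimization of $\Delta$ against the feasible competitor $\Delta^\star$, then split $\rr_e$ and bound the two terms by the $\opt\le 1$ consequences and H\"older); item~2 is Cauchy--Schwarz plus item~1 plus $\|\ww\|_p^p \le \sum_e\gammap$, just as in the paper. The only difference is cosmetic: you split $|\Delta_e||\gamma'|$ as $(\sqrt{\rr_e}|\Delta_e|)\cdot(|\gamma'|/\sqrt{\rr_e})$ and verify the pointwise bound $|\gamma'|^2/\rr_e \le p^2\gammap$, whereas the paper splits into two $\max\{m^{1/p}\tt_e,|\ww_e|\}^{(p-2)/2}$ factors and then separately observes $\max\{\cdot\}^{p-2}\le\rr_e$; both orderings yield the identical bound $p^2\bigl(\sum\gammap\bigr)\bigl(\sum\rr_e\Delta_e^2\bigr)$.
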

\begin{proof}
Since $\Delta$ is the electrical flow, 
\begin{align*}
\sum_e \rr_e \Delta_e^2 \leq \sum_e \rr_e (\Delta^*_e)^2.
\end{align*}
We have,
\begin{align*}
  \sum_e \rr_e \Delta_e^2 \leq \sum_e \rr_e (\Delta^*_e)^2
  & = \sum_e (m^{\nfrac{1}{p}} \tt_e )^{p-2}(\Delta^*_e)^2 + \sum_e \ww_e^{p-2}
    (\Delta^*_e)^2\\
  &\leq m^{\frac{p-2}{p}} + \norm{\ww}^{p-2}_p , \text{follows from lemma \ref{lem:BoundOpt} and the fact $\sum_{e} (\Delta^*_e)^2 \tt_{e}^{p-2} \leq 1$}.
\end{align*}
Finally, using
$\rr_e \ge \left( m^{\nfrac{1}{p}} \tt_e \right)^{p-2} \ge 1,$ we have
$\sum_e \Delta_e^{2} \le \sum_e \rr_e \Delta_e^{2},$ completing part
\ref{item:oracle:l2}.

Now we know that,
\[
  \abs{\gamma'(m^{\nfrac{1}{p}}\tt_e,\ww_e)} =
  \begin{cases}
      p(m^{\nfrac{1}{p}}\tt_e)^{p-2}\ww_e & \text{if } |\ww_e| \leq m^{\nfrac{1}{p}}\tt_e,\\
      p |\ww_e|^{p-2} \ww_e & \text{otherwise }.
          \end{cases}
\]
Using Cauchy Schwarz's inequality,
\begin{align*}
  \left(\sum_e \abs{\Delta_e}
    \abs{\gamma'(m^{\nfrac{1}{p}}\tt_e,\ww_e)}\right)^2
  = &
  \left(\sum_e  p \abs{\Delta_e}
    \abs{\max(m^{1/p}\tt_e, \ww_e)}^{p-2} \ww_e^2\right)^2\\
    \leq &
  p^{2} \left(\sum_{e} \max(m^{1/p}\tt_e, \ww_e)^{p-2} \ww_e^2\right) \left(\sum_e \max(m^{1/p}\tt_e, \ww_e)^{p-2}\Delta_e^2\right) \\
  \leq & p^2 \gammap(m^{\nfrac{1}{p}}\tt,\ww) \sum_e \max(m^{\nfrac{1}{p}}\tt_e, \abs{\ww_e})^{p-2} \Delta_e^2
\end{align*}
Combining the two cases we have,
\begin{align*}
  \sum_e \abs{\Delta_e} \abs{\gamma'(m^{\nfrac{1}{p}}\tt_e,\ww_e)}
  & \leq  p\sqrt{\gammap(m^{\nfrac{1}{p}}\tt,\ww) \sum_e \max(m^{\nfrac{1}{p}}\tt_e, \abs{\ww_e})^{p-2} \Delta_e^2}\\ 
  & \leq  p\sqrt{\gammap(m^{\nfrac{1}{p}}\tt,\ww) \sum_e \rr_e \Delta_e^2}\\
  & \leq  p\sqrt{\gammap(m^{\nfrac{1}{p}}\tt,\ww)}  \sqrt{ m^{\frac{p-2}{p}} + \pnorm{\ww}^{p-2}}\\
  & \leq  p  m^{\frac{p-2}{2p}} \sqrt{\gammap(m^{\nfrac{1}{p}}\tt,\ww)}   + p \sqrt{\gammap(m^{\nfrac{1}{p}}\tt,\ww)}   \pnorm{\ww}^{\frac{p-2}{2}}\\
  & =  pm^{\frac{p-2}{2p}}\gammap(m^{\nfrac{1}{p}}\tt,\ww)^{\frac{1}{2}} + p \gammap(m^{\nfrac{1}{p}}\tt,\ww)^{\frac{p-1}{p}},
\end{align*}
where the last line uses $\pnorm{x}^{p} \le \gammap(m^{\nfrac{1}{p}} \tt, \ww)$
for any $\tt.$
\end{proof}

\subsection{The Algorithm}
\label{sec:MWUAlgo}

\label{subsec:WidthReduction}

Next, we integrate this oracle into the overall algorithm that
repeatedly adjusts the weights.
As with the use of electrical flow oracles for approximate
max-flow~\cite{ChristianoKMST10}, the convergence of such a scheme
depends on the maximum values in the $\Delta$ returned by the oracle.
However, because the overall objective is now a $p$-norm, the exact
term of importance is actually the $p$-norm of $\Delta$.
Up to this discrepancy, we follow the algorithmic template 
from~\cite{ChristianoKMST10} by making an update when $\norm{\Delta}_{p}^{p}$
is small, and make progress via another potential function otherwise.

In the cases where we do not take the step due to entries with large
values, we show significant increases in an additional potential function,
namely the objective of the quadratic minimization problem inside the
oracle (Algorithm~\ref{alg:oracle}).
However, the less graduate update schemes related to $p$-norms makes it
no longer sufficient to update only the weight corresponding to the
entry with maximum value.
Furthermore, there may be entries with large values, whose corresponding
resistances are too large for us to afford increasing.
We address this by a scheme where we update an entry only if its value
is larger than some threshold $\rho$, and that its resistance is at most
another threshold $\beta$.
Specifically, we show that for an appropriate choice of $\beta$ and $\rho$,
such updates both do not change the primary potential function
(related to $\gamma_{p}(t, \xx)$) by too much (in Lemma~\ref{lem:ReduceWidthGammaPotential}),
and increases the secondary potential function (the objective of the
quadratic minimization problem) significantly whenever $\norm{\Delta}_{p}^{p}$
is large (in Lemma~\ref{lem:ReduceWidthElectricalPotential}).
Pseudocode of this scheme is in Algorithm~\ref{alg:FasterOracleAlgorithm}.

\begin{algorithm}
\caption{Algorithm for the Scaled down Problem}
\label{alg:FasterOracleAlgorithm}
 \begin{algorithmic}[1]
 \Procedure{\FasterGammaApprox}{$\AA', \cc, \tt$}
 \State $\ww^{(0,0)}_e \leftarrow 0$
 \State $\xx \leftarrow 0$
 \State $\rho \leftarrow \tilde{\Theta}_p\left(m^{\frac{(p^2-4p+2)}{p(3p-2)}}\right)$\Comment{width parameter}%
\label{algline:defrho}
 \State $\beta \leftarrow \tilde{\Theta}_p \left(m^{\frac{p-2}{3p-2}}\right)$\Comment{resistance threshold}%
\State $\alpha \leftarrow \tilde{\Theta}_p\left(m^{-\frac{p^2-5p+2}{p(3p-2)}}\right)$\Comment{step size}%
\State $\tau \leftarrow \tilde{\Theta}_p\left(m^{\frac{(p-1)(p-2)}{(3p-2)}}\right)$\Comment{$\ell_p$ energy threshold}%
\label{algline:defalpha}
\State $T \leftarrow \alpha^{-1} m^{1/p} = \tilde{\Theta}_p \left(m^{\frac{p-2}{3p-2}}\right)$%
\State{$i \leftarrow 0, k \leftarrow 0$}
\While{$i < T$} 
\State $\Delta = \textsc{Oracle}(\AA',\cc,\ww^{(i,k)},\tt)$
\label{algline:InvokeOracle}
\If{$\norm{\Delta}_{p}^p \leq \tau$} \hfill \Comment{flow step}
\label{algline:CheckWidth}
\State $\ww^{(i+1,k)} \leftarrow \ww^{(i,k)} + \alpha \abs{\Delta}$
\label{algline:LowWidth}
\State $\xx \leftarrow \xx +  \alpha \Delta$
\State $ i \leftarrow i+1$ 
\Else \hfill \Comment{width reduction step}
\State{For all edges $e$ with $|\Delta_e| \geq \rho$ and $\rr_e \leq \beta$\label{lin:WidthReduceEdge}}
\Statex{\qquad \qquad \qquad $\ww_e^{(i,k+1)} \leftarrow 4^{\frac{1}{p-2}} \max (  m^{1/p}  \tt_e,  \ww_e^{(i,k)} )$} %
\State{$ k \leftarrow k+1$ \label{lin:widthReductionStepIncr}}
\EndIf
\EndWhile
 \State\Return $m^{- \frac{1}{p}} \xx$
 \EndProcedure 
 \end{algorithmic}
\end{algorithm}

\begin{theorem}\label{lem:FasterAlgo}
Let $p\geq 2$.  Given a matrix $\AAhat$ and vectors $\xx$ and $\tt$ such that $\forall e, m^{-1/p} \leq \tt_e \leq 1 $, Algorithm \ref{alg:FasterOracleAlgorithm} uses $O_p\left(m^{\frac{p-2}{(3p-2)}}\left(\log \left(\frac{m \norm{\dd}^2_2}{\|\AAhat\|^2}\right)\right)^{\frac{p}{3p-2}}\right)$ calls to the oracle and returns a vector $\xx$ such that 
$\AAhat \xx = \dd,$ and $\gammap(\tt,\xx) = O_p(1)$.
\end{theorem}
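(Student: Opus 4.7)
}

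The plan is to analyze the two kinds of iterations separately using two potential functions, in the spirit of the Christiano--Kelner--Madry--Spielman--Teng analysis for approximate max-flow, adapted to $\gammap$ in place of the soft-max. The primary potential will be $\Phi(\ww) \defeq \gammap(m^{\nfrac{1}{p}}\tt, \ww)$, which I will track across flow steps, and the secondary quantity is the electrical energy $\sum_e \rr_e \Delta_e^{2}$, which will be used to bound the number of width-reduction steps.

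First, feasibility of the returned $m^{-\nfrac{1}{p}} \xx$: each oracle call returns $\Delta$ with $\AAhat \Delta = \dd$, and $\xx$ is modified only in flow steps via $\xx \leftarrow \xx + \alpha \Delta$. Since there are exactly $T = \alpha^{-1} m^{\nfrac{1}{p}}$ flow steps, $\AAhat \xx = \alpha T \dd = m^{\nfrac{1}{p}} \dd$, so $\AAhat(m^{-\nfrac{1}{p}} \xx) = \dd$. Next, the bound on $\gammap(\tt, m^{-\nfrac{1}{p}} \xx)$ follows once I establish $\Phi(\ww^{(T,\cdot)}) = O_p(m)$ at termination: by the componentwise telescoping, $\abs{\xx_e} \le \sum_{\text{flow}} \alpha\abs{\Delta_e} \le \ww_e$ (the weights only grow, and the width-reduction step only enlarges entries further), so monotonicity of $\gammap(t, \cdot)$ in $\abs{x}$ plus homogeneity (Lemma~\ref{lem:gamma}) give
\[
\gammap(\tt, m^{-\nfrac{1}{p}} \xx) \le \gammap(\tt, m^{-\nfrac{1}{p}} \ww) = m^{-1} \gammap(m^{\nfrac{1}{p}}\tt, \ww) = m^{-1} \Phi(\ww) = O_p(1).
\]

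The flow-step analysis is the conceptual engine. Applying Lemma~\ref{lem:FirstOrderAdditive} coordinatewise to $\ww^{(i+1,k)} = \ww^{(i,k)} + \alpha\abs{\Delta}$ gives
\[
\Phi(\ww^{(i+1,k)}) \le \Phi(\ww^{(i,k)}) + \alpha \sum_e \abs{\gammap'(m^{\nfrac{1}{p}}\tt_e, \ww_e)}\abs{\Delta_e} + p^2 2^{p-3} \alpha^{2} \sum_e \max\{m^{\nfrac{1}{p}}\tt_e, \ww_e, \alpha\abs{\Delta_e}\}^{p-2}\Delta_e^{2}.
\]
The first-order term is controlled by Lemma~\ref{lem:Oracle}.2 by $\alpha p\,\Phi^{(p-1)/p} + \alpha p\, m^{(p-2)/(2p)} \Phi^{1/2}$; the second-order term is controlled by the electrical-energy bound of Lemma~\ref{lem:Oracle}.1, together with the flow-step guard $\|\Delta\|_p^p \le \tau$ which keeps $\alpha\abs{\Delta_e}$ subordinate to $\max\{m^{\nfrac{1}{p}}\tt_e, \ww_e\}$ for the chosen $\alpha, \tau$. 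This yields a scalar recurrence of the form $\Phi_{i+1} \le \Phi_i + O_p(\alpha)\Phi_i^{(p-1)/p} + O_p(\alpha m^{(p-2)/(2p)}) \Phi_i^{1/2} + O_p(\alpha^{2}) \cdot (\text{bounded})$, which integrates to $\Phi^{1/p}(i) = O_p(\alpha i + 1)$, so after $T = \alpha^{-1} m^{\nfrac{1}{p}}$ flow steps $\Phi = O_p(m)$, as required.

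The width-reduction analysis bounds the counter $k$. A WR step triggers only when $\norm{\Delta}_p^p > \tau$, and the parameters $(\rho, \beta, \tau)$ are chosen so that, once the edges with $\rr_e > \beta$ are excluded (using Lemma~\ref{lem:Oracle}.1 to bound their aggregate $\ell_p$-mass via Cauchy--Schwarz/Hölder and the energy bound $O_p(m^{(p-2)/p})$), the remaining mass forces at least one edge $e$ with $\abs{\Delta_e} \ge \rho$ and $\rr_e \le \beta$, so progress is always made. On every WR step, $\rr_e$ at least doubles for each affected edge. Combining this doubling with the a priori upper bound $\rr_e \le O_p(\beta)$ right after the update (since the pre-WR value is $\le \beta$), and with an AM--GM-style global cap $\sum_e \log \rr_e \le O_p(m\log m)$ from $\Phi \le O_p(m)$, gives a bound on $k$. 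The delicate part---which I expect to be the main obstacle---is showing $k = O_p(T)$ rather than the naive $O_p(m \log m)$: this will come from charging each WR step to an edge with $\rr_e \Delta_e^{2} \gtrsim \rho^{2}\cdot \rr_e$ and using the energy cap $O_p(m^{(p-2)/p})$ to convert doublings into budget-limited increments. With the specific choices $\beta = \tilde\Theta_p(m^{(p-2)/(3p-2)})$, $\rho = \tilde\Theta_p(m^{(p^2-4p+2)/(p(3p-2))})$, and $\tau = \tilde\Theta_p(m^{(p-1)(p-2)/(3p-2)})$, these budgets match and yield the claimed total oracle count $O_p(m^{(p-2)/(3p-2)} \polylog)$. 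Finally, the dependence on $\log(m\|\dd\|_2^{2}/\|\AAhat\|^{2})$ enters from quantising the initial $\Phi(\ww^{(0,0)}) = 0$ relative to the input scale when converting the integrated recurrence into a worst-case polylog factor.
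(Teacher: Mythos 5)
Your overall plan matches the paper's: track $\Phi(\ww)=\gammap(m^{1/p}\tt,\ww)$ to control the quality of the output, and track the electrical energy $\energy{\rr}$ to bound the number of width-reduction steps. The flow-step recurrence you set up (Lemma~\ref{lem:FirstOrderAdditive} together with both parts of Lemma~\ref{lem:Oracle}, then integrating to $\Phi^{1/p}\lesssim \alpha i + m^{1/p}$) is exactly the argument in Lemma~\ref{lem:ReduceWidthGammaPotential}, and the feasibility/output paragraph is fine.

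However, there are two genuine gaps. First, you never account for how width-reduction steps change $\Phi$. The update on Line~\ref{lin:WidthReduceEdge} raises $\ww_e$ to $4^{1/(p-2)}\max(m^{1/p}\tt_e,\ww_e)$, so $\Phi$ strictly grows during WR steps. Your recurrence only covers flow steps; without a bound on the WR contribution (the paper shows each WR step multiplies $\Phi$ by at most $\exp(O_p(1)/(\rho^2 m^{2/p}\beta^{-2/(p-2)}))$, which is why the condition $k\le\rho^2 m^{2/p}\beta^{-2/(p-2)}$ appears), the claim $\Phi(T,K)=O_p(m)$ at termination is not established, and this in turn is needed to cap $\energy{\cdot}$ at $O_p(m^{(p-2)/p})$. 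Second, you flag the bound on the WR count $K$ as "the main obstacle" and then only gesture at it. The specific mechanism you offer — a global cap $\sum_e\log\rr_e\le O_p(m\log m)$, or charging doublings to a budget — gives at best the naive $O_p(m\log m)$ that you correctly say is too weak. What actually works (and what the paper does via Lemma~\ref{lem:ckmst:res-increase} and Lemma~\ref{lem:ReduceWidthElectricalPotential}) is a \emph{multiplicative} lower bound on the energy growth per WR step: since the edges $e$ with $\abs{\Delta_e}\ge\rho$ and $\rr_e\le\beta$ collectively carry $\sum_{e\in H}\rr_e\Delta_e^2\ge\Omega_p(\tau^{2/p})$ of the energy (using $\rr_e\ge 1$ and the parameter choices to subtract off the small-$\Delta$ and high-resistance edges), each WR step multiplies $\energy{\cdot}$ by $1+\Omega_p(\tau^{2/p}/m^{(p-2)/p})$. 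Coupled with the initial lower bound $\energy{\rr^{(0,0)}}\ge\norm{\dd}_2^2/\norm{\AAhat}^2$ and the terminal upper bound $O_p(m^{(p-2)/p})$, this caps $K$ and is exactly where the $\log(m\norm{\dd}_2^2/\norm{\AAhat}^2)$ factor in the theorem comes from — not, as you suggest at the end, from "quantising the initial $\Phi$," which is simply zero and contributes nothing.
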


\subsection*{Analysis of Potentials.}
We define the following potential function for the analysis of our algorithm.

\begin{definition}
Let $\Phi$ be the potential function defined as
\[
  \Phi\left(\ww^{\left( i \right)} \right)
  \defeq \gammap\left( m^{\nfrac{1}{p}}
    \tt,
      \ww^{\left( i \right)}\right).
\]
\end{definition}
\noindent Initially, since we start with $\ww^{(0)} = 0$, we have $\Phi(\ww^{(0)}) = 0.$ Observe that in the algorithm, 
we update the potentials in both the flow step and the width reduction step whereas we update the solution only in the flow step. It is easy to see that we always have $\ww^{(i,k)} \ge \abs{\xx^{(i,k)}}.$ 

We next bound the potential. In addition, we track the energy  of the electrical flow in the
network with resistances $\rr.$  Let $\energy{\rr}$ denote the minimum
of routing $\dd$ with resistances $\rr$:
\begin{align}
\energy{\rr}
\defeq
\min_{\Delta: \AA' \Delta = \dd}
\sum_{e} \rr_e \Delta_e^2.
\end{align}
Note that this energy is equal to the energy calculated using the $\Delta$ obtained in the solution of \eqref{eq:oracleprog}.

\noindent \paragraph{Notation. }
We overload notation for $\energy{i,k}$ to denote
$\energy{\rr^{(i,k)}}.$

\noindent Our proof of Theorem~\ref{lem:FasterAlgo} will be based two main parts:
\begin{enumerate}
\item Provided the total number of width reduction steps, $K$,
  is not too big, then $\Phi(T,K)$ is small.
This in turn upper bounds cost of the approximate solution
$m^{-1/p} \xx$.
\item Showing that $K$ cannot be too big, because each width reduction
  step cause large growth in $\energy{\cdot}$, while we can bound the total
  growth in $\energy{\cdot}$ by relating it to $\Phi(\cdot)$.
\end{enumerate}
We start by observing that when we when increase the weight $\ww_e$ of an edge
during a width reduction step, this has the effect of at least
doubling the resistance $\rr_e$.
\noindent Recall,
\[\rr_e^{(i,k)} \defeq \paren{m^{\nfrac{1}{p}} \tt_e}^{p-2} +
  \left(\ww_e^{(i,k)}\right)^{p-2}.
\]
Now,
\begin{align}
\label{eq:resistanceDoubles}
  \frac{\rr_{e}^{(i,k+1)}}{\rr_{e}^{(i,k)}}
  & = \frac{ \paren{m^{\nfrac{1}{p}} \tt_e}^{p-2} + \left(\ww_e^{(i,k+1)}\right)^{p-2}}
    { \paren{m^{\nfrac{1}{p}} \tt_e}^{p-2} + \left(\ww_e^{(i,k)}\right)^{p-2}}  = \frac{ \paren{m^{\nfrac{1}{p}} \tt_e}^{p-2} +   4 \max\left\{
    m^{\nfrac{1}{p}} \tt_e, \ww_e^{(i,k)} \right\}^{p-2}}
    { \paren{m^{\nfrac{1}{p}} \tt_e}^{p-2} + \left(\ww_e^{(i,k)}\right)^{p-2}}   \ge 2.
\end{align}
Meanwhile, the resistance does not grow by a factor larger than 4:
\begin{align}
\label{eq:resistanceGrowthUpperBound}
  \frac{\rr_{e}^{(i,k+1)}}{\rr_{e}^{(i,k)}}
  & = \frac{ \paren{m^{\nfrac{1}{p}} \tt_e}^{p-2} +   4 \max\left\{
    m^{\nfrac{1}{p}} \tt_e, \ww_e^{(i,k)} \right\}^{p-2}}
    { \paren{m^{\nfrac{1}{p}} \tt_e}^{p-2} + \left(\ww_e^{(i,k)}\right)^{p-2}}   \leq 4.
\end{align}
We next show through the following lemma that the $\Phi$ potential does not increase
too rapidly. The proof is through induction and can be found in Appendix \ref{sec:ControllingGammaPotential} .
\begin{restatable}{lemma}{ReduceWidthGammaPotential}
  \label{lem:ReduceWidthGammaPotential}
  After $i$ \emph{flow} steps, and $k$ width-reduction steps,
  provided
  \begin{enumerate}
  \item 
\label{enu:pPowerStep}
$\alpha^p \tau \leq \alpha m^{\frac{p-1}{p}}$,
(controls $\Phi$ growth in flow-steps)
  \item 
\label{enu:widthStepCondition}
$k \leq \rho^2 m^{2/p} \beta^{-\frac{2}{p-2}}$ ,
(acceptable number of width-reduction steps)
  \end{enumerate}
  the potential $\Phi$ is bounded as follows:
  \begin{align*}
    \Phi(i,k) \leq
    \left(
    {p^{2}2^{p}\alpha i} + m^{\nfrac{1}{p}} \right)^{p}
    \exp{\left(O_p(1) \frac{k}{\rho^2 m^{2/p} \beta^{-\frac{2}{p-2}}}
    \right)}.
  \end{align*}
\end{restatable}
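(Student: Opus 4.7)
I will proceed by induction on the total step count $s = i + k$. The base case $s = 0$ is immediate: $\Phi(0,0) = \gammap(m^{1/p}\tt, 0) = 0$, well below the claimed bound $(m^{1/p})^p = m$. For the inductive step I distinguish the two kinds of transitions.

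For a flow step $(i,k) \to (i+1,k)$, apply Lemma~\ref{lem:FirstOrderAdditive} coordinate-wise with $x = \ww_e^{(i,k)}$ and perturbation $\alpha|\Delta_e|$, and sum over $e$. The first-order sum $\alpha \sum_e |\gammap'(m^{1/p}\tt_e, \ww_e^{(i,k)})| |\Delta_e|$ is controlled by Lemma~\ref{lem:Oracle}(2) as $O_p(\alpha)\bigl(\Phi(i,k)^{(p-1)/p} + m^{(p-2)/(2p)}\Phi(i,k)^{1/2}\bigr)$. For the quadratic sum, split the max so that the $(m^{1/p}\tt_e)^{p-2}$ and $(\ww_e^{(i,k)})^{p-2}$ terms together yield $\sum_e \rr_e^{(i,k)} \Delta_e^2 = \energy{i,k}$, bounded via Lemma~\ref{lem:Oracle}(1) by $m^{(p-2)/p} + \Phi(i,k)^{(p-2)/p}$; the remaining $(\alpha|\Delta_e|)^{p-2}\Delta_e^2$ term gives $\alpha^{p-2}\|\Delta\|_p^p \le \alpha^{p-2}\tau$, which combined with the outer $\alpha^2$ becomes $\alpha^p\tau$, bounded by $\alpha m^{(p-1)/p}$ using condition~\ref{enu:pPowerStep}. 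The resulting recurrence $\Phi(i+1,k) \le \Phi(i,k) + O_p(\alpha)\Phi(i,k)^{(p-1)/p} + O_p(\alpha) m^{(p-1)/p}$ then matches the growth of the claimed upper bound by Taylor-expanding $a \mapsto (a + \Theta_p(\alpha))^p$ at $a = p^2 2^p \alpha i + m^{1/p}$ and invoking the induction hypothesis.

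For a width-reduction step $(i,k) \to (i,k+1)$, set $S \defeq \{e : |\Delta_e| \ge \rho,\ \rr_e^{(i,k)} \le \beta\}$. On any $e \in S$ the constraint $\rr_e^{(i,k)} \le \beta$ forces both $m^{1/p}\tt_e \le \beta^{1/(p-2)}$ and $\ww_e^{(i,k)} \le \beta^{1/(p-2)}$. A short case analysis (apply Lemma~\ref{lem:Rescaling} with $\lambda = 4^{1/(p-2)}$ when $\ww_e^{(i,k)} \ge m^{1/p}\tt_e$, and compute $\gammap(m^{1/p}\tt_e, 4^{1/(p-2)}m^{1/p}\tt_e)$ directly otherwise) shows the per-coordinate increase in $\gammap$ is $O_p(\beta^{p/(p-2)})$ in both regimes. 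Hence
\[
\Phi(i,k+1) - \Phi(i,k) \le O_p(\beta^{p/(p-2)})\, |S|.
\]
To bound $|S|$, note that $\rho^2|S| \le \sum_{e \in S}\Delta_e^2 \le \sum_e \rr_e^{(i,k)}\Delta_e^2 = \energy{i,k}$ (using $\rr_e^{(i,k)} \ge 1$), and then apply Lemma~\ref{lem:Oracle}(1). Setting $M \defeq \rho^2 m^{2/p}\beta^{-2/(p-2)}$ and substituting the induction hypothesis together with condition~\ref{enu:widthStepCondition} (which guarantees $\exp(O_p k/M) = O_p(1)$) yields a multiplicative step $\Phi(i,k+1) \le \Phi(i,k)\bigl(1 + O_p(1)/M\bigr) \le \Phi(i,k)\exp(O_p(1)/M)$, and composing over $k$ width-reduction steps produces the claimed exponential factor.

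The main obstacle is the width-reduction estimate. The per-coordinate growth scales as $\beta^{p/(p-2)}$, and the only available bound on $|S|$ routes through the electrical energy. Verifying that the product $|S| \cdot \beta^{p/(p-2)}$ collapses to $\Phi(i,k)/M$ (up to $p$-dependent constants) requires the particular calibration of $\rho$ and $\beta$ in lines~\ref{algline:defrho}--\ref{algline:defalpha}: after substituting $M = \rho^2 m^{2/p}\beta^{-2/(p-2)}$ one must check that the powers of $m$, $\beta$, and $\rho$ indeed align. Condition~\ref{enu:widthStepCondition} is precisely what is needed so that the accumulated exponential factor stays within an $O_p(1)$ factor; this is the regime in which the stated bound is useful for subsequently bounding the $\gammap$-objective produced by the algorithm.
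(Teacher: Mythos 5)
Your flow-step argument is essentially the paper's: apply Lemma~\ref{lem:FirstOrderAdditive} coordinate-wise, control the first-order term by Lemma~\ref{lem:Oracle}(2), split the quadratic term into the electrical-energy piece and the $\alpha^{p-2}|\Delta_e|^{p-2}\Delta_e^2$ piece, and invoke condition~\ref{enu:pPowerStep}. That part is fine.

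The width-reduction step has a genuine gap: you are off by a factor of $\beta$. You bound each per-coordinate increase uniformly by $O_p(\beta^{p/(p-2)})$ and then bound $|S|$ by $\rho^{-2}\energy{i,k}$ using $\rr_e \geq 1$, giving a total increase of $O_p(1)\,\beta^{p/(p-2)}\rho^{-2}m^{(p-2)/p}$. But $\Phi(i,k)/M$ with $M = \rho^2 m^{2/p}\beta^{-2/(p-2)}$ works out to $O_p(1)\,\beta^{2/(p-2)}\rho^{-2}m^{(p-2)/p}$, so your increment is too large by a factor $\beta^{p/(p-2)-2/(p-2)} = \beta = \tilde{\Theta}_p(m^{(p-2)/(3p-2)})$. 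That factor does not disappear under the parameter calibration; after $k \approx M$ width-reduction steps it would inflate the exponential to $e^{\Omega_p(\beta)}$, destroying the bound. The fix is precisely what the paper does: keep the per-coordinate increase as $O_p(1)\rr_e^{p/(p-2)}$ (not the uniform $\beta^{p/(p-2)}$), then factor
\[
\sum_{e \in H}\rr_e^{p/(p-2)} \;\leq\; \Bigl(\max_{e\in H}\rr_e\Bigr)^{2/(p-2)}\sum_{e\in H}\rr_e \;\leq\; \beta^{2/(p-2)}\sum_{e\in H}\rr_e,
\]
and bound $\sum_{e\in H}\rr_e \leq \rho^{-2}\sum_{e\in H}\rr_e\Delta_e^2 \leq \rho^{-2}\energy{i,k}$ \emph{without} first discarding the resistance weight via $\rr_e \geq 1$. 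Dropping the $\rr_e$ weight when bounding $|S|$ is exactly where the spare factor of $\beta$ sneaks in: you give back $\beta^{(p-2)/(p-2)}$ of headroom you needed to keep.
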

\noindent We next wish to prove that in each width-reduction step, the electrical
energy $\energy{\cdot}$ goes up significantly. For this, we will
use the following Lemma which is proven in Appendix \ref{sec:ckmstResIncreaseProof}. It generalizes Lemma 2.6 of~\cite{ChristianoKMST10} to arbitrary weighted $\ell_2$ regression problems, and directly measures
the change in terms of the electrical energy of the entries modified.
\begin{restatable}{lemma}{ckmstResIncrease}
  \label{lem:ckmst:res-increase}
Assuming the program~\eqref{eq:oracleprog} is feasible,
let $\Delta$ be an be a solution to the optimization
problem~\eqref{eq:oracleprog} with weights $\rr$.
Suppose we increase the resistance on each entry to get $\rr'$
Then,
\[
\energy{\rr'}
\ge
\exp \left( \frac{\sum_{e}
\min \left\{1, \frac{\rr'_e - \rr_e}{\rr_e}\right\}
\rr_e \Delta_e^2}{2\energy{\rr}} \right)
\energy{\rr}.
\]
\end{restatable}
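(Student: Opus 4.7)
\textbf{Proof proposal for Lemma~\ref{lem:ckmst:res-increase}.}
The plan is to use Lagrangian duality of the quadratic minimization problem~\eqref{eq:oracleprog}, and then to construct a cheap but nontrivial feasible dual solution for the perturbed resistances by rescaling the dual optimum of the original problem. First, I would show that
\[
\energy{\rr} = \max_{\phi} \; 2 \dd^\top \phi - \sum_{e} \frac{(\AA'^\top \phi)_e^2}{\rr_e},
\]
and that at the dual optimum $\phi^\ast$ one has $(\AA'^\top \phi^\ast)_e = \rr_e \Delta_e$ (by KKT) and $\dd^\top \phi^\ast = \energy{\rr}$ (by strong duality, which holds since~\eqref{eq:oracleprog} is feasible with a strictly convex, coercive objective).

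Next, to lower bound $\energy{\rr'}$, I would test the dual with the family $\phi = t\,\phi^\ast$ for a scalar $t$. Writing $s_e \defeq (\rr'_e - \rr_e)/\rr_e \ge 0$, the substitution yields
\[
\energy{\rr'} \;\ge\; 2t\,\energy{\rr} - t^2 \sum_{e} \frac{\rr_e \Delta_e^2}{1+s_e},
\]
since $(\AA'^\top \phi^\ast)_e^2/\rr'_e = \rr_e \Delta_e^2 / (1+s_e)$. Optimizing over $t$ gives
\[
\energy{\rr'} \;\ge\; \frac{\energy{\rr}^2}{\sum_{e} \rr_e \Delta_e^2/(1+s_e)}.
\]

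Now write $N \defeq \sum_e \tfrac{\rr_e \Delta_e^2 s_e}{1+s_e}$, so that $\sum_e \tfrac{\rr_e \Delta_e^2}{1+s_e} = \energy{\rr} - N$, using $\sum_e \rr_e \Delta_e^2 = \energy{\rr}$. Then the bound becomes
\[
\frac{\energy{\rr'}}{\energy{\rr}} \;\ge\; \frac{1}{1 - N/\energy{\rr}}.
\]
Since $N/\energy{\rr} \in [0,1)$, the elementary term-by-term comparison $\sum_{k\ge 0} x^k \ge \sum_{k\ge 0} x^k/k!$ gives $1/(1-x) \ge e^x$, so $\energy{\rr'}/\energy{\rr} \ge \exp(N/\energy{\rr})$. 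Finally, I would use the pointwise inequality $s_e/(1+s_e) \ge \tfrac{1}{2}\min\{1,s_e\}$ (trivial case analysis on $s_e \le 1$ vs. $s_e \ge 1$) to replace $N$ by $\tfrac{1}{2}\sum_e \min\{1,s_e\}\,\rr_e \Delta_e^2$, yielding the claimed bound.

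The substantive step is the choice of test dual: a naive additive bound from duality gives only $\energy{\rr'} \ge \energy{\rr} + \tfrac{1}{2}\sum_e \min\{1,s_e\}\rr_e \Delta_e^2$, which is of the form $1 + y$ rather than $e^y$, and is provably weaker on simple examples. Scaling $\phi^\ast$ by a free scalar $t$ and then maximizing is what turns an additive improvement into a multiplicative (and hence, via $1/(1-x) \ge e^x$, an exponential) one. No step requires continuous parameterizations, monotonicity/envelope arguments, or propagating bounds on the moving optimizer $\Delta(\rr(t))$, which I initially expected to be the hard part.
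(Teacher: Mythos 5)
Your proof is correct and takes essentially the same route as the paper: both bound $\Psi(\rr')$ by testing the dual with a rescaled copy of the dual optimum for $\rr$, arrive at the bound $\Psi(\rr') \ge \Psi(\rr)^2 / \sum_e \rr_e\Delta_e^2/(1+s_e)$, and then convert the ratio to an exponential via $1/(1-x)\ge e^x$ and the pointwise bound $s/(1+s)\ge \tfrac12\min\{1,s\}$. The paper implements the rescaling by homogenizing the dual and restricting to $\cc^\top\yy=1$ rather than by explicitly maximizing over the scalar $t$, but this is the same calculation in a different notation.
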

This statement also implies the form of the lemma that concerns
increasing the resistances on a set of entries uniformly~\cite[Lemma
2.6]{ChristianoKMST10}.

The next lemma gives a lower bound on the energy in iteration $0$, i.e., when we start, and an upper bound on the energy at each step.

\begin{lemma}
  \label{lem:ElectricalPotentialStartFinishBounds}
  Initially, we have,
 \[
\energy{\rr^{\left(0,0\right)}}
  \ge \frac{\norm{\dd}_{2}^2}{\norm{\AA}^2},
\]
where $\norm{\AA}$ is the operator norm, or
maximum singular value of $\AA$. Let us call this ratio $L$.
 Moreover, at any step $i,$ we have,
\[
  \energy{\rr^{(i,k)}} 
  \le m^{\frac{p-2}{p}} + \Phi(i,k)^{\frac{p-2}{p}}.
\]  
\end{lemma}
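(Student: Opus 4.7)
The plan is to treat the two bounds separately; both are short consequences of items already established in the paper.

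For the initial lower bound, I would first compute $\rr^{(0,0)}$ explicitly. Because $\ww^{(0,0)}=0$, we have $\rr_e^{(0,0)}=(m^{1/p}\tt_e)^{p-2}$, and the assumption $\tt_e\ge m^{-1/p}$ together with $p\ge 2$ gives $\rr_e^{(0,0)}\ge 1$ for every coordinate. Using the variational definition of $\energy{\cdot}$,
\[
\energy{\rr^{(0,0)}}=\min_{\AAhat\Delta=\dd}\sum_e \rr_e^{(0,0)} \Delta_e^2 \;\ge\; \min_{\AAhat\Delta=\dd}\norm{\Delta}_2^2.
\]
The right-hand minimum equals $\norm{\AAhat^{+}\dd}_2^2$, and since any feasible $\Delta$ satisfies $\norm{\dd}_2=\norm{\AAhat\Delta}_2\le\norm{\AAhat}\norm{\Delta}_2$, the minimum-norm feasible $\Delta$ has $\norm{\Delta}_2\ge\norm{\dd}_2/\norm{\AAhat}$. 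This yields the stated bound $\energy{\rr^{(0,0)}}\ge \norm{\dd}_2^2/\norm{\AAhat}^2$.

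For the upper bound, note that $\energy{\rr^{(i,k)}}$ is exactly the objective value at the optimal $\Delta$ of the oracle's program (Algorithm~\ref{alg:oracle}) with weights $\rr^{(i,k)}$. Part~\ref{item:oracle:l2} of Lemma~\ref{lem:Oracle} therefore gives
\[
\energy{\rr^{(i,k)}}=\sum_e \rr_e^{(i,k)}\Delta_e^2 \;\le\; m^{(p-2)/p}+\pnorm{\ww^{(i,k)}}^{p-2}.
\]
To convert the $\ell_p$-norm of $\ww$ into $\Phi$, I would invoke part~1 of Lemma~\ref{lem:gamma}, which states $\gammap(t,x)\ge \abs{x}^p$ for $p\ge 2$. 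Summing coordinate-wise,
\[
\Phi(i,k)=\sum_e \gammap(m^{1/p}\tt_e,\ww_e^{(i,k)})\;\ge\;\sum_e |\ww_e^{(i,k)}|^p=\pnorm{\ww^{(i,k)}}^p,
\]
so $\pnorm{\ww^{(i,k)}}^{p-2}\le \Phi(i,k)^{(p-2)/p}$. Substituting yields the claimed upper bound.

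Both parts are essentially bookkeeping, so I do not expect any real obstacle; the only subtle point is remembering that feasibility of the oracle program (needed so that the minimum in $\energy{\cdot}$ is actually attained on a nonempty set) follows from the standing assumption that the whole scaled-down problem~\eqref{eq:ScaledProblem} is feasible, which guarantees $\dd$ lies in the range of $\AAhat$ throughout the algorithm.
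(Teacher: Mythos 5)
Your proof is correct and follows essentially the same route as the paper: bound $\rr^{(0,0)}_e\ge 1$ using $\tt_e\ge m^{-1/p}$ to get $\energy{\rr^{(0,0)}}\ge\norm{\Delta}_2^2\ge \norm{\dd}_2^2/\smallnorm{\AAhat}^2$, and use Lemma~\ref{lem:Oracle} together with $\pnorm{\ww}^{p}\le\Phi(i,k)$ for the upper bound. You spell out the step $\pnorm{\ww}^{p-2}\le\Phi^{(p-2)/p}$ (via Lemma~\ref{lem:gamma}) that the paper leaves implicit, and you correctly write the operator norm of the augmented matrix $\AAhat$ where the paper's lemma statement somewhat loosely says $\norm{\AA}$; neither of these constitutes a different approach.
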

\begin{proof}
For the lower bound in the initial state,
recall that we scale the problem such that $\opt=1,$
and $\tt_{e} \ge m^{\nfrac{1}{p}}.$
Initially we have, $\rr^{(0,0)}_{e} = (m^{\nfrac{1}{p}} \tt_{e})^{p-2} \ge 1.$
This means for any solution $\Delta$, we have
\[
\sum_{e} \rr^{\left( 0, 0 \right)} \Delta_e^2
\geq
\norm{\Delta}_2^2.
\]
On the other hand, because
\[
\norm{\AA \Delta}_{2}
\leq
\norm{\AA}_{2} \norm{\Delta}_2,
\]
we get
\[
\norm{\Delta}_2 \geq \frac{\norm{\dd}_2}{\norm{\AA}_2},
\]
upon which squaring gives the lower bound on $\Psi(\rr^{(0, 0)})$.

\noindent For the upper bound, Lemma~\ref{lem:Oracle} implies that
  \[
    \energy{\rr^{(i,k)}} \le m^{\frac{p-2}{p}} + \pnorm{\ww}^{p-2} \le
    m^{\frac{p-2}{p}} + \Phi(i,k)^{\frac{p-2}{p}}.
  \]
\end{proof}

The next Lemma says that the three assumptions (stated in the statement of the Lemma)
can be used to ensure that the potential $\energy{\cdot}$ grows quickly with each width
reduction step, and that flow steps do not cause the potential to shrink.

\begin{restatable}{lemma}{ReduceWidthElectricalPotential}
\label{lem:ReduceWidthElectricalPotential}
  Suppose at step $(i,k)$ 
  we have $\norm{\Delta}_{p} >\tau$
  so that we perform a width reduction step
  (line~\ref{lin:widthReductionStepIncr}).
  If
  \begin{enumerate}
  \item 
\label{eq:parametersEnsuringElectricalPotentialGrowth0}
$\Phi(i,k) \leq O_p(1) m$,
  \item 
\label{eq:parametersEnsuringElectricalPotentialGrowth1}
$\tau^{2/p} \geq 2\Omega_p(1)  \frac{m^{\frac{p-2}{p}}}{\beta}$, and
\item 
\label{eq:parametersEnsuringElectricalPotentialGrowth2}
$\frac{\tau}{10} \geq \rho^{p-2} m^{\frac{p-2}{p}}$.
\end{enumerate}
Then
\[
  \energy{i,k+1} \geq \energy{i,k} \left(1+\Omega_p(1) \frac{\tau^{2/p}}{m^{\frac{p-2}{p}}}\right).
\]
Furthermore, if at $(i,k)$ 
  we have $\norm{\Delta}_{p} \leq \tau$
  so that we perform a flow step,
then
\[
  \energy{i+1,k} \geq \energy{i,k}.
\]
\end{restatable}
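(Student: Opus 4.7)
The plan is to invoke Lemma~\ref{lem:ckmst:res-increase} and quantify the energy growth in each branch of the algorithm. For the width-reduction step, weights are updated exactly on the set $U := \{e : |\Delta_e| \geq \rho,\ \rr_e \leq \beta\}$ by line~\ref{lin:WidthReduceEdge}, and by \eqref{eq:resistanceDoubles} the resistance at least doubles on $U$ while $\rr^{(i,k+1)}_e = \rr^{(i,k)}_e$ elsewhere. Hence $\min\{1,(\rr^{(i,k+1)}_e - \rr^{(i,k)}_e)/\rr^{(i,k)}_e\}$ equals $1$ on $U$ and $0$ off $U$, so setting $X := \sum_{e \in U} \rr^{(i,k)}_e \Delta_e^2$, Lemma~\ref{lem:ckmst:res-increase} yields $\energy{i,k+1} \geq \exp\bigl(X/(2\energy{i,k})\bigr)\cdot \energy{i,k}$. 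Using $e^y \geq 1+y$, it suffices to establish $X \geq \Omega_p(\tau^{2/p}/m^{(p-2)/p}) \cdot \energy{i,k}$.

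I will first argue that updated edges carry most of $\norm{\Delta}_p^p > \tau$. Non-updated edges split into (b) $|\Delta_e| < \rho$ and (a2) $|\Delta_e| \geq \rho$ with $\rr_e > \beta$. For (b), $\sum_b |\Delta_e|^p \leq \rho^{p-2} \sum_b \Delta_e^2 \leq \rho^{p-2}\, \energy{i,k}$, where the last step uses $\rr_e \geq (m^{1/p}\tt_e)^{p-2} \geq 1$. Assumption~1 combined with Lemma~\ref{lem:ElectricalPotentialStartFinishBounds} gives $\energy{i,k} \leq O_p(m^{(p-2)/p})$, which together with assumption~3 yields $\sum_b |\Delta_e|^p \leq \tau/10$. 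For (a2), $\rr_e > \beta$ forces $\Delta_e^2 \leq \energy{i,k}/\beta$ and $\sum_{a2} \Delta_e^2 \leq \energy{i,k}/\beta$, whence $\sum_{a2} |\Delta_e|^p \leq (\energy{i,k}/\beta)^{(p-2)/2} \cdot \energy{i,k}/\beta = \energy{i,k}^{p/2}/\beta^{p/2}$; plugging the upper bound on $\energy{i,k}$ and invoking assumption~2 (with the hidden $\Omega_p(1)$ absorbing the factor $10$) gives $\sum_{a2} |\Delta_e|^p \leq \tau/10$. Adding the two, $\sum_{e \in U} |\Delta_e|^p \geq 4\tau/5$.

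To convert this into a lower bound on $X$, I use a Hölder-style step. Since $\rr_e \geq 1$ everywhere, $|\Delta_e|^p = (\rr_e \Delta_e^2)^{p/2}/\rr_e^{p/2} \leq (\rr_e \Delta_e^2)^{p/2}$, and the elementary inequality $\sum_e x_e^c \leq (\sum_e x_e)^c$ for $c \geq 1$ and $x_e \geq 0$ applied to $x_e = \rr_e \Delta_e^2$ with $c = p/2$ gives $\sum_{e \in U} |\Delta_e|^p \leq X^{p/2}$. Hence $X \geq (4\tau/5)^{2/p} = \Omega_p(\tau^{2/p})$, and dividing by $\energy{i,k} \leq O_p(m^{(p-2)/p})$ delivers $X/\energy{i,k} \geq \Omega_p(\tau^{2/p}/m^{(p-2)/p})$, completing the width-reduction case. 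For the flow step, $\ww^{(i+1,k)} = \ww^{(i,k)} + \alpha|\Delta|$ raises each weight, hence each resistance, componentwise; since $\energy{\rr}$ is monotone nondecreasing in $\rr$ (the objective inside its defining minimum is pointwise nondecreasing), $\energy{i+1,k} \geq \energy{i,k}$. The main obstacle is the Hölder-style passage $\sum_U |\Delta_e|^p \leq X^{p/2}$: the bound $\rr_e \geq 1$ guaranteed by the choice of the offset $(m^{1/p}\tt_e)^{p-2}$ on line~\ref{algline:resistance} is essential, and without it the exponents on $m$ and $\tau$ do not line up with the target.
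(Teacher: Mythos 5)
Your proof is correct and follows the paper's overall strategy — decompose indices by $|\Delta_e|$ versus $\rho$ and $\rr_e$ versus $\beta$, bound the contribution of the non-updated edges, and invoke Lemma~\ref{lem:ckmst:res-increase} with $\min\{1,\cdot\}=1$ on the doubled set. There are two small but genuine routing differences worth noting. First, where the paper controls the large-resistance set $B$ in the $\ell_2$ sense (bounding $\sum_{B}\Delta_e^2 \leq \beta^{-1}O_p(1)m^{(p-2)/p}$, subtracting it from $\sum_{H\cup B}\Delta_e^2 \geq \Omega_p(\tau^{2/p})$ after the power-mean step, and then using $\rr_e\geq 1$ to pass from $\sum_H \Delta_e^2$ to $\sum_H \rr_e\Delta_e^2$), you instead bound $\sum_{B}|\Delta_e|^p$ directly via $\max_B\Delta_e^2 \leq \energy{i,k}/\beta$, perform the subtraction at the $\ell_p$ level to get $\sum_U|\Delta_e|^p \geq 4\tau/5$, and only then apply $\sum_U|\Delta_e|^p \leq (\sum_U\rr_e\Delta_e^2)^{p/2}$. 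Both paths are valid and rely on the same three assumptions; yours is arguably cleaner in that the power-mean step already lands on the quantity $X$ fed into the energy lemma, avoiding the separate invocation of $\rr_e\geq 1$ at the end (though you still need $\rr_e\geq 1$ to justify the Hölder step itself, as you correctly flag). Second, for the flow-step monotonicity you argue directly that $\energy{\rr}$ is a nondecreasing function of $\rr$, which is more elementary than the paper's re-invocation of Lemma~\ref{lem:ckmst:res-increase}. Both proofs share the same harmless looseness in the hidden constants (e.g., treating $\rho^{p-2}\energy{i,k}\leq\tau/10$ and $(\energy{i,k}/\beta)^{p/2}\leq\tau/10$ as consequences of Assumptions~3 and~2, which formally requires the thresholds to absorb the $O_p(1)$ factor in the bound $\energy{i,k}\leq O_p(m^{(p-2)/p})$).
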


\begin{proof}
It will be helpful for our analysis to split the index set into three
disjoint parts:
\begin{itemize}
\item $S =  \setof{e : \abs{\Delta_e} \leq \rho }$
\item $H = \setof{e : \abs{\Delta_e} > \rho \text{ and } \rr_e \leq \beta }$
\item $B = \setof{e : \abs{\Delta_e} > \rho \text{ and } \rr_e > \beta }$.
\end{itemize}
Firstly, we note 
\begin{align*}
\sum_{e \in S} \abs{\Delta_e}^{p}
 \leq
\rho^{p-2} \sum_{e \in S} \abs{\Delta_e}^{2} 
 \leq
\rho^{p-2}  \sum_{e \in S} \rr_e \abs{\Delta_e}^{p} 
 \leq
\rho^{p-2} O_p(1) m^{(p-2)/p}.
\end{align*}
hence, using Assumption~\ref{eq:parametersEnsuringElectricalPotentialGrowth2}
\begin{align*}
\sum_{e \in H \union B } \abs{\Delta_e}^p
 \geq \sum_{e } \Delta_e^p - \sum_{e \in S} \abs{\Delta_e}^{p}
 \geq \tau - \rho^{p-2} m^{\frac{p-2}{p}}
\geq 9 \tau.
\end{align*}
This means,
\[
\sum_{e \in H \union B} \Delta_e^2
\geq 
\left(\sum_{e \in H \union B} \Delta_e^p\right)^{p/2} \geq \Omega_p(1) \tau^{2/p}.
\]
Secondly we note that, using Assumption \eqref{eq:parametersEnsuringElectricalPotentialGrowth0} and 
Lemma  \ref{lem:Oracle}, we have
\[
\sum_{e \in B} \Delta_e^2
\leq \beta^{-1} \sum_{e \in B} \rr_e \Delta_e^2 
\leq \beta^{-1} O_p(1) m^{\frac{p-2}{p}}.
\] 
So then, using Assumption~\ref{eq:parametersEnsuringElectricalPotentialGrowth1},
\begin{align*}
\sum_{e \in H } \Delta_e^2
  =
\sum_{e \in H \union B } \Delta_e^2
-
\sum_{e \in B} \Delta_e^2  \geq 
\Omega_p(1) \tau^{2/p} -\beta^{-1} O_p(1) m^{\frac{p-2}{p}} \geq 
\Omega_p(1) \tau^{2/p}
.
\end{align*}
As $\rr_e \geq 1$, this implies $\sum_{e \in H } \rr_e \Delta_e^2\geq \Omega_p(1) \tau^{2/p}$ .

From Lemma~\ref{lem:ElectricalPotentialStartFinishBounds} and
Assumption~\ref{eq:parametersEnsuringElectricalPotentialGrowth0} 
we have
  \[
    \energy{i,k}
\leq
O_p(1) m^{(p-2)/p}.
  \]
So then, combining our last two observations, and applying
Lemma~\ref{lem:ckmst:res-increase}, we get
\[
  \energy{i,k+1} \geq \energy{i,k} \left(1+\Omega_p(1)
    \frac{\tau^{2/p}}{m^{\frac{p-2}{p}}}\right)
.
\]
Finally, for the ``flow step'' case, by applying
Lemma~\ref{lem:ckmst:res-increase} with $H$ as the whole set of indices,
$\delta = 1$ and $\gamma = 1$, we get that as the resistances only
increase,
\[
\energy{i+1,k} \geq \energy{i,k}. 
\]
\end{proof}
\noindent We are now ready to prove Theorem \ref{lem:FasterAlgo}.

\subsection*{Proof of Theorem \ref{lem:FasterAlgo}}
\begin{proof}
We first observe that our parameter choices in the
Algorithm~\ref{alg:FasterOracleAlgorithm} 
satisfy Assumption~\ref{enu:pPowerStep}
of
Lemma~\ref{lem:ReduceWidthGammaPotential}, 
namely, we can choose the parameters $\alpha$ and $\tau$ s.t. 
\begin{itemize}
\item 
  $\alpha \leftarrow
 \Theta_p\left(m^{-\frac{p^2-5p+2}{p(3p-2)}} \left(\log \left(\frac{m}{L}\right)\right)^{\frac{-p}{3p-2}}\right)$,
\item
  $\tau \leftarrow \Theta_p\left(m^{\frac{(p-1)(p-2)}{(3p-2)}}\left(\log \left(\frac{m}{L}\right)\right)^{\frac{p(p-1)}{3p-2}}\right)$,
\end{itemize}
while ensuring 
$\alpha^p \tau \leq \alpha m^{\frac{p-2}{p}}$.
This means by Lemma~\ref{lem:ReduceWidthGammaPotential},
that if the Algorithm completes after taking $T = \alpha^{-1} m^{1/p}$
flow steps and $K \leq \Omega_p(1) \rho^2 m^{2/p} \beta^{-\frac{2}{p-2}}$,
when it returns, we have 
\[\Phi(T,K) \le  m  \left(
    p^{2}2^{p} + 1\right)^{p}
    e^{1}
    \leq 
    O_p(1) m
   ,\]
This means that the algorithm returns $m^{- \frac{1}{p}} \xx$ with 
\begin{align*}
\gamma(\tt, m^{- \frac{1}{p}} \xx)
 =
\frac{1}{m} \gamma(m^{\frac{1}{p}} \tt, \xx) 
\leq
\frac{1}{m} \gamma(m^{\frac{1}{p}} \tt, \ww^{(T,K)})  =\frac{1}{m} \Phi(T,K) 
\leq
O_p(1).
\end{align*}
Note the only alternative is that the algorithm takes more than
$\Omega_p(1) \rho^2 m^{2/p} \beta^{-\frac{2}{p-2}}$ width
reduction steps (and possibly infinitely many such steps, hence never
terminating).

We will now show this cannot happen, by deriving a contradiction from
the assumption that the algorithm takes a width reduction step
starting from step $(i,k)$ where $i < T$ and  $k = \rho^2 m^{2/p}
\beta^{-\frac{2}{p-2}}$.

Since the conditions for Lemma~\ref{lem:ReduceWidthGammaPotential}
hold for all preceding steps, we must have 
$\Phi(i,k) \leq O_p(1) m$.

Additionally, we note that our parameter choice of 
$\beta = \Theta_p \left(m^{\frac{p-2}{3p-2}}\left(\log \left(\frac{m}{L}\right)\right)^{-\frac{2(p-1)}{3p-2}}\right)$
and
$\rho = \Theta_p\left(m^{\frac{(p^2-4p+2)}{p(3p-2)}}\left(\log \left(\frac{m}{L}\right)\right)^{\frac{p(p-1)}{(p-2)(3p-2)}}\right)$ 
along with our choice of $\tau$ (see above), ensures that
\[
  \tau^{2/p} \geq 2\Omega_p(1)  \frac{m^{\frac{p-2}{p}}}{\beta}
\text{ and }
\frac{\tau}{10} \geq \rho^{p-2} m^{\frac{p-2}{p}}
.
\]
This means that at every step $(j,l)$ preceding the current step,
the conditions of Lemma~\ref{lem:ReduceWidthElectricalPotential} are
satisfied, so we can prove by a simple induction that 
\begin{align*}
\energy{i,k} &\geq \energy{0,0} \left(1+\Omega_p(1)
  \frac{\tau^{2/p}}{m^{\frac{p-2}{p}}}\right)^{k}  >
\energy{0,0} \exp\left( \Omega_p(1)
  \frac{\tau^{2/p}}{m^{\frac{p-2}{p}}} k \right)
.
\end{align*}
Since our parameter choices ensure
$\Omega_p(1) \frac{\tau^{2/p}}{m^{\frac{p-2}{p}}} k > \Theta_p\left(\frac{m}{L}\right)$
this means
\[
\energy{i,k}
>
\energy{0,0} \cdot \Theta_p\left(\frac{m}{L}\right).
\]
But this contradicts
Lemma~\ref{lem:ElectricalPotentialStartFinishBounds}, since this
Lemma, combined with
$\Phi(i,k) \leq O_p(1) m$ gives
\[
\frac{\energy{i,k}}{\energy{0,0}}
\leq O_p\left(m^{\frac{p-2}{p}} \right).
\]
From this contradiction, we conclude that we never have more than
$K = \Omega_p(1) \rho^2 m^{2/p} \beta^{-\frac{2}{p-2}}$ width
reduction steps.

Now we observe that the total number of oracle calls in the algorithm
is bounded by
\[
T + K \leq  \Theta_p \left(m^{\frac{p-2}{3p-2}}\left(\log \left(\frac{m}{L}\right)\right)^{\frac{p}{3p-2}}\right).
\]
\end{proof}
\noindent This concludes the analysis of our algorithm.

\subsection{Proof of Theorem \ref{thm:MainResult}}
\label{sec:ProofMainTheorem}
\begin{proof}
Theorem \ref{lem:FasterAlgo} implies that we can solve Program
\eqref{eq:ScaledProblem} using Algorithm
\ref{alg:FasterOracleAlgorithm} to get an $O_p(1)$-approximate
solution in $\Otil_p \left(m^{\frac{p-2}{3p-2}}\right)$ calls to the
Oracle. Implementing the Oracle requires solving a linear system, and
hence can be implemented in in $O(m+n)^{\omega}$ time where $\omega$
is the matrix multiplication constant (see the Appendix for a proof). Thus, we can find an $O_p(1)$-approximate solution to \eqref{eq:ScaledProblem} in total time 
\[
\Otil_p \left((m+n)^{\omega + \frac{p-2}{3p-2}}\right).
\]
Now, Theorem \ref{thm:Algo2} implies that we can find an $O_p(1)$-approximate solution to the residual problem \eqref{eq:residual} in total time,
\[
\Otil_p \left((m+n)^{\omega + \frac{p-2}{3p-2}} \log \nfrac{1}{\eps}\right).
\]
Finally using Theorem \ref{thm:IterativeRefinement} we can conclude that we have an $\eps$-approximate solution to \eqref{eq:primal} in $\Otil_p\left(\log \frac{1}{\eps}\right)$ calls to a $O_p(1)$-approximate solver to the residual problem \eqref{eq:residual}. This gives us a total running time of,
\[
  \Otil_p \left((m+n)^{\omega + \frac{p-2}{3p-2}} \log^{2}
    \nfrac{1}{\eps} \right).
\]
\end{proof}
We now have a complete algorithm for the $p$-norm regression problem that gives an $\eps$-approximate solution.

\section{Speedups for General Matrices via. Inverse Maintenance}
\label{sec:InvMaintain}

If $\AA$ is an explicitly given, $m \times n$, matrix, we need to solve
the quadratic minimization problem at each step. This can be solved
 via a linear systems solve in the matrix
\[
\AA^{\top} \diag{\rr}^{-1} \AA.
\]
which takes $O((m + n)^{\omega})$, where $\omega$ is the
matrix multiplication constant.
This directly gives a total running time cost of 
$\Otil_p(m^{\frac{p-2}{(3p-2)}} (m + n)^{\omega} \log(1 / \epsilon))$,
which for large values of $p$,
along with the assumption of $\omega > 2.37$, exceeds $2.70$.

This is more than the running time of about $O(mn^{1.5})$
of algorithms based on inverse maintenance~\cite{Vaidya90,LeeS14,LeeS15}.
In this section we show that the MWU routine from Section~\ref{sec:ResidualAlgo}
can also benefit from fast inverse maintenance.
Our main result is:
\begin{theorem}
\label{thm:FasterMatrixAlgo}
If $\AA$ is an explicitly given, $m$-by-$n$ matrix with polynomially
bounded condition numbers,
and $p \geq 2$
Algorithm~\ref{alg:FasterOracleAlgorithm} as given in
Section~\ref{subsec:WidthReduction} can be implemented to run in total time
\[
\Otil_{p} \left(\left(m + n\right)^{
\max\left\{\omega,
2 + \frac{p - \left(10 - 4 \omega \right)}{3p - 2}
\right\}}\right).
\]
\end{theorem}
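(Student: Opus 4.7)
The plan is to speed up Algorithm~\ref{alg:FasterOracleAlgorithm} by maintaining an approximate inverse of the matrix $\MM \defeq \AA^\top \diag{\rr}^{-1} \AA$ used inside the oracle (Algorithm~\ref{alg:oracle}), rather than recomputing its inverse from scratch in every iteration. The crucial structural fact is that the resistances $\rr_e = (m^{1/p}\tt_e)^{p-2} + \ww_e^{p-2}$ change slowly and monotonically: in a flow step only $\alpha |\Delta_e|$ is added to $\ww_e$, and in a width-reduction step each affected resistance merely quadruples (equations~\eqref{eq:resistanceDoubles} and~\eqref{eq:resistanceGrowthUpperBound}). We maintain a matrix $\widetilde{\MM}^{-1}$ that is a constant-factor spectral approximation of $\MM^{-1}$, which is sufficient because our oracle's optimization problem~\eqref{eq:oracleprog} is a weighted $\ell_2$ minimization and constant spectral error only blows up constants in Lemma~\ref{lem:Oracle} and hence in the $O_p(1)$ approximation guarantee of Theorem~\ref{lem:FasterAlgo}.

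First, I would set up the lazy bucketing scheme alluded to in the technical overview. For each coordinate $e$, track the multiplicative change in $\rr_e$ since the last time $\widetilde{\MM}^{-1}$ was updated for that coordinate. Bucket coordinates by the dyadic range of their accumulated change, and only flush a bucket (applying a batched rank-$k$ Woodbury update at cost $O((m+n)^2 k + k^\omega)$) once it is large enough that the amortized per-entry cost is comparable to the target budget. Small-bucket changes can be safely ignored since $\widetilde{\MM}^{-1}$ is only required to be a constant-factor approximation. If at any point too many coordinates have shifted bucket, rebuild the full inverse in $O((m+n)^\omega)$ time.

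Second, I would amortize the cost of all updates against the two potential functions of Section~\ref{subsec:WidthReduction}. The total multiplicative change accumulated in $\rr$ over the algorithm's execution can be bounded as follows. For width-reduction steps, the bound $K \le \tilde{O}_p(\rho^2 m^{2/p} \beta^{-2/(p-2)})$ from the proof of Theorem~\ref{lem:FasterAlgo} directly bounds the number of constant-factor doublings. For flow steps, the fact that $\|\Delta\|_p^p \le \tau$ at each flow step (line~\ref{algline:CheckWidth}) combined with the bound $\Phi(T, K) = O_p(m)$ from Lemma~\ref{lem:ReduceWidthGammaPotential} controls $\sum_{i, e} \alpha |\Delta_e^{(i)}|/\ww_e^{(i)}$ via a dyadic decomposition on $\ww_e$ and H\"older-type inequalities. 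Summing the two contributions shows that the total count of coordinates that change bucket across all $T + K$ iterations is $\tilde{O}_p(m^{1 + (p-2)/(3p-2)})$, up to lower-order factors.

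Finally, I would balance the three cost contributions: full rebuilds at $O((m+n)^\omega)$ each, batched low-rank updates whose total cost is bounded by $(m+n)^2$ times the total bucket-change count, and a constant-factor $O((m+n)^2)$ apply of $\widetilde{\MM}^{-1}$ to the right-hand side per iteration. Optimizing the rebuild threshold and bucket granularity against the iteration count $\tilde{O}_p(m^{(p-2)/(3p-2)})$ gives a total exponent of $\max\{\omega, 2 + (p - (10 - 4\omega))/(3p - 2)\}$ as claimed. The main obstacle I expect is the flow-step accounting: unlike width-reduction steps, which affect only entries with $|\Delta_e| \ge \rho$ and $\rr_e \le \beta$, a flow step can nudge every coordinate simultaneously, so the argument must convert the aggregate bound on $\sum_e |\Delta_e|^p$ coming from the $\tau$-threshold into a per-bucket entry count via a careful dyadic decomposition rather than a worst-case bound.
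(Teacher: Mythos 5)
Your high-level blueprint is right — maintain a constant-factor spectral approximation of $\AA^\top\diag{\rr}^{-1}\AA$, bucket accumulated multiplicative resistance changes, flush lazily, and amortize against a potential — and this matches the architecture of the paper's proof. But the proposal has two concrete gaps that are not merely matters of detail: each one would prevent the argument from reaching the stated exponent.

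First, the cost model for a batched update is off in a way that matters. You write the rank-$k$ Woodbury update as $O((m+n)^2 k + k^\omega)$, which is the naive matrix-product cost. The paper's Lemma~\ref{lem:LowRankUpdate} instead gets $O(k^{\omega-2}(m+n)^2)$ by implementing the $k\times n$ by $n\times k$ products via fast rectangular multiplication (blocking into $n/k$ products of $k\times k$ matrices). Since $\omega-2 < 1$, the difference $k$ versus $k^{\omega-2}$ per flush is decisive: with your cost model the total would be $\Otil_p\left((m+n)^2 \sum_i \abs{E^{(i)}_{changed}}\right)$, and (as discussed next) $\sum_i \abs{E^{(i)}_{changed}} = \Otil_p(m)$, giving $(m+n)^3$ rather than anything close to $(m+n)^\omega$.

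Second — and this is the deeper issue — the quantity you track, ``total count of coordinates that change bucket,'' cannot give the theorem. The cost is $\sum_i \abs{E^{(i)}_{changed}}^{\omega - 2}$, with a sublinear power, so a bound on $\sum_i \abs{E^{(i)}_{changed}}$ must be converted via concavity of $z \mapsto z^{\omega-2}$, and that conversion is lossy unless you also control \emph{over how many iterations} the flushes are spread. If you apply concavity uniformly over all $T$ iterations with the (correct) total of $\Otil_p(m)$, you get an exponent of $2 + \left(p(2\omega-3)-2\right)/(3p-2)$, which for large $p$ is about $2.58$ when $\omega\approx 2.37$ — strictly worse than the target $\max\{\omega, 2 + (p-(10-4\omega))/(3p-2)\}\approx 2.37$. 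The paper's Algorithm~\ref{alg:InverseMaintenance} deliberately flushes bucket $\eta$ only at iterations $i \equiv 0 \pmod{2^\eta}$, so for each $\eta$ there are only $T/2^\eta$ potentially nonzero terms $k(\eta)^{(i)}$. Applying concavity bucket-by-bucket with this restricted support (Corollary~\ref{lem:TotalCostWidth}), together with the per-bucket totals $\sum_i k(\eta)^{(i)} \le \Otil_p(m^{(p+2)/(3p-2)}2^{2\eta})$ derived from the energy potential (Lemmas~\ref{lem:ChangeEnergy}, \ref{lem:ResistanceChangeToFlowValue}, \ref{lem:CountHighWidth}, \ref{lem:CountLowWidth}), and then summing a geometric series in $\eta$ is what yields the two competing exponents. (Incidentally, your claimed $\Otil_p(m^{1+(p-2)/(3p-2)})$ for the aggregate count is also larger than the true $\Otil_p(m)$, though fixing the constant does not rescue the argument.) Finally, as a smaller point, the paper drives all the per-bucket totals off the electrical-energy potential $\energy{\cdot}$ for both step types rather than mixing $\Phi$ for flow steps and a count bound for width-reduction steps; your split could possibly be made to work, but $\energy{\cdot}$ already tracks exactly the quantity $\sum_e \Delta_e^2\min\{1,\, (\rr'_e-\rr_e)/\rr_e\}$ that the resistance-change bookkeeping needs.
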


\noindent A few remarks about this running time:
the term that dominates depends on the comparison between
$2/3$ and $10 - 4 \omega$, or after manipulation,
the comparison between $\omega$ and $7/3$:
\begin{enumerate}
\item For the current best value of $\omega > 7/3$,
the second term is at most $\omega$, so the total running
time is about $(m + n)^{\omega}$.
\item If $\omega = 2$, then this running time is simply
$(m + n)^{\frac{p - 2}{3p - 2}}$: same as resolving the linear
system at each step.
\item If $\omega \leq 7/3$, then the overhead in the
exponent on the second term is at most
\[
\frac{p - 2/3}{3p - 2} = 1/3,
\]
and this value approaches $\frac{p - 2}{3p - 2}$ as $\omega \rightarrow 2$.
\end{enumerate}

Our algorithm is based on gradually updating the $\rr$ vector.
First, note that $\ww_e^{(i)}$'s, and thus $\rr_{e}^{(i)}$'s are
monotonically increasing.
Secondly, for the $\rr^{(i)}$ that do not double, we can replace
with the original version while forming a factor $2$ preconditioner.
Thus, we only need to update the $\rr^{(i)}$ entries that have
significant increases. This update can be encapsulated by the following result
on computing low rank perturbations to a matrix,
which is a direct consequence of rectangular matrix multiplication
and Woodbury matrix formula.

\begin{lemma}
\label{lem:LowRankUpdate}
Given an $m$-by-$n$ matrix $\AA$,
along with vectors $\rrhat$ and $\rrtil$ that
differ in $k$ entries, as well as the matrix
$\ZZhat = (\AA^{\top} \diag{\rrhat}^{-1} \AA)^{-1}$,
we can construct $(\AA^{\top} \diag{\rrtil}^{-1} \AA)^{-1}$
in $O(k^{\omega - 2} (m + n)^2)$ time.
\end{lemma}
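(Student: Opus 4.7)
The plan is to express $(\AA^\top \diag{\rrtil}^{-1} \AA)^{-1}$ as a rank-$k$ update of $\ZZhat$ via the Sherman--Morrison--Woodbury identity, and then show each piece of the update can be evaluated in $O(k^{\omega-2}(m+n)^2)$ time by reducing all rectangular products to block multiplications of $k \times k$ matrices handled by black-box $\omega$-time square matrix multiplication.

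First, I would isolate the low-rank perturbation. Let $S \subseteq [m]$ be the set of $k$ coordinates on which $\rrhat$ and $\rrtil$ differ, let $\UU \in \mathbb{R}^{n \times k}$ denote the matrix whose columns are the rows of $\AA$ indexed by $S$, and let $\DD \in \mathbb{R}^{k \times k}$ be the diagonal matrix with $\DD_{ee} = \rrtil_e^{-1} - \rrhat_e^{-1}$ for $e \in S$. Then
\[
\AA^\top \diag{\rrtil}^{-1} \AA = \AA^\top \diag{\rrhat}^{-1} \AA + \UU \DD \UU^\top,
\]
and the Woodbury identity yields
\[
(\AA^\top \diag{\rrtil}^{-1} \AA)^{-1} = \ZZhat - \ZZhat \UU \bigl(\DD^{-1} + \UU^\top \ZZhat \UU\bigr)^{-1} \UU^\top \ZZhat .
\]
The inner $k \times k$ inverse is well defined: coordinates on which $\rrhat_e = \rrtil_e$ can simply be dropped from $S$ to keep $\DD$ invertible, and the outer matrix on the left is positive definite throughout the algorithm.

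Second, I would cost the pieces. Without loss of generality assume $k \leq n$; otherwise one can simply recompute the inverse from scratch at cost $O(n^\omega) = O(n^2 \cdot n^{\omega-2}) \leq O(n^2 k^{\omega-2})$. Pad $n$ to the next multiple of $k$ (introducing only constant-factor slack) and compute, in order:
\begin{enumerate}
\item $\MM := \ZZhat \UU$, an $n \times k$ matrix: partition $\ZZhat$ as an $(n/k) \times (n/k)$ grid of $k \times k$ blocks and $\UU$ as a column of $n/k$ such blocks, yielding $(n/k)^2$ products of $k \times k$ matrices, for a total of $O((n/k)^2 \cdot k^\omega) = O(n^2 k^{\omega-2})$.
\item $\UU^\top \MM$, a $k \times k$ matrix, by an analogous blocking at cost $O(n k^{\omega-1}) \leq O(n^2 k^{\omega-2})$.
\item Invert $\DD^{-1} + \UU^\top \MM$ in $O(k^\omega)$ time.
\item Form $\MM \cdot (\DD^{-1} + \UU^\top \MM)^{-1}$ (an $n \times k$ matrix, cost $O(n k^{\omega-1})$) and then right-multiply by $\MM^\top$ to obtain the $n \times n$ correction; the second product is $n \times k$ times $k \times n$, again $O((n/k)^2 \cdot k^\omega) = O(n^2 k^{\omega-2})$ by the same blocking trick.
\item Subtract from $\ZZhat$ at cost $O(n^2)$.
\end{enumerate}
Summing these terms gives total running time $O(n^2 k^{\omega-2}) \leq O((m+n)^2 k^{\omega-2})$.

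\textbf{Where the work really lies.} The identity itself is routine; the only substantive point is verifying that the rectangular products in steps 1 and 4 realize the rate $O(n^2 k^{\omega-2})$ through a black-box invocation of $\omega$-time square matrix multiplication. This is the standard block-partitioning reduction of rectangular to square multiplication, and it applies cleanly here because both operands in each product are available explicitly and are consumed densely by the next step, so no sparsity or data-structural overhead appears. Modulo this standard accounting, the lemma follows.
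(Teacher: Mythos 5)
Your proof follows essentially the same route as the paper: isolate the low-rank perturbation supported on $S$, apply the Woodbury identity, and charge all rectangular products at rate $O(n^2 k^{\omega-2})$ via block-partitioning into $k \times k$ tiles. You are in fact slightly more complete — you explicitly account for the final $n\times k$ by $k\times n$ outer product that produces the $n\times n$ correction (the paper's step list stops after the middle $k\times k$ inversion and calls the rest ``lower order,'' which is imprecise since that outer product also costs $\Theta(n^2 k^{\omega-2})$), you handle the $k>n$ case, and you note that coordinates with no change can be dropped to keep $\DD$ invertible.
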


\begin{proof}
Let $S$ denote the entries that differ in $\rrhat$ and $\rrtil$.
Then we have
\[
\AA^{\top} \diag{\rrtil}^{-1} \AA
=
\AA^{\top} \diag{\rrhat}^{-1} \AA
+
\AA_{:,S}^{\top}
\left( \diag{\rrtil_{S}}^{-1} - \diag{\rrhat_{S}}^{-1} \right)
\AA_{S, :}.
\]%
This is a low rank perturbation, so by Woodbury matrix identity we get:
\[
\left( \AA^{\top} \diag{\rrtil}^{-1} \AA
 \right)^{-1}
=
\ZZhat
-
\ZZhat \AA_{:,S}^{\top} \left( 
\left( \diag{\rrtil_{S}}^{-1} - \diag{\rrhat_{S}}^{-1} \right)^{-1}
+ \AA_{S, :} \ZZhat \AA_{:, S}^{\top} \right)^{-1}
\AA_{S, :} \ZZhat,
\]
where we use $\ZZhat^{\top} = \ZZhat$ because
$\AA^{\top} \diag{\rrhat}^{-1} \AA$
is a symmetric matrix.
To explicitly compute this matrix, we need to:
\begin{enumerate}
\item compute the matrix
$\AA_{S,:} \ZZhat$,
\item compute $\AA_{:, S} \ZZhat \AA_{:, S}^{\top}$
\item invert the middle term.
\end{enumerate}
This cost is dominated by the first term, which can be viewed
as multiplying $\lceil n / k \rceil$ pairs of $k \times n$
and $n \times k$ matrices.
Each such multiplication takes time $k^{\omega - 1}n$,
for a total cost of $O(k^{\omega - 2} n^2)$.
The other terms all involve matrices with dimension at most $k \times n$,
and are thus lower order terms.
\end{proof}

Note that the running time of Lemma~\ref{lem:LowRankUpdate}
favours `batching' a large number of modified edges to insert.
To this end, we show that it suffices to have an inverse
that only approximates some entries of $\rr^{(i)}$.
To do so, we first need to introduce our notions of approximations:
\begin{definition}
\label{def:Approx}
We use $a \approx_{c} b$ for positive numbers $a$ and $b$
iff $c^{-1} a \leq b \leq c \cdot b$, and for vectors
and for vectors $\aa$ and $\bb$ we use $\aa \approx_{c} \bb$
to denote $\aa_{i} \approx_{c} \bb_{i}$ entry-wise.
\end{definition}
Since we are only updating $k$ resistances that have a constant factor increase and 
using a constant factor preconditioning for the others, we need the following result 
on preconditioned iterative methods for solving systems of linear equations. %
\begin{lemma}
\label{lem:LOLWhatError}
If $\rr$ and $\rrhat$ are vectors such that
$\rr \approx_{\Otil(1)} \rrhat$, and we're given the matrix
$\ZZhat = (\AA^{\top} \diag{\rrhat}^{-1} \AA)^{-1}$ explicitly,
then we can solve a system
of linear equations involving $\AA^{\top} \diag{\rr}^{-1} \AA$
to $1/\poly(n)$ accuracy in $\Otil(n^2)$ time.
\end{lemma}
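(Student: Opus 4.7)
The plan is to use $\ZZhat$ as a spectrally equivalent preconditioner inside a preconditioned iterative solver such as preconditioned Chebyshev iteration (or preconditioned conjugate gradient).

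First, I would lift the given entrywise guarantee $\rr \approx_{\Otil(1)} \rrhat$ to a spectral relation between the Gram-like matrices. Since the $\rr_e, \rrhat_e$ are positive, entrywise approximation of their reciprocals gives $\diag{\rr}^{-1} \approx_{\Otil(1)} \diag{\rrhat}^{-1}$ as SPD diagonal operators, and since the map $\MM \mapsto \AA^{\top} \MM \AA$ preserves the Loewner order, this upgrades to
\[
\AA^{\top} \diag{\rr}^{-1} \AA \approx_{\Otil(1)} \AA^{\top} \diag{\rrhat}^{-1} \AA = \ZZhat^{-1}.
\]
Hence the preconditioned operator $\ZZhat^{1/2}\,(\AA^{\top}\diag{\rr}^{-1}\AA)\,\ZZhat^{1/2}$ has condition number $\kappa = \Otil(1)$.

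Next, I would run preconditioned Chebyshev iteration on the target system, using the explicitly stored inverse $\ZZhat$ to apply the preconditioner at each step. The standard convergence bound yields an $\eps$-accurate solution (in the energy norm of $\AA^{\top}\diag{\rr}^{-1}\AA$) after $O(\sqrt{\kappa}\log(\kappa/\eps)) = \Otil(\log(1/\eps))$ iterations, which collapses to $\Otil(1)$ for the target accuracy $\eps = 1/\poly(n)$. Each iteration performs one multiplication by the dense $n \times n$ matrix $\ZZhat$, costing $O(n^2)$, and one multiplication by $\AA^{\top}\diag{\rr}^{-1}\AA$, which can be carried out in $\Otil(n^2)$ time against a precomputed dense representation of the Gram-like matrix (or, equivalently, by applying the three factors $\AA^{\top}$, $\diag{\rr}^{-1}$, $\AA$ in sequence when this is cheaper). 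Multiplying the $O(n^2)$ per-iteration cost by the $\Otil(1)$ iteration count yields the claimed $\Otil(n^2)$ bound, and a standard backward-stability argument converts the energy-norm guarantee into the desired $1/\poly(n)$ relative-error solution.

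The only subtle point I anticipate is the transfer from the entrywise approximation $\rr \approx \rrhat$ to a true spectral equivalence between $\AA^{\top}\diag{\rr}^{-1}\AA$ and $\ZZhat^{-1}$; this is routine but requires invoking the Loewner monotonicity step cleanly so that the $\Otil(1)$ factor is preserved rather than blown up. Everything else reduces to textbook preconditioned iterative linear algebra, with the hidden polylogarithmic factors in $\rr \approx \rrhat$ appearing only inside a logarithm in the final iteration count and hence being absorbed into the overall $\Otil(\cdot)$.
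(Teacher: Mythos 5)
The paper states this lemma without proof, treating it as a standard fact about preconditioned iterative methods, so there is no in-paper argument to compare against; your proposal supplies the proof the authors omitted, and it is the right one. The spectral-equivalence step is exactly the standard lift: the entrywise bound $\rr \approx_{c} \rrhat$ with $c = \Otil(1)$ gives $\diag{\rr}^{-1}\approx_{c}\diag{\rrhat}^{-1}$ in the Loewner order, and since $\MM \mapsto \AA^{\top}\MM\AA$ is Loewner-monotone this yields $c^{-1}\ZZhat^{-1}\preceq \AA^{\top}\diag{\rr}^{-1}\AA \preceq c\,\ZZhat^{-1}$, so preconditioned Chebyshev (or PCG) with preconditioner $\ZZhat$ converges to $1/\poly(n)$ accuracy in $\Otil(1)$ iterations. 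That is the whole content of the lemma.

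One small point worth being careful about is the per-iteration cost of the matvec with $\AA^{\top}\diag{\rr}^{-1}\AA$. You offer a ``precomputed dense Gram matrix'' as one option, but in the setting of the paper the vector $\rr$ changes every outer iteration and is not the $\rrhat$ for which an inverse is maintained, so that precomputation is not available; the realistic option is applying the three factors $\AA$, $\diag{\rr}^{-1}$, $\AA^{\top}$, which costs $O(mn)$ when $\AA$ is a dense $m\times n$ matrix, not $O(n^2)$. This is not really your error — the paper's own lemma statement reads $\Otil(n^2)$, but when the authors actually invoke it in the proof of Theorem~\ref{thm:FasterMatrixAlgo} they charge $O((m+n)^2)$ per oracle call, tacitly acknowledging the $O(mn)$ matvec. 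So your argument matches the paper's intended use; just be aware that the headline $\Otil(n^2)$ is loose and the honest per-iteration cost is $O(mn + n^2)$, dominated by $O(mn)$.
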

As the resistances we provide to $\textsc{Oracle}$
are in the range $[1, O_{p}(m)]$,
we get that each $\rrhat_{e}$ only needs to be updated
$O(\log{m})$ times, instead of after each iteration.
However, it's insufficient to use this bound in the worst-case
manner: if there are $m^{1/3}$ iterations, each of which
doubles the resistances on $m^{2/3}$ edges, then the total
cost as given by Lemma~\ref{lem:LowRankUpdate} becomes
\[
\left(m + n \right)^{2 + \frac{1}{3} + \frac{2}{3} \cdot \left( \omega - 2 \right)},
\]
which is about $(m + n)^{2.58}$.

We get an even better bound by using our analysis from
Section~\ref{sec:ResidualAlgo} to show that for iteration/edge
combinations $i$ and $e$, the (relative) update to $\rr_{e}^{(i)}$ is small.
Such small changes also imply that we can wait on such updates.
For simplicity, suppose we only increment the resistances
by factors of $\frac{1}{L},$ then it takes $\Theta(L)$ such increments until the edge's
resistance has deviated by a constant factor.
Furthermore, we can wait for another $\Theta(L)$ iterations before
having to reflect this change in $\rrhat$: the total relative
increases in these iterations is also at most $O(1)$. Formalizing this process leads to a lazy-update routine
that tracks the increments of different sizes separately.
Its Pseudocode is in Algorithms~~\ref{alg:InverseInit}~and~\ref{alg:InverseMaintenance}.

We will call the initialization routine $\textsc{InverseInit}$
at the first iteration,
and subsequently call {\textsc{UpdateInverse}} %
upon generating a new
set of resistances in the call to Algorithm~\ref{alg:oracle},
\textsc{Oracle}.
This is in turn called from Line~\ref{algline:InvokeOracle}
of Algorithm~\ref{alg:FasterOracleAlgorithm}.
As a result, we will assume access to all variables
of these routines.
Furthermore, our routines keeps the following global variables:
\begin{enumerate}
\item $\rrhat$: resistances from the last time
we updated each entry.
\item $counter(\eta)_{e}$: for each entry, track the number of times that it
changed (relative to $\rrhat$) by a factor of about $2^{-\eta}$
since the previous update.
\item $\ZZhat$, an inverse of the matrix given by $\AA^{\top} \diag{\rrhat}^{-1} \AA$.
\end{enumerate}

\begin{algorithm}[H]
\caption{Inverse Maintenance Initialization}
\label{alg:InverseInit}
 \begin{algorithmic}[1]
\Procedure{InverseInit}{}
\State Set $\rrhat \leftarrow \rr^{(0)}$.
\State Set $counter(\eta)_e \leftarrow 0$ for all $0 \leq \eta \leq \log(m)$ and
$e$. %
\State Set $\ZZ \leftarrow (\AA^{\top} \diag{\rrhat}^{-1} \AA)^{-1}$ by explicitly
inverting the matrix. 
\EndProcedure
\end{algorithmic}
\end{algorithm}

\begin{algorithm}[H]
\caption{Inverse Maintenance Procedure}
\label{alg:InverseMaintenance}
 \begin{algorithmic}[1]
\Procedure{UpdateInverse}{}
\For{all entries $e$}
\State Find the least non-negative integer $\eta$ such that
\[
\frac{1}{2^{\eta}} 
\leq
\frac{\rr^{\left(i\right)}_{e} - \rr^{\left(i - 1\right)}_{e}}{\rrhat_{e}}.
\]
\State Increment $counter(\eta)_{e}$.%
\EndFor
\State $E_{changed}
\leftarrow
\union_{\eta: i \pmod {2^{\eta}} \equiv 0} \{ e: counter(\eta)_e \geq 2^{\eta}\}$ %
\label{lem:EmptyBucket}
\State $\rrtil \leftarrow \rrhat$
\For{all $e \in E_{changed}$}
\State $\rrtil_{e} \leftarrow \rr_{e} ^{\left( i \right)}$. %
\State Set $counter(\eta)_{e} \leftarrow 0$ for all $\eta$. %
\EndFor
\State $\ZZhat \leftarrow \textsc{LowRankUpdate}( \AA, \ZZ, \rrhat, \rrtil)$.
\State $\rrhat \leftarrow \rrtil$.
\EndProcedure 
 \end{algorithmic}
\end{algorithm}

We first verify that the maintained inverse is always a 
good preconditioner to the actual matrix, $\AA^{\top} \diag{\rr^{(i)}} \AA$.
\begin{lemma}
\label{lem:GoodApprox}
After each call to $\textsc{UpdateInverse}$, the vector $\rrhat$
satisfies
\[
\rrhat
\approx_{\Otil \left( 1 \right)}
\rr^{\left( i \right)}.
\]
\end{lemma}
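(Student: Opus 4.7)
The plan is to establish both sides of the approximation $\rrhat \approx_{\Otil(1)} \rr^{(i)}$ entrywise. The lower bound $\rrhat_e \le \rr^{(i)}_e$ is essentially by monotonicity: the text preceding Lemma~\ref{lem:GoodApprox} observes that the $\ww^{(i)}_e$ are non-decreasing, hence so are the $\rr^{(i)}_e$, and whenever $\textsc{UpdateInverse}$ changes $\rrhat_e$ it assigns $\rrhat_e \leftarrow \rr^{(j)}_e$ for some past iteration $j \le i$ (and if $e$ has never been touched, $\rrhat_e = \rr^{(0)}_e \le \rr^{(i)}_e$). So the heart of the proof is the upper bound $\rr^{(i)}_e \le \Otil(1)\,\rrhat_e$.

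First I would fix an entry $e$ and let $t_0$ be the most recent iteration at which $e$ was updated (set $t_0=0$ if it has never been updated), so that on $(t_0, i]$ the value of $\rrhat_e$ is constant. Then I would write
\[
\frac{\rr^{(i)}_e - \rrhat_e}{\rrhat_e}
\;=\;
\sum_{j=t_0+1}^{i} \frac{\rr^{(j)}_e - \rr^{(j-1)}_e}{\rrhat_e},
\]
and bucket each summand according to the level $\eta$ chosen for it in the loop of \textsc{UpdateInverse}. By the definition of the minimal $\eta$, a level-$\eta$ increment contributes at most $2^{-\eta+1}$ to the ratio above.

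The key combinatorial claim is that, for each $\eta \in \{0,\dots,\log m\}$, the number of level-$\eta$ increments booked for $e$ in the interval $(t_0, i]$ is at most $2\cdot 2^\eta$. This combines two observations: (a) at any checkpoint for level $\eta$ (an iteration divisible by $2^\eta$) strictly before $i$, we must have had $counter(\eta)_e < 2^\eta$, because otherwise $e$ would have entered $E_{\text{changed}}$ at that checkpoint, contradicting the choice of $t_0$; and (b) between two consecutive level-$\eta$ checkpoints there are exactly $2^\eta$ iterations, so at most $2^\eta$ new level-$\eta$ increments can accumulate. Hence the counter sits below $2^\eta$ at the last level-$\eta$ checkpoint preceding $i$, and at most $2^\eta$ further level-$\eta$ events can occur between that checkpoint and $i$, giving the $2\cdot 2^\eta$ bound. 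This in turn bounds the total level-$\eta$ contribution to the ratio by $2\cdot 2^\eta \cdot 2^{-\eta+1} = 4$. Summing over the $O(\log m)$ levels gives $(\rr^{(i)}_e - \rrhat_e)/\rrhat_e \le O(\log m)$, i.e.\ $\rr^{(i)}_e \le (1+O(\log m))\rrhat_e = \Otil(1)\,\rrhat_e$, and combined with monotonicity this is the desired entrywise $\approx_{\Otil(1)}$ bound.

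The main obstacle is getting the checkpoint bookkeeping exactly right. The counter at level $\eta$ is reset only when $e$ itself is updated, which may be triggered by a different level $\eta'\neq\eta$ (via the union in line~\ref{lem:EmptyBucket}); and the current iteration $i$ need not be a level-$\eta$ checkpoint, so the counter may legitimately exceed $2^\eta$ at $i$ even though the algorithm has not triggered on level $\eta$. The $2\cdot 2^\eta$ accounting (``almost-full counter at the last level-$\eta$ checkpoint, plus one partial block of length at most $2^\eta$'') handles both subtleties, but needs to be written out explicitly. A small side remark will be needed at the very smallest level to cap $\eta$ at $\log m$ (changes with $(\rr^{(j)}_e-\rr^{(j-1)}_e)/\rrhat_e < 1/m$ can be charged to level $\log m$, whose total contribution is still $O(1)$ by the same count).
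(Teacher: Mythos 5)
Your proposal follows the same approach as the paper's proof: bucket increments by level $\eta$, show that between resets of $e$ at most $O(2^\eta)$ level-$\eta$ increments can accumulate (because the counter must be below threshold at the last level-$\eta$ checkpoint, plus at most one more block before $i$), multiply by the per-increment bound $2^{-\eta+1}\rrhat_e$, and sum over $O(\log m)$ levels. Your checkpoint bookkeeping is written out more explicitly than the paper's terse version, but the underlying count is the same $2\cdot 2^\eta$.

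There is, however, one small gap you should close. Your statement that ``a level-$\eta$ increment contributes at most $2^{-\eta+1}$ to the ratio'' is only justified for $\eta \ge 1$: minimality of $\eta$ gives $\text{ratio} < 2^{-(\eta-1)}$ when $\eta \ge 1$, but for $\eta = 0$ the ratio can be arbitrarily large, so your claimed level-$0$ bound $2\cdot 2^0 \cdot 2^{-0+1} = 4$ has no justification as written. The paper handles this separately with the observation that ``any change in resistance exceeding $1$ is reflected immediately'': every iteration is a level-$0$ checkpoint, and a single level-$0$ increment already puts $counter(0)_e \ge 1 = 2^0$, so $e$ enters $E_{\text{changed}}$ at the very iteration of the increment. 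Consequently no level-$0$ increment can occur in $(t_0, i)$ (it would contradict the choice of $t_0$), and if one occurs at $i$ then $\rrhat_e$ is reset to $\rr^{(i)}_e$, so the level-$0$ contribution to the post-update ratio is exactly zero. Your own observation (a) already implies $counter(0)_e = 0$ at every iteration strictly before $i$, so you are a single sentence away from this; just do not lean on the $2^{-\eta+1}$ bound at $\eta=0$.
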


\begin{proof}
  First, observe that any change in resistance exceeding $1$ is
  reflected immediately %
  Otherwise,
  every time we update $counter(j)_e$, $\rr_e$ can only increase
  additively by at most
\[
2^{-j + 1} \rrhat_e.
\]%
Once $counter(j)_e$ exceeds $2^{j}$, $e$ will be added to 
$E_{changed}$ after at most $2^{j}$ steps. So when we start from $\rrhat_e$,  $e$ is added to 
$E_{changed}$ after  $counter(j)_e \leq 
2^j + 2^j = 2^{j+1}$ iterations. The maximum possible increase in resistance due to the bucket $j$ is,
\[
2^{-j + 1} \rrhat_{e}
\cdot
2^{j + 1}
= 4 \rrhat_{e}.
\] %
Since there are only at most $m^{1/3}$ iterations,
the contributions of buckets with $j > \log{m}$ %
are negligible. Now the change in resistance is influenced by all buckets $j$, 
each contributing at most $4\rrhat_{e}$ increase. The total change is at most $4 \rrhat_{e}  \log m$ since there are at most $\log m$ buckets.
We therefore have
\[
\rrhat_e
\leq
\rr^{\left(i\right)}_e
\leq
5 \rrhat_e \log m .
\]
for every $i$.
\end{proof}

It remains to bound the number and sizes of calls
made to Lemma~\ref{lem:LowRankUpdate}.
For this we define variables
\[
k\left(\eta\right)^{\left(i \right)}
\]
to denote the number of edges added to $E_{changed}$
at iteration $i$ due to the value of $counter(\eta)_e$.
Note that $k(\eta)^{(i)}$ is non-zero only if
$i \equiv 0 \pmod{2^{\eta}}$, and
\[
\abs{E_{changed}^{\left(i\right)}}
\leq
\sum_{\eta} k\left(\eta\right)^{\left(i \right)}.
\]
The following lemma gives a lower bound on the relative change of energy across one update of resistances.
\begin{restatable}{lemma}{ChangeEnergy}
  \label{lem:ChangeEnergy}
Assuming the program~\eqref{eq:oracleprog} is feasible,
let $\Delta$ be be a solution to the optimization
problem~\eqref{eq:oracleprog} with weights $\rr$.
Suppose we increase the resistance on each entry to $\rr'$
Then,
\[
\frac{\energy{\rr'} - \energy{\rr}}{\energy{\rr}}
\ge
\Omega_p \left(m^{-(p-2)/p} \sum_e \left(\Delta_e \right)^2 \min \left\{1, \frac{\rr'_e - \rr_e}{\rr_e}\right\}\right).
\]
\end{restatable}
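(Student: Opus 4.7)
The plan is to derive Lemma~\ref{lem:ChangeEnergy} as an almost immediate consequence of Lemma~\ref{lem:ckmst:res-increase}, combined with the inequality $e^x - 1 \ge x$ for $x \ge 0$ and an $O_p(m^{(p-2)/p})$ upper bound on $\energy{\rr}$ inherited from the analysis of Algorithm~\ref{alg:FasterOracleAlgorithm}.

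First I would apply Lemma~\ref{lem:ckmst:res-increase} directly to obtain
\[
\energy{\rr'} \;\ge\; \exp\!\left(\frac{1}{2\,\energy{\rr}}\sum_e \min\!\left\{1,\tfrac{\rr'_e - \rr_e}{\rr_e}\right\} \rr_e \Delta_e^2\right)\energy{\rr}.
\]
Because $\rr' \ge \rr$ coordinate-wise, the exponent is nonnegative, so applying $e^x - 1 \ge x$ and dividing by $\energy{\rr}$ gives
\[
\frac{\energy{\rr'} - \energy{\rr}}{\energy{\rr}} \;\ge\; \frac{1}{2\,\energy{\rr}}\sum_e \min\!\left\{1, \frac{\rr'_e - \rr_e}{\rr_e}\right\} \rr_e \Delta_e^2.
\]

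Second, I would drop the $\rr_e$ factor inside the sum. By the definition of $\rr_e$ in Algorithm~\ref{alg:oracle} together with the constraint $\tt_e \ge m^{-1/p}$ imposed on instances of Program~\eqref{eq:ScaledProblem}, one has $\rr_e \ge (m^{1/p}\tt_e)^{p-2} \ge 1$ for every $p \ge 2$. Hence the right-hand side is at least $\tfrac{1}{2\,\energy{\rr}}\sum_e \min\{1,(\rr'_e - \rr_e)/\rr_e\}\,\Delta_e^2$.

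Finally, I would control the denominator. Lemma~\ref{lem:ElectricalPotentialStartFinishBounds} yields $\energy{\rr} \le m^{(p-2)/p} + \Phi^{(p-2)/p}$, while Lemma~\ref{lem:ReduceWidthGammaPotential}, together with the parameter choices used in the proof of Theorem~\ref{lem:FasterAlgo}, maintains $\Phi \le O_p(m)$ at every iteration where this energy-growth lemma is invoked, so $\energy{\rr} \le O_p(m^{(p-2)/p})$. Chaining the three estimates produces the claimed bound with an $\Omega_p(m^{-(p-2)/p})$ prefactor. The only delicate point I expect is bookkeeping rather than mathematics: the lemma is stated without spelling out its algorithmic hypotheses, so I would open the proof by restricting attention to iterates of Algorithm~\ref{alg:FasterOracleAlgorithm} that satisfy $\Phi \le O_p(m)$, which is precisely the regime in which Section~\ref{sec:InvMaintain} actually invokes the lemma.
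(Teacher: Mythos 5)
Your proposal is correct and follows essentially the same route as the paper's proof: apply Lemma~\ref{lem:ckmst:res-increase}, linearize with $e^x - 1 \ge x$, drop the $\rr_e$ factor using $\rr_e \ge 1$, and bound $\energy{\rr} = O_p(m^{(p-2)/p})$. If anything you are slightly more careful than the paper, which asserts the energy upper bound without explicitly noting that it rests on the $\Phi \le O_p(m)$ invariant maintained during the run of Algorithm~\ref{alg:FasterOracleAlgorithm}.
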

\begin{proof}
  Lemma \ref{lem:ckmst:res-increase} gives us that,
  \begin{align*}
    \frac{\energy{\rr'}}{\energy{\rr}} - 1
    & \geq \exp \left( \frac{\sum_{e}\min \left\{1, \frac{\rr'_e - \rr_e}{\rr_e}\right\}\rr_e\Delta_e^2 }{2\energy{\rr}} \right) - 1\\
    &\geq \left( \frac{\sum_{e}\min \left\{1, \frac{\rr'_e - \rr_e}{\rr_e}\right\}\rr_e\Delta_e^2}{2\energy{\rr}} \right)
  \end{align*}
  Since $\rr_e \geq 1$ and $\energy{\rr} \leq O_p(m^{\frac{p-2}{p}})$,
\[
  \frac{\energy{\rr'}- \energy{\rr}}{\energy{\rr}} \geq \Omega_p
  \left(m^{-(p-2)/p} \sum_e \left(\Delta_e \right)^2 \min \left\{1,
      \frac{\rr'_e - \rr_e}{\rr_e}\right\}\right)
\]
which gives our result.
\end{proof}

We divide our analysis into 2 cases, when the relative change in resistance is at least $1$ and  when the relative change in resistance is at most $1$.  To begin with, let us first look at the following lemma that relates the change in weights to the relative change in resistance. The proof is in the Appendix.
\begin{restatable}{lemma}{ResistanceToFlow}
\label{lem:ResistanceChangeToFlowValue}
Consider a flow step from Line~\ref{algline:LowWidth}
of Algorithm~\ref{alg:FasterOracleAlgorithm}.
We have
\[
\frac{\rr_e^{\left(i + 1\right)}
  - \rr_e^{\left(i\right)}}{\rr_e^{\left(i \right)}}
\leq  \left( 1 + \alpha \abs{\Delta_e} \right)^{p - 2} -1
\]
where $\Delta$ is the $\ell_2$ minimizer solution produced by the oracle.
\end{restatable}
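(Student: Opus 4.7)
The plan is to substitute the flow-step update into the definition of $\rr_e$ and reduce the target inequality to a one-variable statement about $x \mapsto x^{p-2}$. By Line~\ref{algline:LowWidth} of Algorithm~\ref{alg:FasterOracleAlgorithm}, a flow step sets $\ww_e^{(i+1)} = \ww_e^{(i)} + \alpha|\Delta_e|$. Writing $c \defeq \alpha|\Delta_e|$ and $b \defeq m^{1/p}\tt_e$, the resistance formula from Algorithm~\ref{alg:oracle} gives $\rr_e^{(i+1)} - \rr_e^{(i)} = (\ww_e+c)^{p-2} - \ww_e^{p-2}$, so the claim reduces to
\[
(\ww_e+c)^{p-2} - \ww_e^{p-2} \;\leq\; \rr_e^{(i)} \bigl[(1+c)^{p-2} - 1\bigr].
\]
Throughout I would lean on the normalization $\tt_e \geq m^{-1/p}$ enforced in Algorithm~\ref{alg:MainAlgo}, which yields $b \geq 1$ and hence $\rr_e^{(i)} \geq b^{p-2} \geq 1$ as well as $\rr_e^{(i)} \geq \ww_e^{p-2}$.

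The heart of the argument is a case split on whether $\ww_e \geq b$ or $\ww_e < b$. If $\ww_e \geq b$, then in particular $\ww_e \geq 1$, so factor $(\ww_e+c)^{p-2} - \ww_e^{p-2} = \ww_e^{p-2}\bigl[(1+c/\ww_e)^{p-2}-1\bigr]$; using $c/\ww_e \leq c$ together with monotonicity of $y \mapsto (1+y)^{p-2}-1$ on $[0,\infty)$ for $p \geq 2$, this is at most $\ww_e^{p-2}\bigl[(1+c)^{p-2}-1\bigr]$, and dividing by $\rr_e^{(i)} \geq \ww_e^{p-2}$ closes this branch. If $\ww_e < b$, I would invoke monotonicity of $f(x) \defeq (x+c)^{p-2} - x^{p-2}$ in $x$ to replace $\ww_e$ by $b$, giving $(\ww_e+c)^{p-2} - \ww_e^{p-2} \leq (b+c)^{p-2} - b^{p-2} = b^{p-2}\bigl[(1+c/b)^{p-2}-1\bigr] \leq b^{p-2}\bigl[(1+c)^{p-2}-1\bigr]$, since $b \geq 1$ implies $c/b \leq c$. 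Dividing by $\rr_e^{(i)} \geq b^{p-2}$ then concludes.

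The main obstacle is justifying the monotonicity of $f$ needed in the second case. Computing $f'(x) = (p-2)\bigl[(x+c)^{p-3} - x^{p-3}\bigr]$, the sign is controlled by $p-3$, so $f$ is non-decreasing on $[0,\infty)$ for $p \geq 3$ (where the case-two chain goes through cleanly) but non-increasing for $2 \leq p < 3$. For the latter regime I would instead exploit concavity of $x^{p-2}$ on $[0,\infty)$ (valid since $p-2 \in [0,1]$), which gives the subadditive bound $(\ww_e+c)^{p-2} \leq \ww_e^{p-2} + c^{p-2}$ and therefore $(\ww_e+c)^{p-2} - \ww_e^{p-2} \leq c^{p-2}$; combined with $\rr_e^{(i)} \geq b^{p-2} \geq 1$, this reduces the claim to the scalar inequality $c^{p-2} \leq (1+c)^{p-2}-1$, which I would verify in the parameter range relevant to Algorithm~\ref{alg:FasterOracleAlgorithm} using the flow-step acceptance condition $\norm{\Delta}_p^p \leq \tau$ from Line~\ref{algline:CheckWidth} together with the choices of $\alpha$ and $\tau$ fixed in Lines~\ref{algline:defrho}--\ref{algline:defalpha}.
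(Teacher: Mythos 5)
Your overall structure mirrors the paper's proof: same substitution of the flow-step update into $\rr_e$, same split on whether $\ww_e^{(i)}$ exceeds $m^{1/p}\tt_e$, and (in the large-$p$ branch of the second case) essentially the same monotonicity-in-the-denominator-normalized-variable argument. Your Case 1 and your Case 2 for $p\geq 3$ are correct and track the paper closely. You are also right to flag that the monotonicity of $z\mapsto (z+\theta)^{p-2}-z^{p-2}$ reverses sign at $p=3$; the paper asserts this function is increasing ``when $\theta,p-2\geq 0$,'' which is simply false for $2<p<3$, so you have spotted a genuine bug in the paper's own argument.

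However, your proposed repair for $2 \leq p < 3$ does not work, and the reason is the same concavity you invoke to justify it. Concavity of $x\mapsto x^{s}$ for $s=p-2\in(0,1)$ gives the subadditive bound $(\ww_e+c)^{s}\leq \ww_e^{s}+c^{s}$, which you use correctly. But applying the identical subadditivity with $\ww_e$ replaced by $1$ gives $(1+c)^{s}\leq 1+c^{s}$, i.e.\ $(1+c)^{p-2}-1 \leq c^{p-2}$ — the reverse of the scalar inequality $c^{p-2}\leq (1+c)^{p-2}-1$ you say you would ``verify in the parameter range,'' and strict for every $c>0$. No choice of $\alpha$ or $\tau$ rescues this, because your reduction passed through the worst-case bound $\rr_e^{(i)}\geq b^{p-2}\geq 1$, and that bound is tight precisely when $\ww_e^{(i)}=0$ and $\tt_e=m^{-1/p}$, which occurs at the first flow step of the algorithm; at that point the needed inequality is exactly $c^{p-2}\leq(1+c)^{p-2}-1$, which fails. (Indeed, $p=2.5$, $\ww_e=0$, $m^{1/p}\tt_e=1$, $\alpha|\Delta_e|=1$ gives a left side of $1$ and a right side of $\sqrt{2}-1$, so the inequality in the lemma as stated already fails at the level of the reduced algebraic claim.) So there is a genuine gap in the $p\in(2,3)$ regime, one your proposal shares with the paper's proof; closing it would require either restricting the parameter range so that $\alpha|\Delta_e|$ and the denominator jointly avoid the bad configuration, or weakening the conclusion by an $O_p(1)$ factor (which would suffice for the downstream uses in Lemmas~\ref{lem:CountHighWidth} and~\ref{lem:CountLowWidth}).
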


Let us now see what happens when the relative change in resistance is at least $1$.

\begin{lemma}
\label{lem:CountHighWidth}
Throughout the course of a run of Algorithm~\ref{alg:FasterOracleAlgorithm},
the number of edges added to $E_{changed}$ due to  relative resistance increase of at least $1$, 
\[
\sum_{1 \leq i \leq T}
k\left( 0 \right)^{\left( i \right)}
\leq
\Otil_{P}\left(m^{\frac{p + 2}{3p - 2}} \right).
\]
\end{lemma}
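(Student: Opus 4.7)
The plan is to bound the total number of bucket-$0$ triggerings across the algorithm by charging each such event to a substantial multiplicative increase in the electrical energy $\energy{\rr^{(i)}}$, and then using the global bounds on $\energy{\cdot}$ from Lemma~\ref{lem:ElectricalPotentialStartFinishBounds}.

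First, I would observe that by Lemma~\ref{lem:GoodApprox}, at every step we have $\rrhat \approx_{\tilde{O}(1)} \rr^{(i-1)}$, so an edge $e$ being placed into bucket $0$ at step $i$, which means $(\rr_e^{(i)} - \rr_e^{(i-1)})/\rrhat_e \ge 1$, implies $(\rr_e^{(i)} - \rr_e^{(i-1)})/\rr_e^{(i-1)} \ge \tilde{\Omega}(1)$. Consequently $\min\{1, (\rr_e^{(i)} - \rr_e^{(i-1)})/\rr_e^{(i-1)}\} \ge \tilde{\Omega}(1)$, so by Lemma~\ref{lem:ChangeEnergy} the step's log-energy increment satisfies
\[
\log \frac{\energy{\rr^{(i)}}}{\energy{\rr^{(i-1)}}}
\;\ge\;
\tilde{\Omega}_p\!\left( m^{-(p-2)/p} \sum_{e \in B_0^{(i)}} \Delta_e^{2} \right),
\]
where $B_0^{(i)}$ denotes the bucket-$0$ edges at step $i$ and $\Delta$ is the oracle flow at that step.

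Next I would lower bound $\Delta_e^{2}$ separately for the two kinds of steps. For a \emph{flow step}, Lemma~\ref{lem:ResistanceChangeToFlowValue} gives $(1 + \alpha|\Delta_e|)^{p-2} - 1 \ge \tilde{\Omega}(1)$, hence $|\Delta_e| = \tilde{\Omega}_p(1/\alpha)$ and $\Delta_e^{2} \ge \tilde{\Omega}_p(\alpha^{-2})$. For a \emph{width-reduction step}, by the update criterion on Line~\ref{lin:WidthReduceEdge} each updated edge has $|\Delta_e| \ge \rho$, so $\Delta_e^{2} \ge \rho^{2}$ (and by~\eqref{eq:resistanceDoubles}--\eqref{eq:resistanceGrowthUpperBound} such an edge's relative resistance change is in $[1,3]$, so it is automatically in bucket $0$).

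Finally, I would sum over all steps. Lemma~\ref{lem:ElectricalPotentialStartFinishBounds}, together with the bound $\Phi(i,k) \le O_p(m)$ proven inside the argument for Theorem~\ref{lem:FasterAlgo}, gives $\energy{\text{final}} \le O_p(m^{(p-2)/p})$ and $\energy{\text{initial}} \ge L$, so telescoping the per-step lower bound yields
\[
\sum_{i} \sum_{e \in B_0^{(i)}} \Delta_e^{2}
\;\le\;
\tilde{O}_p\!\left( m^{(p-2)/p} \log\tfrac{m}{L} \right).
\]
Separating the contributions of the two step types, the flow-step bucket-$0$ count is at most $\tilde{O}_p(\alpha^{2} m^{(p-2)/p})$ and the width-reduction bucket-$0$ count is at most $\tilde{O}_p(m^{(p-2)/p}/\rho^{2})$. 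Plugging in the parameters $\alpha = \tilde{\Theta}_p(m^{-(p^{2}-5p+2)/(p(3p-2))})$ and $\rho = \tilde{\Theta}_p(m^{(p^{2}-4p+2)/(p(3p-2))})$ and simplifying gives $\alpha^{2} m^{(p-2)/p} = \tilde{\Theta}_p(m^{(p+2)/(3p-2)})$ and $m^{(p-2)/p}/\rho^{2} = \tilde{\Theta}_p(m^{p/(3p-2)})$; the flow bound dominates, yielding the claimed $\tilde{O}_p(m^{(p+2)/(3p-2)})$.

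The main obstacle I anticipate is the bookkeeping between $\rrhat$ and $\rr^{(i-1)}$ when invoking Lemmas~\ref{lem:ChangeEnergy} and~\ref{lem:ResistanceChangeToFlowValue}: the counter compares to $\rrhat$, whereas the energy lemma is naturally stated for the true consecutive resistances $\rr^{(i-1)}, \rr^{(i)}$. Absorbing the $\tilde{O}(1)$ distortion from Lemma~\ref{lem:GoodApprox} into the $\tilde{\Omega}/\tilde{O}$ notation should suffice, but it must be done with care so that the single-step contribution to $\log \energy{\cdot}$ remains $\tilde{\Omega}(1)$ times $m^{-(p-2)/p}\Delta_e^{2}$ for each bucket-$0$ edge.
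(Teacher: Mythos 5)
Your proposal is correct and follows essentially the same route as the paper: lower bound the per-step relative energy increase via Lemma~\ref{lem:ChangeEnergy}, use Lemma~\ref{lem:ResistanceChangeToFlowValue} (for flow steps) and the definition of width reduction (for width-reduction steps) to show each bucket-$0$ edge contributes $\tilde{\Omega}_p(m^{-(p-2)/p}\alpha^{-2})$, and then divide the telescoped energy increase $\tilde{O}_p(1)$ by this per-edge contribution. The one place you go beyond the paper's written proof is in explicitly invoking Lemma~\ref{lem:GoodApprox} to bridge the counter's criterion $(\rr_e^{(i)}-\rr_e^{(i-1)})/\rrhat_e \ge 1$ with the consecutive ratio $(\rr_e^{(i)}-\rr_e^{(i-1)})/\rr_e^{(i-1)}$ that Lemmas~\ref{lem:ChangeEnergy} and~\ref{lem:ResistanceChangeToFlowValue} actually speak about; the paper tacitly identifies these (treating the indicator on the consecutive ratio as $k(0)^{(i)}$), while you make the $\tilde{O}(1)$ distortion explicit, which is the more careful reading. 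A small stylistic difference is that the paper observes $\rho = \Omega_p(\alpha^{-1})$ and so uses the single bound $|\Delta_e| \ge \Omega_p(\alpha^{-1})$ uniformly for both step types, whereas you keep the two step types separate and take the max of the two resulting counts; both give $\tilde{O}_p(m^{(p+2)/(3p-2)})$. Your exponent arithmetic for $\alpha^2 m^{(p-2)/p}$ and $m^{(p-2)/p}/\rho^2$ checks out.
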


\begin{proof}
From Lemma \ref{lem:ChangeEnergy}, we know that the relative change in energy over one iteration is at least,
\[
\Omega_p \left(m^{-(p-2)/p} \sum_e \left(\Delta_e \right)^2 \min \left\{1,\frac{\rr^{(i+1)}_e - \rr^{(i)}_e}{\rr^{(i)}_e}\right\}\right).
\]
Over all iterations, the relative change in energy is at least, 
\[
\Omega_p \left(m^{-(p-2)/p} \sum_i \sum_e \left(\Delta_e \right)^2 \min \left\{1, \frac{\rr^{(i+1)}_e - \rr^{(i)}_e}{\rr^{(i)}_e}\right\}\right) 
\]
 which is upper bounded by $O(\log m)$. When iteration $i$ is a width reduction step, the relative resistance change is always at least $1$. In this case $\abs{\Delta_e} \geq \rho$. When we have a flow step, Lemma \ref{lem:ResistanceChangeToFlowValue} implies that when the relative change in resistance is at least $1$ then,
\[
\alpha \abs{\Delta_e} \geq \Omega_p(1).
\]
This gives, $\abs{\Delta_e} \ge \Omega_p(\alpha^{-1})$. Using this bound on $\abs{\Delta_e}$ is sufficient since $\rho >\Omega_p(\alpha^{-1})$ and both kinds of iterations are accounted for. The total relative change in energy can now be bounded.
\begin{align*}
& \Omega_p \left(m^{-(p-2)/p} \alpha^{-2} \sum_{i}  \sum_e \one_{\left[\frac{\rr^{(i+1)}_e - \rr^{(i)}_e}{\rr^{(i)}_e} \ge 1 \right]} \right) \le \Otil_p(1)\\
& \Leftrightarrow  \Omega_p \left(m^{-(p-2)/p} \alpha^{-2} \sum_{i} k \left( 0 \right)^{\left( i \right)} \right) \le \Otil_p(1)\\
& \Leftrightarrow  \sum_{i} k \left( 0 \right)^{\left( i \right)} \le \Otil_p(m^{(p-2)/p} \alpha^{2}).
\end{align*}
The Lemma follows by substituting $\alpha=  \tilde{\Theta}_p\left(m^{-\frac{p^2-5p+2}{p(3p-2)}}\right)$ in the above equation.
\end{proof}

\begin{lemma}
\label{lem:CountLowWidth}
Throughout the course of a run of Algorithm~\ref{alg:FasterOracleAlgorithm},
the number of edges added to $E_{changed}$ due to  relative resistance increase between $2^{-\eta}$ and $2^{-\eta+1}$, 
\[
\sum_{1 \leq i \leq T}
k\left( \eta \right)^{\left( i \right)}
\leq
\begin{cases}
0 & \text{if $2^{\eta} \geq T$},\\
\Otil_{p}\left(m^{\frac{p + 2}{3p - 2}} 2^{2\eta}\right) & \text{otherwise}.
\end{cases}
\]
\end{lemma}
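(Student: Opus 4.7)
The plan is to adapt the proof of Lemma~\ref{lem:CountHighWidth} by tracking contributions to each resistance bucket $\eta$ separately, and then exploiting the lazy-update structure of $E_{changed}$ to save an extra factor of $2^\eta$. First I would observe that an iteration contributes an increment to $counter(\eta)_e$ only if the relative resistance change on $e$ lies in $[2^{-\eta}, 2^{-\eta+1})$. Since width-reduction steps always cause a relative change of at least $2$ by~\eqref{eq:resistanceDoubles}, only flow steps can contribute for $\eta \geq 1$, and in that case Lemma~\ref{lem:ResistanceChangeToFlowValue} together with $(1+x)^{p-2} - 1 = O_p(x)$ for small $x$ forces $\alpha|\Delta_e| \geq \Omega_p(2^{-\eta})$, hence $\Delta_e^2 \geq \Omega_p(2^{-2\eta}/\alpha^2)$.

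Next I would plug this lower bound into Lemma~\ref{lem:ChangeEnergy}: each (edge, iteration) pair in bucket $\eta$ contributes at least
\[
\Omega_p\!\left(m^{-(p-2)/p}\right)\,\Delta_e^2\,\min\!\left\{1,\tfrac{\rr^{(i+1)}_e - \rr^{(i)}_e}{\rr^{(i)}_e}\right\} \;\geq\; \Omega_p\!\left(m^{-(p-2)/p} \cdot 2^{-3\eta}/\alpha^2\right)
\]
to the total relative energy growth. Combining Lemma~\ref{lem:ElectricalPotentialStartFinishBounds} with the final bound $\Phi(T,K) \leq O_p(m)$ shows that the total relative energy growth across the entire run is $\Otil(1)$, so the total number $N(\eta)$ of bucket-$\eta$ increments summed over all edges and iterations satisfies $N(\eta) \leq \Otil_p(m^{(p-2)/p} \alpha^2 \cdot 2^{3\eta})$. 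With $\alpha = \tilde{\Theta}_p(m^{-(p^2-5p+2)/(p(3p-2))})$, the same exponent arithmetic as in Lemma~\ref{lem:CountHighWidth} gives $m^{(p-2)/p} \alpha^2 = \tilde{\Theta}_p(m^{(p+2)/(3p-2)})$, so $N(\eta) \leq \Otil_p(m^{(p+2)/(3p-2)} \cdot 2^{3\eta})$.

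Finally, I would convert $N(\eta)$ into the stated bound on $\sum_i k(\eta)^{(i)}$ by invoking the lazy-update logic. Each addition of an edge $e$ to $E_{changed}$ via bucket $\eta$ requires $counter(\eta)_e$ to have accumulated at least $2^\eta$ increments since its last reset; therefore $\sum_i k(\eta)^{(i)} \leq N(\eta)/2^\eta \leq \Otil_p(m^{(p+2)/(3p-2)} \cdot 2^{2\eta})$, giving precisely the claimed bound. For the boundary case $2^\eta \geq T$, observe that $counter(\eta)_e$ is incremented at most once per iteration and must reach $2^\eta$ before it can trigger an addition, which is impossible in fewer than $2^\eta$ iterations; so no edge is ever added through bucket $\eta$ in this regime.

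The main obstacle is careful accounting. The counter thresholds are evaluated against $\rrhat$ rather than the current $\rr^{(i)}$, so one must appeal to Lemma~\ref{lem:GoodApprox} ($\rrhat \approx_{\Otil(1)} \rr^{(i)}$) to ensure that bucket memberships shift by only $O(\log\log m)$ levels, which is absorbed into the $\Otil_p(\cdot)$. One also has to correctly combine the two lower bounds (on $\Delta_e^2$ and on the truncated relative change) without double-counting: the $2^{2\eta}$ (rather than the naive $2^{3\eta}$) in the final bound is exactly the saving produced by the lazy amortization that defers each edge's reappearance in $E_{changed}$ until its counter has filled.
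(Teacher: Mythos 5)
Your proof is correct and takes essentially the same route as the paper: lower-bound $\alpha\abs{\Delta_e}$ by $\Omega_p(2^{-\eta})$ via Lemma~\ref{lem:ResistanceChangeToFlowValue}, plug this into the per-step energy growth bound of Lemma~\ref{lem:ChangeEnergy}, cap the total relative energy growth at $\Otil_p(1)$ via Lemma~\ref{lem:ElectricalPotentialStartFinishBounds}, and then amortize using the fact that each addition to $E_{changed}$ via bucket $\eta$ requires $2^\eta$ counter increments. The paper compresses the amortization into a single inequality chain ($\sum_i\sum_e \mathbf{1}[\cdot] \geq \sum_i 2^\eta k(\eta)^{(i)}$) whereas you introduce an explicit intermediate quantity $N(\eta)$; the content is identical. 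You are also slightly more explicit about the $\rrhat$-versus-$\rr^{(i)}$ denominator mismatch (absorbed via Lemma~\ref{lem:GoodApprox}) and about why width-reduction steps never contribute to buckets $\eta\geq 1$, both of which the paper treats implicitly. No gaps.
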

\begin{proof}
From Lemma \ref{lem:ChangeEnergy}, the total relative change in energy is at least, 
\[
\Omega_p \left(m^{-(p-2)/p} \sum_i \sum_e \left(\Delta_e \right)^2 \left(\frac{\rr^{(i+1)}_e - \rr^{(i)}_e}{\rr^{(i)}_e}\right)\right).
\]
We know that $\frac{\rr^{(i+1)}_e - \rr^{(i)}_e}{\rr^{(i)}_e} \geq 2^{-\eta}$. Using Lemma \ref{lem:ResistanceChangeToFlowValue}, we have,
\[
(1+\alpha \abs{\Delta_e})^{p-2} - 1 \geq 2^{-\eta}. 
\]
We can bound $(1+\alpha \abs{\Delta_e})^{p-2} - 1$ as,
\[
(1+\alpha \abs{\Delta_e})^{p-2} - 1 \leq 
\begin{cases}
\alpha \abs{\Delta_e} & \text{ when $\alpha \abs{\Delta_e} \leq 1$ or $p-2 \leq 1$} \\
O_p\left( \left(\alpha \abs{\Delta_e})^{p-2}\right)\right) & \text{ otherwise. }
\end{cases}
\]
Now, in the second case, when $\alpha \abs{\Delta_e} \geq 1$ and $p-2>1$, 
\[
\left(\alpha \abs{\Delta_e}\right)^{p-2} \geq 2^{-\eta} \Rightarrow \alpha \abs{\Delta_e} \geq \left(\frac{1}{2^{\eta}}\right)^{1/(p-2)} \geq 2^{-\eta}
\]
For both cases we get, 
\[
\alpha \abs{\Delta_e} \geq \Omega_p \left(2^{-\eta}\right).
\]
Using the above bound and the fact that the total relative change in energy is at most $\Otil_p(1)$, gives,
\begin{align*}
&\Omega_p \left(m^{-(p-2)/p} \sum_i \sum_e \left(\Delta_e \right)^2 \left(\frac{\rr^{(i+1)}_e - \rr^{(i)}_e}{\rr^{(i)}_e}\right)\right) \leq \Otil_p(1)\\
\Rightarrow & \Omega_p \left(m^{-(p-2)/p} \sum_i \sum_e \left(\alpha^{-1} 2^{-\eta}\right)^2 \cdot \left(2^{-\eta}\one_{2^{-\eta+1} \geq \frac{\rr^{(i+1)}_e - \rr^{(i)}_e}{\rr^{(i)}_e} \geq 2^{-\eta}}\right) \right)  \leq \Otil_p(1)\\
\Rightarrow & \Omega_p \left(m^{-(p-2)/p} \alpha^{-2} 2^{-3\eta} \sum_i  2^{\eta}  k\left( \eta \right)^{\left( i \right)}\right) \leq \Otil_p(1)\\
\Rightarrow & \sum_i k\left( \eta \right)^{\left( i \right)} \leq \Otil_p \left( m^{(p-2)/p} \alpha^{2} 2^{2\eta}  \right)
\end{align*}
The Lemma follows substituting $\alpha=  \tilde{\Theta}_p\left(m^{-\frac{p^2-5p+2}{p(3p-2)}}\right)$ in the above equation.
\end{proof}

We can now use the concavity of $f(z) = z^{\omega - 2}$ to upper bound
the contribution of these terms.
\begin{corollary}
\label{lem:TotalCostWidth}
Let $k(\eta)^{(i)}$ be as defined. Over all iterations we have,
\[
\sum_{i} \left( k\left( 0 \right)^{\left( i \right)} \right) ^{\omega - 2}
\leq
\Otil_{p} \left( m^{\frac{p - \left(10 - 4 \omega\right) }{3p - 2}} \right)
\]
and for every $\eta$,
\[
\sum_i^T \left( k\left( \eta \right)^{\left( i \right)} \right) ^{\omega - 2} \leq 
\begin{cases}
0 & \text{if $2^{\eta} \geq T$},\\
\Otil_{p} \left( m^{\frac{p - 2 + 4 \left( \omega - 2 \right)}{3p - 2}} \cdot 2^{\eta \left( 3 \omega - 7 \right) } \right) & \text{otherwise}.
\end{cases}
\]
\end{corollary}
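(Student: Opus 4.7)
\begin{proofsketch}
The plan is to combine the total-count bounds from Lemmas~\ref{lem:CountHighWidth}~and~\ref{lem:CountLowWidth} with the concavity of $f(z) = z^{\omega - 2}$ (which holds since $\omega - 2 < 1$). Specifically, for any non-negative sequence $a_1, \ldots, a_N$ with $N$ non-zero terms and total mass $S = \sum_i a_i$, Jensen's inequality gives
\[
\sum_i a_i^{\omega - 2} \;\le\; N \cdot \left( \frac{S}{N} \right)^{\omega - 2} \;=\; N^{3 - \omega} \cdot S^{\omega - 2}.
\]
So the strategy is to (a) identify the number $N$ of iterations in which $k(\eta)^{(i)}$ can possibly be non-zero, (b) plug in the total bound $S$, and (c) simplify the resulting exponent.

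For the $\eta = 0$ case, every iteration can contribute, so $N \le T = \Otil_p(m^{(p-2)/(3p-2)})$, and by Lemma~\ref{lem:CountHighWidth}, $S \le \Otil_p(m^{(p+2)/(3p-2)})$. Applying Jensen and collecting the exponent of $m$ gives
\[
\frac{(p-2)(3-\omega) + (p+2)(\omega - 2)}{3p - 2} \;=\; \frac{p + 4\omega - 10}{3p - 2} \;=\; \frac{p - (10 - 4\omega)}{3p - 2},
\]
which matches the statement.

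For $\eta \ge 1$, by the update rule on Line~\ref{lem:EmptyBucket} of Algorithm~\ref{alg:InverseMaintenance}, a term $k(\eta)^{(i)}$ is non-zero only when $i \equiv 0 \pmod{2^\eta}$, so $N \le T / 2^\eta$. If $2^\eta \ge T$, then $N < 1$, so the sum is $0$ (since $k(\eta)^{(i)}$ is a count and every term is zero). Otherwise, using $S \le \Otil_p(m^{(p+2)/(3p - 2)} 2^{2\eta})$ from Lemma~\ref{lem:CountLowWidth}, Jensen yields
\[
\sum_i \left( k(\eta)^{(i)} \right)^{\omega - 2}
\;\le\;
\left( \frac{T}{2^\eta} \right)^{3 - \omega} \cdot S^{\omega - 2}.
\]
The $m$-exponent again simplifies to $(p - 2 + 4(\omega - 2))/(3p - 2)$, and the $2^\eta$ factors combine to $2^{-\eta(3 - \omega) + 2\eta(\omega - 2)} = 2^{\eta(3\omega - 7)}$, giving exactly the bound in the corollary.

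There is no real obstacle here beyond bookkeeping: once one observes that $z \mapsto z^{\omega - 2}$ is concave (so Jensen applies in the helpful direction) and that only a $1/2^\eta$ fraction of iterations can contribute to bucket $\eta$, the rest is a routine exponent computation. The most delicate point is simply remembering to use the \emph{count} of non-zero iterations, rather than $T$, as the value of $N$ when $\eta > 0$; this is what makes the $2^{\eta(3\omega - 7)}$ factor small for the relevant regime $\omega < 7/3$ and keeps the total cost bounded.
\end{proofsketch}
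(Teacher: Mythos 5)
Your proposal is correct and follows the paper's own proof: both invoke concavity of $z\mapsto z^{\omega-2}$ (the paper phrases it as "maximized when equally distributed," you phrase it as Jensen's inequality, which is the same estimate $\sum a_i^{\omega-2}\le N^{3-\omega}S^{\omega-2}$), both take $N=T$ for $\eta=0$ and $N=T/2^\eta$ for $\eta\ge1$ using the modular condition in Algorithm~\ref{alg:InverseMaintenance}, and both plug in the mass bounds $S$ from Lemmas~\ref{lem:CountHighWidth}~and~\ref{lem:CountLowWidth}. Your exponent arithmetic matches the paper's exactly.
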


\begin{proof}
Due to the concavity of the $\omega - 2 \approx 0.3727 < 1$ power,
this total is maximized when it's equally distributed over all iterations. In the first sum, the number of terms is equal to the number of iterations, i.e.,  
$\Otil_{p}(m^{\frac{p - 2}{3p - 2}})$. In the second sum the number of terms is $\Otil_{p}(m^{\frac{p - 2}{3p - 2}}) 2^{-\eta}$.
Distributing the sum equally over the above numbers give, 
\[
\sum_i^T \left( k\left( 0 \right)^{\left( i \right)} \right) ^{\omega - 2} \leq \left( \Otil_{p} \left( m^{\frac{p+2}{3p-2} - \frac{p-2}{3p - 2}} \right) \right)^{\omega - 2}
\cdot \Otil_{p} \left( m^{\frac{p - 2}{3p - 2}} \right)
=
\Otil_{p} \left( m^{\frac{p - 2 + 4 \left( \omega - 2 \right)}{3p - 2}} \right)
\leq
\Otil_{p} \left( m^{\frac{p - \left(10 - 4 \omega\right) }{3p - 2}} \right)
\]
and 
\begin{align*}
\sum_i^T \left( k\left( \eta \right)^{\left( i \right)} \right) ^{\omega - 2} &\leq \Otil\left( m^{\frac{p - 2}{3p - 2}} 2^{-\eta}\right) \cdot \Otil_{p}\left( \frac{m^{\frac{p + 2}{3p -2}} 2^{2\eta}} {m^{\frac{p - 2}{3p - 2}} 2^{-\eta}} \right)^{\omega - 2}\\
&= \Otil_{p}\left(m^{\frac{p - 2 + 4 \left( \omega - 2 \right) }{3p - 2}}2^{-\eta}\cdot 2^{3\eta (\omega - 2 )}\right)\\
&= \Otil_{p} \left(m^{\frac{p - 2 + 4 \left( \omega - 2 \right)}{3p - 2}} 2^{\eta (3 \omega - 7)}\right).
\end{align*}

\end{proof}

\begin{proof}(of Theorem~\ref{thm:FasterMatrixAlgo}) By
  Lemma~\ref{lem:GoodApprox}, the $\rrhat$ that the inverse being
  maintained corresponds to always satisfy
  $\rrhat \approx_{\Otil(1)} \rr^{(i)}$.  So by the iterative linear
  systems solver method outlined in Lemma~\ref{lem:LOLWhatError},
  we can implement each call to $\textsc{Oracle}$
  (Section~\ref{sec:Oracle})in time $O((n + m)^2)$ in addition to the
  cost of performing inverse maintenance.  This leads to a total cost
  of
\[
\Otil_{p}\left(\left(n + m\right)^{2 + \frac{p - 2}{3p - 2}}\right).
\]
across the $T = \Theta_{p}(m^{\frac{p - 2}{3p - 2}})$ iterations.

The costs of inverse maintenance is dominated by the calls to
the low-rank update procedure outlined in Lemma~\ref{lem:LowRankUpdate}.
Its total cost is bounded by
\[
O\left( \sum_{i} \abs{E_{changed}^{\left( i \right)}}^{\omega - 2}
\left( m + n \right)^2 \right)
=
O\left( 
\left( m + n \right)^2
\sum_{i} \left(
   \sum_{\eta} k\left( \eta \right)^{\left( i \right)}
\right) ^{\omega - 2}
\right).
\]
Because there are only $O(\log{m})$ values of $\eta$,
and each $k(\eta)^{(i)}$ is non-negative, we can bound the total cost by:
\[
\Otil\left( 
\left( m + n \right)^2
\sum_{i} \sum_{\eta}
\left(k\left( \eta \right)^{\left( i \right)}\right)^{\omega - 2}
\right)
\leq
\Otil_{p}\left( 
\left( m + n \right)^2
\sum_{\eta: 2^{\eta} \leq T} 
m^{\frac{p - 2 + 4 \left( \omega - 2 \right)}{3p - 2}}
\cdot 2^{\eta \left( 3 \omega - 7 \right) }
\right),
\]
where the inequality follows from substituting in the result
of Lemma~\ref{lem:TotalCostWidth}.
Depending on the sign of $3 \omega - 7$, this sum is dominated
either at $\eta = 0$ or $\eta = \log{T}$.
Including both terms then gives
\[
\Otil_{p}\left( 
\left( m + n\right)^{2 + \frac{p - 2 + 4 \left( \omega - 2 \right)}{3p - 2}}
+
\left( m + n\right)^{2 + \frac{p - 2 + 4 \left( \omega - 2 \right)
+ \left( p - 2 \right) \left( 3 \omega - 7 \right) }{3p - 2}}
\right),
\]
with the exponent on the trailing term simplifying to $\omega - 2$ to give,
\[
\Otil_{p}\left( 
\left( m + n\right)^{2 + \frac{p - \left(10 - 4 \omega \right)}{3p - 2}}
+
\left( m + n\right)^{\omega}
\right).
\]
\end{proof}

\section{Other Regression Formulations}
\label{sec:Variants}

In this section we discuss how various variants of $\ell_{p}$-norm
regression can be translated into our setting of
\begin{align*}
    \tag{\ref{eq:primal}}
  \min_{\xx} \qquad & \norm{\xx}_{p}\\
              & \AA \xx = \bb.
\end{align*}

As we will address a multitude of problems, we will make generic
numerical assumptions to simplify the derivations.
\begin{tight_enumerate}
\item $m \leq \poly(n)$.
\item All entries in $\AA$ and $\bb$ are at most $\poly(n)$.
Note that this implies that the maximum singular value of $\AA$,
$\sigma_{\max}(\AA)$ and $\norm{\bb}_2$ are at most $\poly(n)$.
\item $\norm{\bb}_{2} \geq 1$.
\item The minimum non-zero singular value of $\AA$,
$\sigma_{\min}(\AA)$ is at least $1 / \poly(n)$.
\end{tight_enumerate}

\noindent Note that these conditions also imply
bounds on the optimum value and the optimum solution $\xx^{*}$: 
Specifically,
\[
OPT \leq \norm{\xx^{*}}_2 \sqrt{m} \leq \norm{\bb}_2 \sigma_{\min}({\AA)}^{-1} \leq \poly\left(n \right),
\]
and
\begin{align*}
OPT \geq \norm{\xx^*}_2 \sqrt{m}^{-1} & \geq  \norm{\bb}_2 \sigma_{\max}\left( \AA \right)^{-1} \cdot m^{-1/2}\\
&  \geq \frac{1}{\poly\left( n \right)}.
\end{align*}

\subsection{Affine transformations within the norm}
\label{subsec:AffineInNorm}
Let $\CC$ be a matrix with the same assumptions as $\AA$ and $\dd$ have assumptions similar to $\bb$. Suppose we are minimizing $\norm{\CC\xx - \dd}_p$ instead of $\norm{\xx}_{p}$, i.e.,
\begin{align*}
\min_{\xx} \qquad & \norm{\CC \xx -\dd}_{p}\\
& \AA \xx = \bb.
\end{align*}

\noindent Note that this can be reduced to the following unconstrained problem,
\[
\min_{\xx} \norm{\CC \xx - \dd}_{p}.
\]

\noindent To see this, first find the null space of $\AA$,
as well as a particular solution $\xx_0$ that satisfies $\AA\xx_0 = \bb$. Let the null space of $\AA$ be generated by the matrix $\VV$. Then the space of solutions can be parameterized as
\[
\xx = \xx_{0} + \VV \yy,
\]
for some vector $\yy$.
Now our objective becomes,
\[
\CC \VV \yy + \left( \CC \xx_{0} - \dd \right),
\]
which can be written as,
\[
\min_{\yy} \norm{\CChat \yy - \ddhat}_{p}.
\]
where $\CChat = \CC\VV$ and $\ddhat = \CC \xx_{0} - \dd$. Observe that $\CChat y$ spans the column space of $\CChat$. Decomposing $\dd$ into a linear combination of an orthonormal basis we could combine the part which is in the span of $\CChat$ with $\CChat\yy$. We can thus replace $\CChat$ in the objective with an orthonormal basis $\UU$ of its column space and replace $\ddhat$ by $\gg$,
a vector orthogonal to all columns of $\UU$. Then any vector
\[
\zz = \UU \yy - \gg,
\]
can equivalently be described by the conditions
\begin{align*}
\zz^{\top} \ddhat_{\perp} & = \norm{\gg}_2^2,\\
\zz^{\top} \vv & = 0, \qquad \forall \vv~s.t.~\UU\vv = 0, \gg^{\top} \vv = 0.
\end{align*}
For the last condition, it suffices to generate an orthonormal basis of the null space of $\UU$.
So the problem can be written as a linear constraint on $\zz$ instead.

\subsection{$1 < p < 2$}
\label{subsec:SmallP}
In case $1 < p < 2,$ we instead solve the dual problem:
\begin{align*}
  \max_{\yy} \qquad & \bb^{\top} \yy\\
              & \norm{\AA^{\top} \yy}_{q} \leq 1,
\end{align*}
for $q  = \frac{p}{p-1}> 2$.
We can rescale the above problem to the equivalent $q$-norm ball-constrained projection problem,
\begin{align*}
  \min_{\yy} \qquad & \norm{\AA^{\top} \yy}_{q}\\
                    & \bb^{\top} \yy = 1,
\end{align*}
where the goal is to check whether the optimum is less than $1$.
This problem is covered by the problem introduced
in Section~\ref{subsec:AffineInNorm} and can thus be solved to high accuracy in the desired time.

It remains is to transform a nearly-optimal solution $\yy$
of this $q$-norm ball-constrained projection problem
to a nearly-optimal solution $\xx$ of the original
subspace $p$-norm minimization problem.
Since both of these problems' solutions are invariant
under scalings to $\AA$ or $\bb$,
we may also assume that the optimum is at most $1$.

\begin{lemma}
\label{lem:GradAlign}
If the optimum of
\begin{align*}
  \max \qquad & \bb^{\top} \yy\\
              & \norm{\AA^{\top} \yy}_{q} \leq 1,
\end{align*}
is at most $1$, and we have some $\yy$ such that
\begin{align*}
  \bb^{\top} \yy & \geq 1 - \delta\\
  \norm{\AA^{\top} \yy}_{q} & = 1,
\end{align*}
then the gradient of $\norm{\AA^{\top} \yy}_q$,
\[
\nabla
=
\AA \sgn{ \AA^{\top} \yy } \left( \AA^{\top} \yy \right)^{q-1},
\]
satisfies
\[
\norm{\nabla - \bb}_2 \leq \delta \poly(n).
\]
\end{lemma}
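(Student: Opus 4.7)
My plan is to combine two observations: at the exact dual optimum the stated gradient equals $\bb$ up to a scalar close to $1$, and near-optimality of $\yy$ forces its gradient to be close to that at the exact optimizer via convex-analytic stability.

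\textbf{Step 1 (KKT at the optimum).} Let $\yy^\star$ be an exact maximizer and $\mathrm{OPT} = \bb^\top \yy^\star \in [1-\delta, 1]$. Assuming $\bb \neq 0$, the constraint $\|\AA^\top \yy^\star\|_q = 1$ is active. Since the $q$-norm is $1$-homogeneous, Euler's identity gives $\yy^{\star\top}\nabla^\star = \|\AA^\top \yy^\star\|_q = 1$, where $\nabla^\star := \AA\,\sgn{\AA^\top\yy^\star}\,|\AA^\top\yy^\star|^{q-1}$. Combined with KKT stationarity $\bb = \mu\,\nabla^\star$, this forces $\mu = \bb^\top \yy^\star = \mathrm{OPT}$ and hence $\nabla^\star = \bb/\mathrm{OPT}$. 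Therefore $\|\nabla^\star - \bb\|_2 = \frac{1-\mathrm{OPT}}{\mathrm{OPT}}\|\bb\|_2 \le O(\delta)\|\bb\|_2 \le \delta \cdot \mathrm{poly}(n)$ by the standing assumption $\|\bb\|_2 \le \mathrm{poly}(n)$.

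\textbf{Step 2 (Dual-to-dual stability).} The remaining task is to bound $\|\nabla(\yy) - \nabla(\yy^\star)\|_2$. I would apply uniform convexity of $\|\cdot\|_q^q$ for $q \ge 2$ to $\AA^\top \yy$ and $\AA^\top \yy^\star$: the midpoint satisfies $\bb^\top \tfrac{\yy + \yy^\star}{2} \ge \mathrm{OPT} - \delta/2$, which by optimality of $\mathrm{OPT}$ forces $\|\AA^\top (\yy+\yy^\star)/2\|_q \ge 1 - O(\delta)$, and combined with $\|\AA^\top\yy\|_q = \|\AA^\top\yy^\star\|_q = 1$ yields $\|\AA^\top(\yy - \yy^\star)\|_q^q \le O_q(\delta)$. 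Converting via $\|\cdot\|_2 \le m^{1/2 - 1/q}\|\cdot\|_q$ and using that $t \mapsto \sgn(t)|t|^{q-1}$ has derivative $(q-1)|t|^{q-2} \le q-1$ on $[-1,1]$ (applicable since $\|\AA^\top \yy\|_\infty \le \|\AA^\top \yy\|_q = 1$), I obtain $\|\nabla(\yy) - \nabla(\yy^\star)\|_2 \le \|\AA\|_{\mathrm{op}}\cdot(q-1)\cdot \|\AA^\top(\yy-\yy^\star)\|_2 \le \mathrm{poly}(n)\cdot \delta^{1/q}$. The triangle inequality then yields the claimed bound in the regime where $\delta$ is polynomially small, so that $\delta^{1/q}$ can be absorbed into the $\delta \cdot \mathrm{poly}(n)$ form (since this is the regime in which the lemma is invoked---to prepare the output of the dual solver for the primal reconstruction).

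\textbf{Main obstacle.} The subtle point is the exponent produced in Step 2: uniform convexity of $\|\cdot\|_q^q$ for $q > 2$ yields a $\delta^{1/q}$ rather than linear $\delta$ dependence. To recover the stated $\delta \cdot \mathrm{poly}(n)$ form in the strongest sense, one would need either (i) a sharper Hölder-equality stability argument exploiting the primal-dual pair---view $\ss(\yy) := \sgn{\AA^\top\yy}|\AA^\top\yy|^{q-1}$ as a primal candidate with $\|\ss\|_p = 1$ and $\AA\ss = \nabla(\yy)$, and compare it directly to the primal optimizer using strict convexity of the $p$-norm ball---or (ii) an implicit rescaling of $\delta$ by polynomial factors when invoking the lemma, which suffices for the downstream reduction in Section~\ref{subsec:SmallP}.
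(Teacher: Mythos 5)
Your approach is genuinely different from the paper's, and the comparison is instructive. You reason through the exact maximizer $\yy^\star$: first establish $\nabla^\star = \bb/\mathrm{OPT}$ via KKT stationarity and Euler's identity, and then bound $\|\nabla(\yy) - \nabla^\star\|_2$ using Clarkson-type uniform convexity of $\|\cdot\|_q^q$ applied to the near-optimal midpoint. The paper never introduces $\yy^\star$. Instead it runs a direct first-order descent argument on the given $\yy$: if $\Delta := \nabla - \bb$ has $\|\Delta\|_2^2 \geq \eps$, then stepping to $\hat{\yy} = \yy - \theta \Delta$ with a small $\theta$, using Lemma~\ref{lem:LocalApprox} to expand $\|\AA^\top \hat{\yy}\|_q^q$, and exploiting the \emph{quadratic} scaling $\gamma_q(t, \theta x) \leq \theta^2 \gamma_q(t,x)$ from Lemma~\ref{lem:Rescaling}, one finds a feasible rescaling of $\hat{\yy}$ with objective $> 1$ unless $\theta\eps \leq 2\delta$; choosing $\theta = \eps/(2 p(n))$ then yields the bound.

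The decisive technical difference is exactly the obstacle you flag. Your uniform-convexity route compares $\AA^\top\yy$ to the possibly distant $\AA^\top\yy^\star$, so the modulus of convexity of $\|\cdot\|_q^q$ costs you a $q$-th root and produces $\delta^{1/q}$, which is genuinely weaker than what the paper gets (for $q > 2$, $\delta^{1/q} \gg \sqrt{\delta}$). The paper avoids this because it takes a \emph{tunable} step $\theta$ and keeps the second-order error term in the quadratic regime of $\gamma_q$; one never pays the $q$-th-power penalty. The tradeoff the paper accepts is a $\sqrt{\delta}$-type loss (its displayed conclusion $\|\nabla - \bb\|_2 \leq \delta \cdot q(n)$ is in fact stated loosely — with $q(n) \sim \sqrt{p(n)/\delta}$ the argument actually gives $\|\nabla - \bb\|_2^2 \lesssim \sqrt{\delta \cdot p(n)}$, i.e.\ roughly $\|\nabla - \bb\|_2 \lesssim \delta^{1/4}\poly(n)$), but this is still stronger than your $\delta^{1/q}$ and, as both you and the paper observe, any such polynomial loss is absorbed by choosing $\delta = 1/\poly(n)$ sufficiently small in the downstream reduction. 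Your Step~1 is clean and the KKT computation $\nabla^\star = \bb/\mathrm{OPT}$ is correct, but it carries the extra burden of assuming a maximizer $\yy^\star$ exists (true here since $\bb$ lies in the column space of $\AA$, but a hypothesis the paper's self-contained perturbation argument avoids entirely). If you want the paper's sharper exponent within your framework, replace Step~2's global comparison to $\yy^\star$ with a small local step $\yy - \theta\Delta$ and bound the change in $\|\AA^\top \cdot\|_q^q$ via the $\gamma_q$ expansion rather than via Clarkson.
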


\begin{proof}
Let $\Delta = \nabla -\bb$ and $p(n)$ be a polynomial such that $\gamma_q \left(\abs{\AA^{\top}\yy}, \AA^{\top}(\nabla - b) \right) \leq p(n)$. By the assumption of $\AA$ and $\bb$ being $\poly(n)$ bounded, the above $\gamma$ function is polynomially bounded. Let $q(n)$ be a polynomial in $n$ such that, $q(n) \geq \sqrt{\frac{4p(n)}{\delta}}$. Suppose, 
\[
\norm{\nabla - \bb}_2^2 \geq \eps >  \delta q(n).
\] 
This gives us, 
\[
\Delta^{\top} \nabla \geq \Delta^{\top} \bb + \eps .
\]

\noindent Now consider the solution
\[
\yyhat\left( \theta \right) \leftarrow \yy - \theta \Delta,
\]
for step size $\theta = \eps/2p(n)$.
Lemma~\ref{lem:LocalApprox} and Lemma \ref{lem:Rescaling} gives
\begin{align*}
\norm{\AA^{\top} \yyhat}_q^q &\leq 1 - \theta \Delta^{\top} \nabla + \gamma_q \left(\abs{\AA^{\top}\yy}, \theta \AA^{\top} \Delta \right)\\
& \leq 1 - \theta \Delta^{\top} \nabla + \theta^2 \gamma_q \left(\abs{\AA^{\top}\yy}, \AA^{\top} \Delta \right)\\
& \leq 1 -  \theta \Delta^{\top} \bb - \theta \epsilon + \theta^2 p(n) \\
& = 1 - \theta  \Delta^{\top} \bb - \frac{ \theta \epsilon}{2}.
\end{align*}
We can scale the solution $\yyhat$ up by a factor of
$1 / (1 - \theta \Delta^{\top} \bb - \theta \epsilon/2)$
to get a solution with objective value
\[
\frac{1 - \delta - \theta \Delta^{\top}\bb}
{1 - \theta \Delta^{\top} \bb - \theta \epsilon/2}.
\]
But by the assumption of $1$ being the optimum, this cannot
exceed $1$, so we get
\[
1 - \delta - \theta \Delta^{\top}\bb
\leq
1 - \theta \Delta^{\top} \bb - \theta \epsilon/2,
\]
or
\[
\theta \epsilon \leq 2 \delta,
\]
which combined with the choice of $\theta $ gives $\eps <  \delta q(n) $ which is a contradiction. So we much have, $ \norm{\nabla - \bb}_2 \leq \delta q(n) \leq \delta \poly(n)$ 
\end{proof}

\noindent This means once $\delta \leq 1 / \poly(n)$,
the solution created from the gradient
\[
\xxhat
\leftarrow 
\sgn{ \AA^{\top} \yy } \left( \AA^{\top} \yy \right)^{q-1},
\]
satisfies
\[
\norm{\AA \xxhat - \bb}_2^2
\leq
\delta \poly\left( n \right).
\]
Also, because $\sigma_{\min}(\AA) \geq 1 / \poly(n)$,
we can create a solution $\xxtil$ from $\xxhat$
by doing a least squares projection on this difference.
This gives:
\[
\AA \xxtil = \bb,
\]
and
\[
\norm{\xx - \xxhat}
\leq
\delta \poly\left(n \right) \sigma_{\min}\left( \AA \right)
\leq
\delta \poly\left( n \right).
\]

\noindent Furthermore, note that because
\[
\left( z^{q - 1} \right)^{p}
= z^{\left( q - 1 \right) \left( p - 1\right) + q - 1}
= z^{q},
\]
we have
\[
\norm{\nabla}_{p}^{p}
=
\norm{\AA \sgn{ \AA^{\top} \yy } \left( \AA^{\top} \yy \right)^{q-1}}_{p}^{p}
= \norm{\AA^{\top} \yy}_{q}^{q}
= 1,
\]
so
\[
\norm{\xx}_{p}
\leq
\norm{\nabla}_{p} + \norm{\xx - \nabla}_{p}
\leq
1 + \delta \poly\left( n \right).
\]

\noindent Thus, for sufficiently small $\delta$, we can get high accuracy
answer to the $p$-norm problem as well.

\section{$p$-Norm Optimization on Graphs}
\label{sec:Graphs}

In this section we discuss the performance of our algorithms on
graphs.  Here instead of invoking general linear algebraic routines,
we instead invoke Laplacian solvers, which provide $1/\poly(n)$
accuracy solutions to Laplacian linear equations in nearly-linear
($\Otil(m)$) time~\cite{SpielmanTengSolver:journal, KoutisMP10,
  KoutisMP11, KelnerOSZ13, CohenKMPPRX14, PengS14, KyngLPSS16,
  KyngS16}, and the current best running time is
$O(m \log^{\nfrac{1}{2}} n \log \nfrac{1}{\eps})$ (up to
${\polyloglog}\ n$ factors)~\cite{CohenKMPPRX14}.

Such matrices can be succinctly described as
\[
\AA^{\top} \diag{\rr}^{-1} \AA
\]
where $\rr$ is the vector of resistances just as
provided in the Oracle from Algorithm~\ref{alg:oracle},
but $\AA$ is the edge-vertex
incidence matrix: with each row corresponding to an edge,
each column corresponding to a vertex, and entries given by:
\[
\AA_{e, v}
= \begin{cases}
1 & \text{if $v$ is the head of $e$,}\\
-1 & \text{if $v$ is the tail of $e$,}\\
0 & \text{otherwise}.
\end{cases}
\]

Throughout this entire section, we will use $\AA$ to
refer to the edge-vertex incidence matrix of a graph.

The main difficult of reducing to Laplacian solvers
is that we can no longer manipulate general matrices.
Specifically, instead of directly working with the
normal matrices as in Section~\ref{subsec:AffineInNorm},
we need to implicitly track the subspaces,
and optimize quadratics on them.
As a result, we need to tailor such reductions towards
the specific problems.

\subsection{$p$-Norm Flows}
\label{sec:Flow}

This is closest to the general regression problem that we study:
\begin{align*}
\min_{\xx} \qquad & \norm{\xx}_{p}\\
& \AA^{\top} \xx = \bb
\end{align*}
except with $\AA$ as an edge vertex incidence matrix.

When $p \geq 2$,
the residual problem then has an extra condition of
\[
\gg^{\top} \ff = \alpha,
\]
which means we need to solve the problem of
\begin{align*}
\min_{\xx} \qquad & \sum_{e} \rr_e \xx_e^2 \\
& \AA^{\top} \xx = \bb\\
& \gg^{\top} \xx = \alpha
\end{align*}
which becomes a solve in the system of linear equations
\begin{align*}
\left[ \AA; \xx \right]^{\top} \diag{\rr}^{-1} \left[ \AA; \xx \right]
=
\left[
\begin{array}{cc}
\AA^{\top} \diag{\rr}^{-1} \AA & \AA^{\top} \diag{\rr}^{-1} \xx\\
\xx^{\top} \diag{\rr}^{-1} \AA & \xx^{\top} \rr^{-1} \xx^{\top}
\end{array}
\right].
\end{align*}
This matrix is a rank $3$ perturbation to the graph
Laplacian $\AA^{\top} \diag{\rr}^{-1} \AA$, and can thus
be solved in $\Otil(m)$ time.
A more detailed analysis of a generalization of this case
can be found in Appendix B of~\cite{DaitchS08}.

When $1 < p < 2$, we invoke the dualization from
Section~\ref{subsec:SmallP} to obtain 
\begin{align*}
\min_{\yy} & \norm{\AA \yy}_{q}\\
& \bb^{\top} \yy = \alpha
\end{align*}
and if we retain the form of $\AA^{\top} \yy$,
but transfer the gradient over to $\yy$, the problem
that we get is:
\begin{align*}
\min_{\yy} \qquad
& \yy^{\top} \AA^{\top} \diag{\rr}^{-1} \AA \yy \\
& \bb^{\top} \yy = \alpha\\
& \gg^{\top} \yy = \beta
\end{align*}
The two additional linear constraints can removed
by writing a variable of $\yy$ as a linear combination of the rest
(as well as $\alpha$).
This then gives an unconstrained minimization problem
on a subset of entries $S$,
\[
\min_{\yy_{S}} \yy_{S}^{\top}
  \LL_{\left[S, S\right]} \yy_{S} + \cc^{\top} \yy_{S}
\]
where $\LL_{[S, S]}$ is a minor of the Laplacian above,
and this solution is obtained by solving for
\[
\yy_{S} \leftarrow \frac{1}{2} \LL_{\left[S, S\right]}^{\dag} \cc.
\]

\subsection{Lipschitz Learning and Graph Labelling}
\label{sec:Label}

This problem asks to label the vertices of a graph,
with a set $T$ fixed to the vector $\ss$,
while minimizing the $p$-norm
difference between neighbours.
It can be written as
\begin{align*}
\min_{\xx: \xx|T = \ss} \qquad & \norm{\AA \xx}_{p}^{p}
\end{align*}
where $\AA$ is the edge-vertex incidence matrix.

In the case of $p \geq 2$,
the residue problem becomes
\begin{align*}
\min_{\xx: \xx|T = \ss} \qquad &
  \xx^{\top} \AA^{\top} \diag{\rr}^{-1} \AA \xx\\
& \gg^{\top} \xx = \beta
\end{align*}

Here the gradient condition can be handled in the same
way as with the voltage problem above: by fixing one
additional entry of $V \setminus T$, and then solving
an unconstrained quadratic minimization problem on the
rest of the variables.

In the case of $1 < p < 2$,
we first write down the problem as an unconstrained
minimization problem on $V \setminus T$:
\begin{align*}
\min_{\xx_{V \setminus T}} \qquad &
  \norm{\AA_{:, V \setminus T} \xx_{V \setminus T} - \AA_{:, T} \ss }_{p}.
\end{align*}
Let $\bb = \AA_{:, T} \ss$ and taking the dual gives:
\begin{align*}
\max \qquad & \bb^{\top} \yy\\
& \norm{\yy}_{q} \leq 1\\
& \left( \AA^{\top} \right)_{V \setminus T, :} \yy = 0
\end{align*}
That is, solving for a small $q$-norm flow that maximizes
the cost against $\bb$, while also having $0$ residues
at the vertices not in $T$.

As $q > 2$, we can now invoke our main algorithm on $\yy$.
Upon binary search, and taking residual problems,
we get $\ell_2$ problems of the form
\begin{align*}
\min \qquad & \sum_{e} \rr_e \yy_e^2\\
& \left( \AA^{\top} \right)_{V \setminus T, :} \yy = 0\\
& \bb^{\top} \yy = \alpha\\
& \gg^{\top} \yy = \beta,
\end{align*}
which is solved by another low rank perturbation
on a minor of the graph Laplacian.

{\small \printbibliography}

\appendix

\section{Missing Proofs}
\label{sec:MissingProofs}

\subsection{Proofs from Section \ref{sec:Prelims}}
\label{sec:ProofsSec3}

\Gamma*
\begin{proof}
\begin{tight_enumerate}
\item We have $p \geq 2$. When $\abs{x} \leq t$, 
\[\gammap(t,x) = \frac{p}{2}t^{p-2}x^2 \geq \frac{p}{2}\abs{x}^p \geq \abs{x}^p.\]
Otherwise, 
\[
\gammap(t,x) = \abs{x}^p + \left(\frac{p}{2}-1\right)t^p \geq \abs{x}^p.
\]
Let $s(x)= \frac{p}{2}t^{p-2}x^2$. At $|x| = t$ we have $\gammap(t,t) = s(t)$. Now, $\gammap'(t,x) = p \abs{x}^{p-2}x \geq p t^{p-2}x = s'(x)$. This means that for $x$ negative, $\gammap$ decreases faster than $s$ and for $x$ positive, $\gammap$ increases faster than $s$. The two functions are equal in the range $-t \leq x \leq t$. Therefore, $\gammap(t,x) \geq s(x)$ for all $x$.

\item
\begin{align*}
\gammap(\lambda t, \lambda x) & =
  \begin{cases}
    \frac{p}{2}(\lambda t)^{p-2}(\lambda x)^2 & \text{if } |x| \leq t,\\
    |\lambda x|^p + (\frac{p}{2} - 1)(\lambda t)^p & \text{otherwise }.
        \end{cases}\\
     & = \begin{cases}
    \lambda^p \frac{p}{2} t^{p-2}x^2 & \text{if } |x| \leq t,\\
   \lambda^p |x|^p + \lambda^p(\frac{p}{2} - 1)t^p & \text{otherwise }.
        \end{cases}\\
        & = \lambda^p \gammap(t,x)
      \end{align*}

\item Taking the derivative of $\gammap(t,x)$ with respect to  $x$ gives,

\[
\frac{d}{dx} \gammap = \begin{cases}
    p t^{p-2}x  & \text{if } |x| \leq t,\\
     p\abs{x}^{p-1} \cdot sign(x)  & \text{otherwise }.
  \end{cases}
\]
The statement clearly follows.
\end{tight_enumerate}
\end{proof}

\Rescaling*
\begin{proof}

\begin{align*}
x \frac{\gammap'(t,x)}{\gammap(t,x)} & = 
\begin{cases}
    2 & \text{if } |x| \leq t,\\
   \frac{p \abs{x}^{p}}{\abs{x}^p + \left(\frac{p}{2} - 1\right)t^p} & \text{otherwise }.
        \end{cases}
\end{align*}
Now, when $\abs{x} \geq t$, we have the following. When $p \leq 2$,
\[
 p = \frac{p \abs{x}^{p}}{\abs{x}^p} \leq \frac{p \abs{x}^{p}}{\abs{x}^p + \left(\frac{p}{2} - 1\right)t^p} \leq  \frac{p \abs{x}^{p}}{ \frac{p}{2}\abs{x}^p} = 2
\]
and when $p \geq 2$,
\[
 2 = \frac{p \abs{x}^{p}}{\frac{p}{2}\abs{x}^p}  \leq  \frac{p \abs{x}^{p}}{\abs{x}^p + \left(\frac{p}{2} - 1\right)t^p} \leq \frac{p \abs{x}^{p}}{\abs{x}^p} = \frac{p}{x}  \]
The above computations imply that,
\[ {\min\{2,p\}} \leq x \frac{\gammap'(t,x)}{\gammap(t,x)} \leq \max \{2,p\}.\]
Let $\lambda \geq 1$ and $x \geq 0$. Integrating both sides of the right inequality gives,
\begin{align*}
&\int_x^{\lambda x} \frac{\gammap'(t,x)}{\gammap(t,x)} dx  \leq \int_x^{\lambda x} \frac{\max \{2,p\}}{x} dx\\
\Leftrightarrow & \log \left(\frac{\gammap(t,\lambda x)}{\gammap(t,x)} \right)  \leq \log \left(\lambda \right)^{\max \{2,p\}}\\
\Leftrightarrow & \gammap(t,\lambda x)  \leq  \lambda^{\max \{2,p\}} \gammap(t,x).
\end{align*}
Integrating both sides of the left inequality from $x$ to $\lambda x$ gives the required left inequality.
Now, let $\lambda \leq 1$. Integrating both sides of the left inequality gives,
\begin{align*}
&\int_{\lambda x}^{x} \frac{\gammap'(t,x)}{\gammap(t,x)} dx  \geq \int_{\lambda  x}^{x} \frac{\min \{2,p\}}{x} dx\\
\Leftrightarrow & \log \left(\frac{\gammap(t,x)}{\gammap(t,\lambda x)} \right)  \geq \log \left(\frac{1}{\lambda} \right)^{\min \{2,p\}}\\
\Leftrightarrow & \gammap(t,\lambda x)  \leq  \lambda^{\min \{2,p\}} \gammap(t,x).
\end{align*}
Similar to the previous case, integrating both sides of the right inequality from $\lambda x$ to $x$ gives the required left inequality. When $x\leq 0$, the direction of the inequality changes but it gets reversed again after putting limits, since we integrate from $\lambda x$ to $x$ when $\lambda \geq 1$ and $x$ to $\lambda x$ when $\lambda \leq 1$.
We thus have, 
\[
 \min\{\lambda^2,\lambda^p\} \gammap(t,\Delta)\leq  \gammap(t,\lambda x)  \leq  \max\{\lambda^2,\lambda^p\} \gammap(t,x)
\]
\end{proof}

\FirstOrderAdditive*
\begin{proof}
  Since $\gammap(t,x) = \gammap(t, \abs{x}),$ and $\gammap(t, x)$ is
  increasing in $x,$ it suffices to prove the claim for
  $x, \Delta \ge 0.$ We have,
  \begin{align*}
    \gammap'(t, x+z) - \gammap'(t, x)
    & = p\max\{t,\  x+z\}^{p-2}(x+z) - p\max\{t,\ \abs{x}\}^{p-2}x \\
    & = p\max\{t^{p-2}(x+z) -  \max\{t,\ \abs{x}\}^{p-2}x,\ \\
    & \qquad\qquad (x+z)^{p-1} -
      \max\{t,\ \abs{x}\}^{p-2}x \} \\
    & \le p\max\{t^{p-2}(x+z) -  t^{p-2}x,\  (x+z)^{p-1} -  x^{p-1} \}
    && \text{(Since $p \ge 2$)}\\
    & \le p\max\{t^{p-2}z,\  (p-1)(x+z)^{p-2}z\}
    && \text{(Using Rolle's   theorem)} \\
    & \le p \max\{t^{p-2}z,\  p(2x)^{p-2}z,\  p(2z)^{p-2}z\}\\
    & \le p^{2}2^{p-2} \max\{t,x,\Delta\}^{p-2} z
    && \text{(Since $z \le \Delta$)}
  \end{align*}
Integrating over $z \in [0,\Delta],$ we get,
\begin{align*}
  \gammap(x+\Delta) - \gammap(x) - \Delta \gammap'(x)
  & \le p^{2} 2^{p-3} \max\{t, \abs{x}, \abs{\Delta}\}^{p-2} \Delta^{2}.
\end{align*}
\end{proof}

\subsection{Proofs from Section \ref{sec:MainAlgo}}
\label{sec:ProofsSec4}

\BoundInitial*

\begin{proof} We first show the following two lemmas.
\begin{lemma}\label{smallalpha}
For $|\alpha| \leq1 $ and $p \geq 1$, 
\begin{equation*}
1 + \alpha p + \frac{(p-1)}{4} \alpha^2 \leq (1+\alpha)^p \leq 1 + \alpha p + p2^{p-1} \alpha^2.
\end{equation*}
\end{lemma}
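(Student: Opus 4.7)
The plan is to prove both inequalities via the integral form of the mean value theorem,
\[(1+\alpha)^p = 1 + p\alpha + p\int_0^{\alpha}\bigl[(1+t)^{p-1} - 1\bigr]\,dt,\]
and then estimate $(1+t)^{p-1}-1$ by a linear function of $t$. The function $t \mapsto (1+t)^{p-1}$ is convex for $p \ge 2$ and concave for $1 \le p \le 2$, so one gets a clean one-sided bound by comparing it either to its tangent at $t=0$ or to the secant joining $(0,1)$ with one of the endpoints $(1,2^{p-1})$ or $(-1,0)$. There are four subcases in total, determined by the sign of $\alpha$ and by whether $p \ge 2$ or $p \in [1,2]$, and each yields a quadratic estimate in $\alpha$ after a single integration.

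For the upper bound with $\alpha \in [0,1]$ and $p \geq 2$, I would take the secant upper bound $(1+t)^{p-1} \le 1 + (2^{p-1}-1)t$ and integrate to obtain $(1+\alpha)^p - 1 - p\alpha \le \tfrac{p(2^{p-1}-1)}{2}\alpha^2 \le p\cdot 2^{p-1}\alpha^2$. For $p \in [1,2]$ and $\alpha \ge 0$, concavity gives the tangent upper bound $(1+t)^{p-1} \le 1+(p-1)t$, which integrates to $\tfrac{p(p-1)}{2}\alpha^2 \le p\cdot 2^{p-1}\alpha^2$. For $\alpha \in [-1,0]$ I rewrite the Taylor integral as $p\int_{\alpha}^0 [1-(1+t)^{p-1}]\,dt$ and upper bound $1-(1+t)^{p-1}$ using either the convex tangent $(1+t)^{p-1} \ge 1+(p-1)t$ (when $p \ge 2$, giving $1-(1+t)^{p-1} \le (p-1)|t|$), or the elementary bound $x^{p-1} \ge x$ valid for $x \in [0,1]$ and $p \in [1,2]$ (giving $1-(1+t)^{p-1} \le |t|$); each produces a bound of the desired form after integration.

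For the lower bound I reverse each inequality: convex functions lie above their tangents and concave functions lie above their chords. In the representative case $\alpha \in [0,1]$, $p \ge 2$, the tangent at $0$ gives $(1+t)^{p-1} \ge 1+(p-1)t$, hence $(1+\alpha)^p - 1 - p\alpha \ge \tfrac{p(p-1)}{2}\alpha^2 \ge \tfrac{p-1}{4}\alpha^2$ using only $p \ge \tfrac12$. The three remaining subcases are handled analogously and produce coefficients such as $\tfrac{p}{2}\alpha^2$ (from the chord on $[-1,0]$ for $p \ge 2$), $\tfrac{p(p-1)}{2}\alpha^2$ (from the tangent on $[-1,0]$ for $p \in [1,2]$), and $\tfrac{p(2^{p-1}-1)}{2}\alpha^2$ (from the chord on $[0,1]$ for $p \in [1,2]$).

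The main obstacle I anticipate is checking uniformly that these quadratic coefficients all dominate $\tfrac{p-1}{4}$. The tightest case is $p \in [1,2]$ with $\alpha \ge 0$, where one must verify $\tfrac{p(2^{p-1}-1)}{2} \ge \tfrac{p-1}{4}$; this reduces to the elementary inequality $2^{p-1} - 1 \ge \tfrac{p-1}{2}$, which follows by convexity of $p \mapsto 2^{p-1}$ together with $\ln 2 > \tfrac12$.
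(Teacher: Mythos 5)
Your proof is correct, and it takes a genuinely different route than the paper. The paper defines $h(\alpha)=(1+\alpha)^p-1-p\alpha-\tfrac{p-1}{4}\alpha^2$ and $s(\alpha)=1+p\alpha+p2^{p-1}\alpha^2-(1+\alpha)^p$ and argues about the sign of $h'$ and $s'$ case-by-case (sometimes via $h''$, sometimes via ad hoc manipulations like the mean value theorem applied to $h'$) to conclude that both functions attain their minimum, namely $0$, at $\alpha=0$. Your approach instead writes the remainder as $p\int_0^\alpha[(1+t)^{p-1}-1]\,dt$ and systematically estimates $(1+t)^{p-1}$ by tangent lines (at $t=0$) and chords (over $[0,1]$ or $[-1,0]$), exploiting the fact that $t\mapsto(1+t)^{p-1}$ is convex for $p\ge2$ and concave for $p\in[1,2]$. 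This organizes the four subcases uniformly under one principle, which is cleaner than the paper's derivative-sign bookkeeping; the cost is that you must separately verify the final inequalities among the quadratic coefficients (the only non-immediate one being $p(2^{p-1}-1)/2\ge(p-1)/4$, which you correctly dispatch by the tangent bound $2^{p-1}\ge 1+(\ln 2)(p-1)$ and $\ln 2>1/2$). All subcases you sketch check out: in particular the chord bound $(1+t)^{p-1}\le 1+t$ on $[-1,0]$ for $p\ge2$, and $(1+t)^{p-1}\ge 1+t$ there for $p\in[1,2]$, give exactly the asserted coefficients $\tfrac{p}{2}$ and $\tfrac{p(p-1)}{2}$, both of which dominate $\tfrac{p-1}{4}$ for $p\ge1$.
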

\begin{proof}
Let us first show the left inequality, i.e. $1 + \alpha p + \frac{p-1}{4} \alpha^2 \leq (1+\alpha)^p $. Define the following function,
\begin{equation*}
h(\alpha) = (1+\alpha)^p - 1 - \alpha p - \frac{p-1}{4} \alpha^2.
\end{equation*}
When $\alpha = 1,-1$, $h(\alpha) \geq 0$. The derivative of $h$ with respect to $\alpha$ is, $h'(\alpha) = p(1+ \alpha)^{p-1} - p- \frac{(p-1)}{2} \alpha $. \\
When $p\geq 2$ and $-1 <\alpha <1$, 
\begin{align*}
&\left( (1+ \alpha)^{p-2} - 1 \right) sign(\alpha)  \geq 0 \\
&\Rightarrow \left( (1+ \alpha)^{p-1} - (1+\alpha) \right) sign(\alpha) \geq 0\\
& \Rightarrow \left( p(1+ \alpha)^{p-1} - p - p\alpha \right) sign(\alpha) \geq 0\\
&\Rightarrow \left( p(1+ \alpha)^{p-1} - p - \frac{(p-1)}{2}\alpha \right) sign(\alpha) \geq 0
\end{align*}
For the last inequality, note that when the product is positive, either both terms are positive or both terms are negative. When both terms are positive, subtracting $(p-1)/2$ instead of $p$ gives a larger positive quantity. When both terms are negative then subtracting $(p-1)/2$ instead of $p$ gives only a smaller quantity, so the inequality holds. This shows that $h'(\alpha) sign(\alpha) \geq 0$, which means minimum of $h$ is at $h(0)=0$. Next let us see what happens when $p \leq 2$ and $\abs{\alpha} <1$. 

\[
h''(\alpha) = p(p-1)(1+\alpha)^{p-2} - \frac{p-1}{2} = (p-1) \left(\frac{p}{(1+\alpha)^{2-p}} - \frac{1}{2}\right) \geq 0
\]
This implies that $h'(\alpha)$ is an increasing function of $\alpha$ and $\alpha_0$ for which $h'(\alpha_0) = 0$ is where $h$ attains its minimum value. The only point where $h'$ is 0 is $\alpha_0 = 0$. This implies $h(\alpha) \geq h(0) = 0$. This concludes the proof of the left inequality. For the right inequality, define:
\begin{equation*}
s(\alpha) = 1 + \alpha p + p2^{p-1} \alpha^2 - (1+\alpha)^p.
\end{equation*}
Note that $s(0) = 0$ and $s(1),s(-1) \geq 0$. We have, 
\[
s'(\alpha) = p + p2^{p} \alpha - p(1+\alpha)^{p-1}
\]
Using the mean value theorem for $p\geq2$ and $\alpha <0$,
\begin{align*} 
(1+\alpha)^{p-1} - 1 &= (p-1) \alpha \cdot (1+z)^{p-2} , z\in (\alpha,0)\\
& \geq  \alpha  2^{p}.
\end{align*} 
This implies that $s'(\alpha) \leq 0$ for negative alpha. When $ 1 > \alpha > 0$, using the convexity of $f(x) = (1+x)^{p-1}$ for $p > 2$, we get,
\[
f(\alpha \cdot 1 + (1-\alpha) \cdot 0) \leq \alpha f(1) + (1-\alpha) f(0)
\]
which gives us 
\[
(1+\alpha)^{p-1} \leq \alpha 2^{p-1} + 1.
\]
This implies, $s'(\alpha)\geq 0$ for positive $\alpha$. The function $s$ is thus increasing for positive $\alpha$ and decreasing for negative $\alpha$, so it attains the minimum at $0$ which is $s(0) = 0$ giving us $s(\alpha) \geq 0$.  We now look at the case $p\leq2$. We have 
\[ 
(1+\alpha)^{p-1}sign(\alpha) \leq (1+\alpha)sign(\alpha).
\]
Using this, we get, $s'(\alpha) sign(\alpha) \geq p|\alpha|  (2^{p} - 1) \geq 0$ which says $s'(\alpha)$ is positive for $\alpha$ positive and negative for $\alpha$ negative. Thus the minima of $s$ is at 0 which is $0$. So $s(\alpha) \geq0$ in this range too.

\end{proof}

\begin{lemma}\label{beta}
For $\beta \geq 1$ and $p \ge 1$, $(\beta -1 )^{p-1} +1 \geq  \frac{1}{2^p} \beta^{p-1}$.
\end{lemma}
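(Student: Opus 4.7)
The plan is to split into two cases according to whether $\beta$ is close to $1$ or far from $1$, with the threshold at $\beta = 2$. In each range one of the two terms on the left-hand side (either $(\beta-1)^{p-1}$ or the constant $1$) will dominate the right-hand side on its own.

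First I would handle the case $\beta \geq 2$. Here $\beta - 1 \geq \beta/2$, and since $p - 1 \geq 0$ the map $x \mapsto x^{p-1}$ is monotone non-decreasing on $[0,\infty)$, so $(\beta-1)^{p-1} \geq (\beta/2)^{p-1} = \beta^{p-1}/2^{p-1} \geq \beta^{p-1}/2^p$. Adding the nonnegative quantity $1$ to the left only strengthens the bound.

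Next I would handle the case $1 \leq \beta \leq 2$. Here $\beta^{p-1} \leq 2^{p-1}$ (again using $p-1 \geq 0$), so $\beta^{p-1}/2^p \leq 1/2$. On the other hand, $(\beta-1)^{p-1} + 1 \geq 1$ (with the convention $0^0 = 1$ if $p = 1$ and $\beta = 1$), which is already at least $1/2$.

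There is no real obstacle here; the only mild subtlety is making sure the boundary value $\beta = 1$ is covered, which is automatic since the constant term $1$ already exceeds $1/2^p$ for $p \geq 1$. Combining the two cases gives the claimed inequality on the full range $\beta \geq 1$.
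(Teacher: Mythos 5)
Your proof is correct and follows essentially the same strategy as the paper's: split at $\beta = 2$, use $\beta - 1 \geq \beta/2$ for $\beta \geq 2$, and use the bound $1$ on the left-hand side for $1 \leq \beta \leq 2$ (the paper compares $1 \geq (\beta/2)^{p-1}$ directly, while you go via the intermediate bound $1/2$ — a cosmetic difference). Your attention to the $0^0$ case at $\beta = 1$, $p = 1$ is a fine point of care the paper omits, though it poses no real obstruction either way.
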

\begin{proof}
$(\beta -1 ) \geq \frac{\beta}{2}$ for $\beta \geq 2$. So the claim clearly holds for $\beta \geq 2$ since $(\beta -1 )^{p-1} \geq \left(\frac{\beta}{2}\right)^{p-1}$. When $1 \leq \beta \leq 2$, $1 \geq \frac{\beta}{2}$, so the claim holds since, $1 \geq \left(\frac{\beta}{2}\right)^{p-1}$
\end{proof}

\noindent We now prove the theorem.

  Let $\Delta = \alpha x$. The term $g \Delta = p |x|^{p-1} sign(x) \cdot \alpha x = \alpha p |x|^{p-1}|x| = \alpha p |x|^p$. Let us first look at the case when $|\alpha| \leq 1$. We want to show, 
\begin{align*}
& |x|^p + \alpha p |x|^p + c\frac{p}{2}|x|^{p-2}|\alpha x|^2  \leq |x+\alpha x|^p \leq |x|^p + \alpha p |x|^p + C\frac{p}{2}|x|^{p-2}|\alpha x|^2 \\
& \iff (1+ \alpha p) + c\frac{p}{2}\alpha^2  \leq (1+\alpha)^p \leq (1 + \alpha p)+ C\frac{p}{2}\alpha^2.
\end{align*}
This follows from Lemma~\ref{smallalpha} and the facts $\frac{cp}{2} \leq \frac{p-1}{4}$ and $\frac{Cp}{2} \geq p2^{p-1}$ . We next look at the case when $|\alpha| \geq 1$.  Now, $\gamma_{|f|}^p (\Delta) = |\Delta|^p + (\frac{p}{2} - 1)|f|^p$. We need to show

\begin{equation*}
|x|^p (1 + \alpha p)+ \frac{|x|^p(p-1)}{p2^p}( |\alpha|^p + \frac{p}{2} - 1) \leq |x|^p|1 + \alpha|^p \leq  |x|^p (1 + \alpha p)+2^p |x|^p( |\alpha|^p + \frac{p}{2} - 1).
\end{equation*}

When $|x| = 0$ it is trivially true. When $|x| \neq 0$, let 
\begin{equation*}
h(\alpha) = |1+\alpha|^p - (1+\alpha p) - \frac{(p-1)}{p2^p}(|\alpha|^p + \frac{p}{2} - 1).
\end{equation*}
Now, taking the derivative with respect to $\alpha$ we get,
\begin{equation*}
h'(\alpha) = p \left( |1+\alpha|^{p-1}sign(\alpha) - 1 - \frac{(p-1)}{p2^p} |\alpha|^{p-1}sign(\alpha)\right).
\end{equation*}
When $\alpha \geq1$ and $p \geq 2$,
\[
h''(\alpha) \geq p(p-1)(1+\alpha)^{p-2} - p\frac{(p-1)}{2^p} \alpha^{p-2} \geq 0.
\] 
So we have $h'(\alpha) \geq h'(1) \geq 0$. When $p <2$, we use the mean value theorem to get,
\begin{align*}
(1+\alpha)^{p-1} - 1 &= (p-1) \alpha (1+z)^{p-2}, z \in (0,\alpha)\\
& \geq (p-1)\alpha (2\alpha)^{p-2} \\
& \geq \frac{p-1}{2} \alpha^{p-1}
\end{align*}
which implies $h'(\alpha) \geq 0$ in this range as well. When $\alpha \leq -1$ it follows from Lemma~\ref{beta} that $h'(\alpha) \leq0$. So the function $h$ is increasing for $\alpha \geq 1$ and decreasing for $\alpha \leq -1$. The minimum value of $h$ is $min \{h(1), h(-1) \} \geq 0$.  It follows that $h(\alpha) \geq 0$ which gives us the left inequality. The other side requires proving,
\begin{equation*}
|1+\alpha|^p \leq 1+\alpha p + 2^p (|\alpha|^p + \frac{p}{2} -1).
\end{equation*}
Define:
\begin{equation*}
s(\alpha) = 1+\alpha p + 2^p (|\alpha|^p + \frac{p}{2} -1) - |1+\alpha|^p.
\end{equation*}
The derivative $s'(\alpha) = p + \left(p 2^p |\alpha|^{p-1} - p|1+\alpha|^{p-1} \right)sign(\alpha)$ is non negative for $\alpha \geq 1$ and non positive for $\alpha \leq -1$.  The minimum value taken by $s$ is $\min\{s(1),s(-1)\} $ which is non negative. This gives us the right inequality.

\end{proof}

\InitialPoint*
\begin{proof}
Let $\xx^{\star}$ give the \opt. We know that, for any $\xx$,
\[
\pnorm{\xx} \leq \norm{\xx}_2 \leq m^{\nfrac{(p-2)}{2p}} \pnorm{\xx}.
\]
This along with the fact $\smallnorm{\xx^{(0)}}_2 \leq \smallnorm{\xx^{\star}}_2$ gives us,
\[
\smallnorm{\xx^{(0)}}_p^p \leq \smallnorm{\xx^{(0)}}_2^p \leq m^{\nfrac{(p-2)}{2}}\smallnorm{\xx^{\star}}_p^p = m^{\nfrac{(p-2)}{2}}\opt.
\]
\end{proof}

\subsection{Proofs from Section \ref{sec:ResidualAlgo}}

\label{sec:ProofsSec5}

\BoundResidual*
\begin{proof}
Let $\xx^{\star}$ denote the optimum solution of \eqref{eq:primal} and $\xx^{(0)}$ be as defined in Definition \ref{start}. We know that for any $\xx$,
\[
\pnorm{\xx} \leq \norm{\xx}_2 \leq m^{\nfrac{(p-2)}{2p}} \pnorm{\xx}.
\]
This along with the fact $\smallnorm{\xx^{(0)}}_2 \leq \smallnorm{\xx^{\star}}_2$ gives us,
\[
\smallnorm{\xx^{(0)}}_p^p \leq \smallnorm{\xx^{(0)}}_2^p \leq m^{\nfrac{(p-2)}{2}}\smallnorm{\xx^{\star}}_p^p.
\]
Now from Lemma \ref{lem:LambdaBound} we have,
\[
\residual(\Delta) \leq \frac{1}{\lambda} (\smallnorm{\xx^{(0)}}_p^p - \smallnorm{\xx^{\star}}_p^p) \leq \frac{\smallnorm{\xx^{(0)}}_p^p}{\lambda}(1 - m^{\nfrac{-(p-2)}{2}}).
\]
Let us assume $\residual(\Delta) \geq \eps \smallnorm{\xx^{\star}}_p^p \geq \eps m^{\nfrac{-(p-2)}{2}} \smallnorm{\xx^{(0)}}_p^p$. If this is not true we already have an $\eps$ approximate solution to our problem. We thus have the following bound on $\residual$,
\[
\eps m^{\nfrac{-(p-2)}{2}} \smallnorm{\xx^{(0)}}_p^p \leq \residual(\Delta) \leq \frac{\smallnorm{\xx^{(0)}}_p^p}{\lambda}.
\]
This gives us that, 
\[2^{\log \left(\frac{\eps\smallnorm{\xx^{(0)}}_p^p}{m^{\nfrac{(p-2)}{2}}}\right)} \leq  \residual(\Delta) \leq 2^{\log \left(\frac{\smallnorm{\xx^{(0)}}_p^p}{\lambda} \right)}.\]
When $p \leq 2$, following a similar proof and using,
\[
\norm{\xx}_2 \leq \pnorm{\xx} \leq m^{\nfrac{(2-p)}{2p}} \norm{\xx}_2,
\]
we get,
\[2^{\log \left(\frac{\eps\smallnorm{\xx^{(0)}}_p^p}{m^{\nfrac{(2-p)}{2}}}\right)} \leq  \residual(\Delta) \leq 2^{\log \left(\frac{\smallnorm{\xx^{(0)}}_p^p}{\lambda} \right)},\]
thus concluding the proof of the lemma.
thus concluding the proof of the lemma.
\end{proof}

\StepProblemSearch*

\begin{proof}
  Assume that the optimum solution to~\eqref{eq:residual},
  $\Delta^{\star}$ satisfies
  \[
    \alpha(\Delta^\star) = 
  \gg^{\top} \Delta^{\star} - \frac{p-1}{p2^p}\gamma_{p} \left( \tt, \Delta^{\star}
  \right) \in \left[2^{i - 1}, 2^{i}\right),
\]
in addition to $\AA\Delta^{\star} = 0.$ Note that we know that the
objective is strictly positive (as 0 is a feasible solution).  Since
$\gammap \ge 0,$ we must have,
\[
  \gg^T \Delta^{\star} \geq 2^{i-1},
\]

Consider scaling $\Delta^{\star}$ by a factor $\lambda > 0.$ Since
$\Delta^{\star}$ is optimal, we must have
\[
  \frac{d}{d\lambda} \left. (\lambda \gg^{\top} \Delta^{\star} -
    \frac{p-1}{p2^p}\gamma_{p}(\tt,\lambda \Delta^{\star})) \right|_{\lambda = 1} = 0.
\]
Now, from Lemma~\ref{lem:Rescaling}, we know that
\[ \frac{\min\{2,p\}}{\lambda}\gamma_{p} (\tt,\lambda \Delta^{\star}) \le
  \frac{d}{d\lambda} \gamma_{p}(\tt,\lambda \Delta^{\star}) \le
  \frac{\max\{2,p\}}{\lambda}\gamma_{p} (\tt,\lambda \Delta^{\star}).
\]
Thus, we get,
\[ \frac{p-1}{p2^p} \min\{2,p\} \gamma_{p} (\tt,\Delta^{\star})
  \le \gg^{\top} \Delta^{\star}
  \le \frac{p-1}{p2^p} \max\{2,p\} \gamma_{p} (\tt,\Delta^{\star}),
\]
\[\frac{p-1}{p2^p} (\min\{2,p\}-1) \gamma_{p} (\tt,\Delta^{\star})
  \le \gg^{\top} \Delta^{\star} - \frac{p-1}{p2^p} \gamma_{p} (\tt,\Delta^{\star})
  \le \frac{p-1}{p2^p}(\max\{2,p\}-1)\gamma_{p} (\tt,\Delta^{\star}).
\]
Thus,
$\gamma_{p} (\tt,\Delta^{\star}) \le \frac{1}{\frac{p-1}{p2^p}\left(\min\{2,p\}-1\right)}2^{i},$
and hence $\gg^{\top} \Delta^{\star} \le
\frac{\max\{2,p\}}{\min\{2,p\} -1}2^{i} =
\max\left\{ p,\nfrac{2}{p-1} \right\} 2^{i}
$.

Now consider the vector $\Delta = \lambda \Delta^{\star},$ where
$ \lambda = \frac{2^{i-1}}{\gg^{\top} \Delta^{\star}}.$ Note that
$\lambda \in \left[ \min\left\{ \nfrac{1}{2p}, \nfrac{(p-1)}{4} \right\}
  , 1 \right].$ We have
\begin{align*}
  \gg^{\top} \Delta
  & = \gg^{\top} \left(  \frac{2^{i-1}}{\gg^{\top}
    \Delta^{\star}}\Delta^{\star} \right) = 2^{i-1} \\
  \gamma_{p}(\tt, \Delta)
  & \le  \gamma_{p}(\tt, \lambda \Delta^{\star})
    \le  \max\{\lambda^{2}, \lambda^{p}\}\gamma_{p}(\tt, \Delta^{\star})
     \le \frac{p2^p}{p-1} 2^{i}.
\end{align*}
Thus, $\Delta$ is a feasible solution to
Program~\eqref{eq:BinarySearchProblems}.
A $\beta$-approximate solution $\Delta(i)$ must be such that,
\begin{align*}
  \AA \Delta(i) & = 0, \\
  \gg^{T} \Delta(i)
  & = 2^{i-1}, \\
  \gamma_{p}\left( \tt, \Delta(i) \right)
  & \leq \beta \frac{p2^p}{p-1}2^{i}.
\end{align*}

Now, we consider $\Delta = \mu \Delta(i)$ for some $\mu \le 1.$ We have, $\AA \Delta = 0,$
and,
\begin{align*}
  \gg^{\top}\Delta -  \frac{p-1}{p2^p} \gamma_{p}\left( \tt, \Delta  \right)
  & =  \mu \gg^{\top}
    \Delta(i) - \frac{p-1}{p2^p} \gamma_{p}\left( \tt, \mu  \Delta(i)  \right) \\
  & \ge  \mu 2^{i-1} -  \max\{\mu^{2},\mu^{p}\} 
    \beta 2^{i}
  && \text{(Using Lemma~\ref{lem:Rescaling})}
\end{align*}
We can pick,
\[
  \mu \leq
  \begin{cases}
    \left( \frac{1 }{2 \beta p} \right)^{\frac{1}{p-1}} & \text{ if
    } p \le 2 \\
    \frac{1}{4\beta} & \text{ if } p \ge 2.
  \end{cases}
\]
In either case, we get,
\[\gg^{\top} \Delta - \frac{p-1}{p2^p}\gammap(\tt, \Delta) \ge \mu \left( 1-\frac{1}{\min\{2,p\}}\right)2^{i-1}\]

Since we assumed that the optimum of Program~\eqref{eq:residual} is at
most $2^{i},$ this implies that $ \mu \Delta(i)$ achieves an
objective value for Program~\eqref{eq:residual} that is within an
$\Omega_{p}\left(\beta^{-\frac{1}{\min\{p,2\} - 1}} \right)$ fraction
of the optimal.
\end{proof}

\Dual*

\begin{proof}
We choose $i$ such that \eqref{eq:BinarySearchProblems} is feasible, i.e., there exists $\Delta$ such that,
\begin{align*}
    \begin{aligned}
      \gamma_{p} \left( \tt , \Delta \right) &
      \le \frac{p2^p}{p-1} 2^{i}, \\
      \gg^T \Delta & = 2^{i-1}, \\
      \AA \Delta & = 0.
    \end{aligned}
  \end{align*}
Scaling both $\tt$ and $\Delta$ to $\tilde{\tt} =\left(\frac{p-1}{p}\right)^{1/p} 2^{-1-i/p }\tt$ and $\tilde{\Delta} = \left(\frac{p-1}{p}\right)^{1/p}2^{-1-i/p}\Delta$ gives us the following.
\begin{align*}
    \begin{aligned}
      \gamma_{p} \left(\tilde{\tt} , \tilde{\Delta} \right) &
      \le  1, \\
      \gg^T \tilde{\Delta} & = \left(\frac{p-1}{p}\right)^{1/p} 2^{i(1-1/p)-2}, \\
      \AA \tilde{\Delta} & = 0.
    \end{aligned}
  \end{align*}
Now, let $\tt' =  \max\{m^{-1/p},\tilde{\tt_e}\}$. We claim that when $p \geq 2$, $\gamma_{p} \left(\tt' , \tilde{\Delta} \right) - \gamma_{p} \left(\tilde{\tt} , \tilde{\Delta} \right) \leq \frac{p}{2} - 1$. To see this, for a single $j$, let us look at the difference $\gamma_{p} \left(\tt'_j , \tilde{\Delta}_j \right) - \gamma_{p} \left(\tilde{\tt}_j , \tilde{\Delta}_j \right)$. If $\tilde{\tt}_j \geq m^{-1/p}$ the difference is $0$. Otherwise from the proof of Lemma 5 of \cite{BubeckCLL18}, 
\[
\gamma_{p} \left(\tt'_j , \tilde{\Delta}_j \right) - \gamma_{p} \left(\tilde{\tt}_j , \tilde{\Delta}_j \right) \leq \gamma_{p} \left(\tt'_j , \tilde{\Delta}_j \right) - \abs{\tilde{\Delta}_j}^p \leq \left( \frac{p}{2} - 1\right) \left(m^{-1/p}\right)^p.
\]
When $p \leq 2$, we claim that $ \gamma_{p} \left(\tilde{\tt} , \tilde{\Delta} \right) - \gamma_{p} \left(\tt' , \tilde{\Delta} \right) \leq 1 -\frac{p}{2}$. Again if $\tilde{\tt}_j \geq m^{-1/p}$ the difference is $0$. Otherwise, 
\[
\gamma_{p} \left(\tilde{\tt}_j , \tilde{\Delta}_j \right) - \gamma_{p} \left(\tt'_j , \tilde{\Delta}_j \right) \leq  \abs{\tilde{\Delta}_j}^p - \gamma_{p} \left(\tt'_j , \tilde{\Delta}_j \right)  \leq \left(1- \frac{p}{2} \right) \left(m^{-1/p}\right)^p.
\]
To see the last inequality, when $\abs{\Delta_j} \leq \tt'_j$, we require, $\abs{\Delta_j}^p - \frac{p}{2}\tt^{p-2}_j \Delta_j^2 \leq \left(1- \frac{p}{2}\right)\tt_j^p$ which is true. When $\abs{\Delta_j} \geq \tt_j$, it directly follows.
Summing over all $j$ gives us our claims. We know that $\gamma_{p} \left(\tilde{\tt} , \tilde{\Delta} \right) \leq 1$. Thus, $\gamma_{p} \left(\tt' , \tilde{\Delta} \right) \leq \frac{p}{2}$. Next we set $\hat{\Delta} = \left(\frac{2}{p}\right)^{1/2}\tilde{\Delta}$. Note that $\max\{\left(\frac{2}{p}\right)^2,\left(\frac{2}{p}\right)^p \} = \left(\frac{2}{p}\right)^2$ for all $p$. Lemma \ref{lem:Rescaling} thus implies,
\[
\gamma_{p} \left(\tt' , \hat{\Delta} \right) \leq \left(\frac{2}{p}\right) \gamma_{p} \left(\tt' , \tilde{\Delta} \right) \leq 1.
\]
Define $\hat{t}_j = \min\{1, \tt'_j\}$. Note that $\gamma_{p} \left(\hat{\tt} , \hat{\Delta} \right) = \gamma_{p} \left(\tt' , \hat{\Delta} \right)$ since $\gamma_{p} \left(\tt' , \hat{\Delta} \right) \leq 1$ and as a result we have $\gamma_{p} \left(\hat{\tt} , \hat{\Delta} \right) \leq 1$. Observe that $\hat{\Delta}$ is a feasible solution of \eqref{eq:ScaledProblem} thus suggesting that for problem  \eqref{eq:ScaledProblem} $\opt \leq 1$. 
Let $\Delta^{\star}$ be a $\kappa$ - approximate solution to \eqref{eq:ScaledProblem}, i.e.,
\[
\gamma_{p} \left(\hat{\tt} , \Delta^{\star} \right) \leq \kappa \cdot \opt \leq \kappa.
\]
When $p\geq 2$, $\gammap$ is an increasing function of $\tt$ giving us,
\[
\gamma_{p} \left(\tilde{\tt} , \Delta^{\star} \right) \leq \gamma_{p} \left(\tt' , \Delta^{\star} \right) = \gamma_{p} \left(\hat{\tt} , \Delta^{\star} \right) \leq \kappa.
\]
When $p\leq 2$, 
\[
\gamma_{p} \left(\tilde{\tt} , \Delta^{\star}\right) \leq \gamma_{p} \left(\tt' , \Delta^{\star} \right)  + 1 - \frac{p}{2} \leq \kappa+1
\]
This gives,
\[
\gamma_{p} \left(\tt , \left(\frac{p}{p-1}\right)^{1/p} 2^{1+i/p} \Delta^{\star} \right) \leq \frac{p2^p}{p-1} 2^i (\kappa+1)
\]
and Lemma \ref{lem:Rescaling} then implies,
\[
\gamma_{p} \left(\tt , \left(\frac{p}{2}\right)^{1/2} \left(\frac{p}{p-1}\right)^{1/p} 2^{1+i/p} \Delta^{\star} \right) \leq \left(\frac{p}{2}\right)^{p/2} \frac{p2^p}{p-1} 2^i (\kappa+1).
\]
Finally, $\Delta =   \left(\frac{p}{2}\right)^{1/2} \left(\frac{p}{p-1}\right)^{1/p} 2^{1+i/p} \Delta^{\star}$ satisfies the constraints of \eqref{eq:BinarySearchProblems} and is a $\Omega_p(\kappa)$ approximate solution.
\end{proof}

\BoundOpt*
\begin{proof}
Using H\"{o}lder's inequality, we have,
\begin{align*}
  \sum_e \ww_e^{p-2} (\Delta^*_e)^2
  & \leq \left(\sum_e ((\Delta^*_e)^2)^{p/2} \right)^{2/p} \left(\sum_e (\abs{\ww_e}^{p-2})^{p/(p-2)} \right)^{(p-2)/p}\\
  & =  \left(\sum_e \abs{\Delta_e^*}^p \right)^{2/p} \left(\sum_e \abs{\ww_e}^{p} \right)^{(p-2)/p}\\
  & \leq \left(\sum_e \abs{\ww_e}^{p} \right)^{\frac{(p-2)}{p}},
    \text{    since $\sum_{e} (\Delta^*_e)^p \leq 1$}.
\end{align*}
\end{proof}

\subsection{Proofs from Section \ref{sec:InvMaintain}}
\label{sec:ProofsSec6}

\ResistanceToFlow*
\begin{proof}
Recall from the setting of resistances
from Line~\ref{algline:resistance} of \textsc{Oracle}
(Algorithm~\ref{alg:oracle}) that
\[
\rr_e^{\left(i \right)}
=
\left(m^{1/p} \tt_e\right)^{p - 2}
+ \left( \ww_e^{\left(i\right)}\right)^{p - 2}.
\]
By Line~\ref{algline:LowWidth} of Algorithm~\ref{alg:FasterOracleAlgorithm},
we have
\[
\ww^{\left( i + 1 \right)}_{e} - \ww^{\left( i \right)}_{e}
=
\alpha \abs{\Delta_e}.
\]
Substituting this in gives
\[
\frac{\rr_e^{\left(i + 1\right)} - \rr_e^{\left(i\right)}}{\rr_e^{\left(i \right)}}
=
\frac{\left( \ww_e^{\left(i\right)} + \alpha \abs{\Delta_e}\right)^{p - 2}
- \left( \ww_e^{\left(i\right)}\right)^{p - 2}}
{\left(m^{1/p} \tt_e\right)^{p - 2}
+ \left( \ww_e^{\left(i\right)}\right)^{p - 2}}.
\]
There are two cases to consider: \sushant{This proof can be made easier}
\begin{tight_enumerate}
\item $\ww_e^{(i)} \geq m^{1/p} \tt_e$.
\[
\frac{\rr_e^{\left(i + 1\right)} - \rr_e^{\left(i\right)}}{\rr_e^{\left(i \right)}}
\leq
\frac{\left( \ww_e^{\left(i\right)} + \alpha \abs{\Delta_e}\right)^{p - 2}
- \left( \ww_e^{\left(i\right)}\right)^{p - 2}}
{\left( \ww_e^{\left(i\right)}\right)^{p - 2}}
\leq
\left( 1 + \frac{\alpha \abs{\Delta_e}}{\ww_e^{\left( i \right)}} \right)^{p - 2}
-1
\leq
\left( 1 + \alpha \abs{\Delta_e} \right)^{p - 2}
-1
\]
where the last inequality utilizes $\ww_e^{(i)} \geq 1$,
which is due to the assumption and $m^{1/p} \tt_e \geq 1$.

\item $\ww_e^{(i)} \leq m^{1/p} \tt_e$, then replacing the
denominator with the $(m^{1/p} \tt_e)^{p - 2}$ term and simplifying gives
\[
\frac{\rr_e^{\left(i + 1\right)} - \rr_e^{\left(i\right)}}{\rr_e^{\left(i \right)}}
\leq
\left( \frac{\ww_e^{\left(i\right)}}{m^{1/p} \tt_e}
  + \frac{\alpha \abs{\Delta_e}}{m^{1/p} \tt_e}\right)^{p - 2}
- \left(  \frac{\ww_e^{\left(i\right)}}{m^{1/p} \tt_e} \right)^{p - 2}.
\]
As the function $(z + \theta)^{p - 2} - z^{p - 2}$ is monotonically
increasing when $\theta, p - 2 \geq 0$, we may replace the
$\frac{\ww_e^{\left(i\right)}}{m^{1/p} \tt_e}$ by its upper of $1$
(given by the assumption) to get
\[
\frac{\rr_e^{\left(i + 1\right)} - \rr_e^{\left(i\right)}}{\rr_e^{\left(i \right)}}
\leq
\left( 1 + \frac{\alpha \abs{\Delta_e}}{m^{1/p} \tt_e}\right)^{p - 2}
- 1
\leq
\left( 1 + \alpha \abs{\Delta_e}\right)^{p - 2}
- 1,
\]
where the last inequality follows from $m^{1/p} \tt_e \geq 1$. 
\end{tight_enumerate}
\end{proof}

\section{Controlling $\Phi$}
\label{sec:ControllingGammaPotential}

\ReduceWidthGammaPotential*
\begin{proof}
  We prove this claim by induction. Initially, $i = k = 0,$
  and $\Phi(0,0) = 0,$ and thus, the claim holds trivially. Assume that the
  claim holds for some $i,k \ge 0.$
We will use $\Phi$ as an abbreviated notation for $\Phi(i,k)$ below.
 
\paragraph{Flow Step.}
For
  brevity, we let $\gammap(\ww)$ denote
  $\gammap(m^{\nfrac{1}{p}} \tt, \ww),$ 
and use $\ww$ to denote $\ww^{(i,k)}$.

If the next step is a \emph{flow} step, 
  \begin{align*}
    \Phi\left( i+1,k \right)
    =
    & \gammap\left(\ww^{(i,k)}
      +
      \alpha \abs{\Delta}\right)
    \\
    \leq
    & \gammap\left(\ww \right) +  \alpha \abs{\Delta^{\top}
      \gamma'\left(\ww \right)} + p^{2}2^{p-3} \alpha^{2} \sum_e
      \max\{m^{\nfrac{1}{p}} \tt, \abs{\ww_e}, \alpha \Delta_e\}^{p-2}
      \Delta_e^2, 
      \quad \text{by Lemma \ref{lem:FirstOrderAdditive}}\\
    \leq
    & \gammap\left(\ww \right) + p \alpha \gammap\left(\ww
      \right)^{\frac{p-1}{p}}
      + p \alpha m^{\frac{p-2}{2p}} \gammap\left(\ww \right)^{1/2} \\
    & + p^{2} 2^{p-3} \alpha^{2} \sum_e \left( \rr_e\Delta_{e}^{2} +
      \alpha^{p-2} \Delta_e^{p} \right) , \quad \text{by Lemma
      \ref{lem:Oracle}}\\
    \leq
    & \gammap\left(\ww \right) + p \alpha \gammap\left(\ww
      \right)^{\frac{p-1}{p}}
      + p \alpha m^{\frac{p-2}{2p}} \gammap\left(\ww \right)^{1/2} \\
    & + p^{2} 2^{p-3} \left(\alpha^{2}  m^{\frac{p-2}{p}} +
      \alpha^{2} \pnorm{\ww}^{p-2}  +  \sum_e  \alpha^{p}  \Delta_e^{p}
      \right), \quad \text{by Lemma \ref{lem:Oracle}}\\
    \leq
    & \gammap\left(\ww \right) + p \alpha \gammap\left(\ww
      \right)^{\frac{p-1}{p}}
      + p \alpha m^{\frac{p-2}{2p}} \gammap\left(\ww \right)^{1/2} \\
    & + p^{2} 2^{p-3} \left(\alpha^{2}  m^{\frac{p-2}{p}} +
      \alpha^{2} \pnorm{\ww}^{p-2}  +  \alpha m^{\frac{p-1}{p}} 
      \right), \quad \text{by Assumption~\ref{enu:pPowerStep} of this
      Lemma}\\
\intertext{Using $\pnorm{\ww} \le \gammap(\ww)$}
    \leq
    & \gammap\left(\ww \right) + p \alpha \gammap\left(\ww
      \right)^{\frac{p-1}{p}}
      + p \alpha m^{\frac{p-2}{2p}} \gammap\left(\ww \right)^{1/2} \\
    & + p^{2} 2^{p-3} \left(\alpha^{2}  m^{\frac{p-2}{p}} +
      \alpha^{2} \gammap(\ww)^{\frac{p-2}{p}}  +  \alpha m^{\frac{p-1}{p}} 
      \right)\\
   \intertext{ Recall $\gammap(\ww) = \Phi(\ww).$ Letting $z$ denote
    $\max\{\Phi(\ww),m\}^{{\nfrac{1}{p}}}$, we have,}
    \leq
    & z^{p} + p \alpha z^{(p-1)}
      + p \alpha z^{(p-1)} \\
    & + p^{2} 2^{p-3} \left(\alpha^{2}  z^{(p-2)} +
      \alpha^{2} z^{p-2}  +  \alpha z^{(p-1)} 
      \right)\\
    \leq
    & z^{p} + p^{2} 2^{p-2} \alpha z^{(p-1)}\\
    & + p^{2} 2^{p-2} \alpha^{2}  z^{(p-2)}\\
  \leq
  & (z + p^{2}2^{p}  \alpha)^{p}.
  \end{align*}

  From the inductive assumption, we have
  \begin{align*}
  z & = \max\left\{\Phi,m\right\}^{\nfrac{1}{p}} \le
    \max\left\{\left(
    {p^{2}2^{p}\alpha i} + m^{\nfrac{1}{p}} \right)^{p}
    \exp\left(O_p(1) \frac{k}{\rho^2 m^{2/p} \beta^{-\frac{2}{p-2}}} \right), 
      m\right\}^{{\nfrac{1}{p}}}\\
    & =  \left(
    {p^{2}2^{p}\alpha k_{1}} + m^{\nfrac{1}{p}} \right)
    \left(\exp\left(O_p(1) \frac{k}{\rho^2 m^{2/p} \beta^{-\frac{2}{p-2}}} \right)\right)^{1/p}.
   \end{align*}
    Thus,
    \[
      \Phi(i+1,k)  \le (z + p^{2}2^{p} \alpha)^{p}
      \le \left({p^{2}2^{p}\alpha (i+1)} +
        m^{\nfrac{1}{p}}\right)^{p}
    \exp\left(O_p(1) \frac{k}{\rho^2 m^{2/p} \beta^{-\frac{2}{p-2}}} \right)
    \]
    proving
    the inductive claim.

 \paragraph{%
Width Reduction Step.}
To analyze a width-reduction step, we first observe that, by
Lemma~\ref{lem:Oracle} and the induction hypothesis, which ensures
$\norm{\ww^{(i,k)}}_p^{p} \leq \Phi \leq O_p(1) m$,
and hence $\sum_{e} \rr_e f_e^2 \leq O_p(1) m^{(p-2)/p}$
so we have
\[
\sum_{e \in H} \rr_e \leq \rho^{-2} \sum_{e \in H} \rr_e f_e^2 \leq
\rho^{-2} \sum_{e} \rr_e f_e^2 \leq \rho^{-2} O_p(1) m^{(p-2)/p}.
\]
  Thus, when the next step is a width-reduction step, we have,
  \begin{align*}
    \Phi(i,k+1)
    & \le \Phi + O_p(1) \sum_{e \in H} \rr_e^{\frac{p}{p-2}}
\\ 
    & \le \Phi + O_p(1) \left( \sum_{e \in H} \rr_e \right)\left( \max_{e \in
      H} \rr_e \right) ^{\frac{p}{p-2}-1} 
\\ 
    & \le \Phi + O_p(1) \left( \rho^{-2} m^{\frac{p-2}{p}} \right)
      \beta^{\frac{2}{p-2}} 
\intertext{ Letting $z$ denote
    $\max\{\Phi(\ww),m\}^{{\nfrac{1}{p}}}$, we have,}
    & \le z \left( 1+ O_p(1) \left( \rho^{-2} m^{-\frac{2}{p}} \right)
      \beta^{\frac{2}{p-2}} \right).
 \end{align*}
 
    Thus,
    \[
      \Phi(i,k+1)  
\le 
z \left( 1+ O_p(1) \left( \rho^{-2} m^{-\frac{2}{p}} \right)
      \beta^{\frac{2}{p-2}} \right)
      \le \left({p^{2}2^{p}\alpha i} +
        m^{\nfrac{1}{p}}\right)^{p}
    \exp\left(O_p(1) \frac{k+1}{\rho^2 m^{2/p} \beta^{-\frac{2}{p-2}}} \right)
    \]
    proving
    the inductive claim.

\end{proof}

\newcommand{\Solver}{\textsc{Solver}}
\newcommand{\EnhancedSolver}{\textsc{EnhancedSolver}}
\newcommand{\ggtilde}{\widetilde{\gg}}

\section{Solving L2 problems}
\label{sec:L2Solver}

\begin{lemma}
  Given an algorithm {\Solver} for solving
  $\BB^{\top} \RR^{-1} \BB \xx = \dd,$ for a $m \times n$-fixed matrix
  $\BB,$ a fixed positive diagonal matrix $\RR > 0$ and an arbitrary
  vector $\dd,$ there is an algorithm {\EnhancedSolver} that can solve
  \begin{align}
    \begin{aligned}
      \min_{\ff} \quad & \frac{1}{2} \ff^{\top} \RR \ff \\
      \text{s.t.} \quad & \BB^{\top}\ff = 0 \\
      & \gg^{\top} \ff = z
    \end{aligned}
  \end{align}
  with one call to {\Solver}, two multiplications of $\BB$ with a
  vector, and an additional $O(m+n)$ time, if we assume
  \[ \gg^{\top} \RR^{-1} \BB \left( \BB^{\top} \RR^{-1} \BB
    \right)^{-1} \BB^{\top} \RR^{-1} \gg < \gg^{\top} \RR^{-1} \gg.\]
\end{lemma}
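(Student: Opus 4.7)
The plan is to form the Lagrangian and derive closed-form expressions for the optimal $\ff$ by exploiting the fact that both constraints are linear. Introducing dual variables $\xx \in \rea^n$ for $\BB^{\top} \ff = 0$ and $\mu \in \rea$ for $\gg^{\top}\ff = z$, stationarity gives $\RR \ff = \BB \xx + \mu \gg$, hence $\ff = \RR^{-1}(\BB\xx + \mu \gg)$. Substituting into $\BB^{\top}\ff = 0$ yields $\BB^{\top}\RR^{-1}\BB \xx = -\mu \BB^{\top}\RR^{-1}\gg$, which is exactly the type of linear system \Solver{} can handle.

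My plan then is: first compute the vector $\dd := \BB^{\top} \RR^{-1} \gg$ (one multiplication by $\BB^{\top}$ plus $O(m)$ work), then make a single call to \Solver{} to obtain $\yy$ satisfying $\BB^{\top}\RR^{-1}\BB\, \yy = \dd$. The form of $\ff$ is then $\ff = \mu \RR^{-1} \ggtilde$, where $\ggtilde := \gg - \BB \yy$ is computed with a second multiplication by $\BB$ and $O(m)$ additions. Finally, enforcing $\gg^{\top} \ff = z$ pins down $\mu$ as
\[
\mu = \frac{z}{\gg^{\top} \RR^{-1} \ggtilde}
= \frac{z}{\gg^{\top}\RR^{-1}\gg - \gg^{\top}\RR^{-1}\BB(\BB^{\top}\RR^{-1}\BB)^{-1}\BB^{\top}\RR^{-1}\gg},
\]
which is well-defined precisely because the strict inequality in the hypothesis guarantees the denominator is strictly positive. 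Scaling $\RR^{-1}\ggtilde$ by $\mu$ gives $\ff$ in $O(m)$ more time.

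The operation count matches exactly what is claimed: one \Solver{} call, two matrix-vector products involving $\BB$ (once as $\BB^{\top}$ to form $\dd$, once as $\BB$ to form $\ggtilde$), and $O(m+n)$ additional vector arithmetic (entry-wise scaling by $\RR^{-1}$, dot products, and the final scaling by $\mu$). Correctness of the returned $\ff$ follows from the KKT conditions: the problem is a strictly convex quadratic in $\ff$ restricted to an affine subspace, so the stationary point we construct is the unique minimizer, provided feasibility holds.

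I expect the only subtle point is justifying that the assumed strict inequality is both necessary and sufficient for the problem to be well-posed. The inequality is the Schur-complement-style condition that $\gg$ does not lie in the column span of $\BB$ after reweighting by $\RR^{-1/2}$; equivalently, it says the feasible set $\{\ff : \BB^{\top}\ff = 0,\ \gg^{\top}\ff = z\}$ is non-empty for every $z$, so that $\mu$ can be chosen to satisfy the second constraint. I will state this briefly and then conclude by reading off the cost from the procedure above.
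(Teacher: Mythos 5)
Your proof is correct and reaches the same algorithm and the same closed form for the Lagrange multiplier $\mu$ as the paper. Both arguments start from the Lagrangian stationarity condition $\ff = \RR^{-1}(\BB\xx + \mu\gg)$, but you diverge from the paper in how the remaining unknowns are pinned down: the paper fully eliminates $\ff$ and then $\mu$ (its $a$) from the Lagrangian, arriving at a rank-one-perturbed normal equation for $\vv$ that it inverts via the Sherman--Morrison formula, whereas you observe directly that $\BB^{\top}\ff = 0$ forces $\xx = -\mu\yy$ where $\yy$ solves $\BB^{\top}\RR^{-1}\BB\,\yy = \BB^{\top}\RR^{-1}\gg$, so $\mu$ drops out of that solve by linearity and is fixed at the end by $\gg^{\top}\ff = z$. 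Your route is a bit cleaner: it avoids Sherman--Morrison entirely and makes it transparent why exactly one \Solver\ call suffices and why the denominator $\gg^{\top}\RR^{-1}\gg - \gg^{\top}\RR^{-1}\BB(\BB^{\top}\RR^{-1}\BB)^{-1}\BB^{\top}\RR^{-1}\gg$ is the Schur complement whose strict positivity (the stated hypothesis) is what makes the problem well-posed. The operation count you give (one \Solver\ call, one application of $\BB^{\top}$, one of $\BB$, and $O(m+n)$ vector work) matches the paper's bound. One small point of care, which does not affect correctness here: when you say the hypothesis is ``necessary and sufficient'' for well-posedness, it is worth being slightly careful --- the inequality is always $\le$ by Cauchy--Schwarz (it is a projection), and strictness is exactly equivalent to $\RR^{-1/2}\gg \notin \mathrm{col}(\RR^{-1/2}\BB)$, which is what guarantees a nonzero denominator and hence feasibility for every $z$.
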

\begin{proof}
  Introducing the Lagrangian multipliers $\vv, a$ respectively for the constraint
  $\BB^{\top} \ff= 0,$ and $\gg^{\top} \ff = z,$ we can write the
  Lagrangian as
  \begin{align*}
     \frac{1}{2} \ff^{\top} \RR \ff - \vv^{\top} \BB^{\top}\ff - a
    \left( \gg^{\top} \ff - z \right).
  \end{align*}
  Now, optimizing the Lagrangian with respect to an unconstrained $\ff$, allows us
  to write
  \[\ff = \RR^{-1} \left( \BB \vv + a\gg \right).\]
  Plugging this back, we can simplify our Lagrangian as
  \[-\frac{1}{2} \left( \BB \vv + a \gg \right)^{\top} \RR^{-1}
    \left( \BB \vv + a \gg \right) + az.\]
  Optimizing with respect to $a,$ gives us,
  \[a  = \frac{z-\gg^{\top} \RR^{-1} \BB \vv}{\gg^{\top} \RR^{-1}
      \gg}.\]
  Plugging this back, gives the Lagrangian as
  \[-\frac{1}{2} \vv^{\top} \BB^{\top}\RR^{-1} \BB \vv + \frac{\left(
        z-\gg^{\top}\RR^{-1}\BB \vv \right)^{2}}{2 \gg^{\top} \RR^{-1}
      \gg}.\]
  We let $\ggtilde$ denote the vector $\BB^{\top} \RR^{-1} \gg$ and
  $\MM$ denote the matrix $ \BB^{\top} \RR^{-1} \BB.$ Thus, the
  Lagrangian can be written as,
  \[-\frac{1}{2} \vv^{\top} \MM \vv + \frac{\left(
        z-\ggtilde^{\top} \vv \right)^{2}}{2 \gg^{\top} \RR^{-1}
      \gg}.\]
  This implies that the optimal $\vv$ is given by the equation
  \[\left( \MM - \frac{1}{\gg^{\top} \RR^{-1} \gg}
      \ggtilde \ggtilde^{\top} \right)\vv
    = - \frac{z \ggtilde}{\gg^{\top} \RR^{-1} \gg}.
  \]
  From the condition assumed on $\gg,$ we have
  $\ggtilde \MM^{-1} \ggtilde < \gg \RR^{-1} \gg.$
  Thus, we can solve this system using the Sherman-Morrisson formula
  as follows,
  \begin{align*}
    \vv & = -
          \left(
          \MM^{-1}
          + \frac
          {        \MM^{-1} \ggtilde \ggtilde^{\top} \MM^{-1}}
          { \gg^{\top} \RR^{-1} \gg -
          \ggtilde^{\top} \MM^{-1}  \ggtilde}
          \right)
          \frac
          {z \ggtilde}
          {  \gg^{\top} \RR^{-1} \gg} \\
        & = - \frac{z \MM^{-1} \ggtilde}{ \gg^{\top} \RR^{-1} \gg -
          \ggtilde^{\top} \MM^{-1}  \ggtilde}
  \end{align*}
  The algorithm {\EnhancedSolver} computes $\ggtilde,$ and then
  invokes {\Solver} to compute $\MM^{-1} \ggtilde.$ This allows us to
  compute $\vv$ is an additional $O(m+n)$ time. Finally, we can
  compute $\ff = \RR^{-1}(\BB \vv + a \gg)$ using another
  multiplication with $\BB$ and an additional $O(m+n)$ time.
\end{proof}

\section{General $\ell_2$ Resistance Monotonicity}
\label{sec:ckmstResIncreaseProof}

\ckmstResIncrease*

\begin{proof}
Recall
\begin{align*}
 \energy{\rr} = \min_{\Delta} \quad
  & \sum_e
  \rr_e \Delta_e^2\\
  \text{s.t. } \quad & \AA' \Delta = \cc 
\end{align*}
Letting $\RR$ denote the diagonal matrix with $\rr$ on its diagonal,
we can write the above as 
\begin{align}
\label{eq:l2primal}
 \energy{\rr} = \min_{\Delta} \quad
  & \Delta^{\top} \RR \Delta \\\nonumber
  \text{s.t. } \quad & \AA' \Delta = \cc 
\end{align}
Using Lagrangian duality, and noting that strong duality holds, we can write this as 
\begin{align*}
 \energy{\rr} = \min_{\Delta} \max_{\yy} & \quad
 \Delta^{\top} \RR \Delta + 2\yy^{\top} (\cc-\AA' \Delta )  \\
= \max_{\yy} \min_{\Delta}  &\quad  2\yy^{\top} \cc + 
  \Delta^{\top} \RR \Delta - 2\yy^{\top}\AA' \Delta
\end{align*}
The minimizing $\Delta$ can be found by setting the gradient w.r.t. to
this variable to zero.
This gives $2 \RR \Delta - 2\yy^{\top}\AA'=0$, so that 
$\Delta = \RR^{-1}(\AA')^{\top}\yy$.
Plugging in this choice of $\Delta$, we arrive at the dual program
\begin{align}
\label{eq:l2dual1}
 \energy{\rr} = \max_{\yy} & \quad
                             2 \cc^{\top} \yy  - \yy^{\top} \AA'\RR^{-1}(\AA')^{\top}\yy
\end{align}
Crucially, strong duality also implies that if $\Delta^*$ is an
optimal solution of the primal program~\eqref{eq:l2primal}, and
$\yy^*$ is an optimal solution to the dual then 
\[
\min_{\Delta}  \quad  2(\yy^*)^{\top} \cc + 
  \Delta^{\top} \RR \Delta - 2(\yy^*)^{\top}\AA' \Delta
\]
is optimized at
  $\Delta = \Delta^*$.
This in turn implies the gradient w.r.t. $\Delta$ at $\Delta =
\Delta^*$ is zero, so that $\Delta^* = \RR^{-1}(\AA')^{\top}\yy^*$.
Let $\aa_e$ be the $e$th row of $\AA'$.
Then the previous equation tells us that
$\Delta^*_e = \frac{1}{\rr_e} \aa_e^{\top} \yy^*$.
This implies that 
\begin{equation}
\forall e. \rr_e (\Delta^*_e)^2 = \frac{1}{\rr_e} (\aa_e^{\top}
\yy^*)^2.\label{eq:edgewisePrimalDual}
\end{equation}

Consider another program, essentially the same as~\eqref{eq:l2dual1}, but with additional
scalar valued variable $\theta\in\rea$  introduced.
\begin{align}
\label{eq:l2dual2}
 \energy{\rr} = \max_{\zz,\theta} & \quad
                             \theta \cdot 2 \cc^{\top} \zz  - \theta^2
                                    \cdot \zz^{\top} \AA' \RR^{-1} (\AA')^{\top}\zz
\end{align}
The two programs~\eqref{eq:l2dual1} and~\eqref{eq:l2dual2} have the same value, since for any $\yy$,
the assignment $(\zz,\theta) = (\yy,1)$ ensures both objectives take
the same value, and conversely for any $(\zz,\theta)$, the assignment
$\yy = \theta \zz$  ensures both programs take the same value.

We see that $\zz = \yy^*,$ and $\theta = 1$ is an optimal solution to
~\eqref{eq:l2dual2}. Hence 
\[
\left[
\frac{d}{d \theta}
\left(
\theta \cdot 2 \cc^{\top} \yy^*  - \theta^2
                                    \cdot (\yy^*)^{\top} \AA'\RR^{-1} (\AA')^{\top}\yy^*
\right)
\right]_{\theta = 1} = 0
\]
Consequently, $\cc^{\top} \yy^* = (\yy^*)^{\top}\AA'\RR^{-1} (\AA')^{\top}\yy^*$.
Hence $\cc^{\top} \yy^* = \energy{\rr}$.
Again, by a scaling argument, this implies that 
\begin{align*}
2-\frac{1}{\energy{\rr} } = 
&  \max_{\yy}  \quad
2 \cc^{\top} \yy  - \yy^{\top} \AA'\RR^{-1}
                                    (\AA')^{\top}\yy
\\
& \text{s.t. } \quad  \cc^{\top} \yy = 1
\end{align*}
So that 
\begin{align}
\label{eq:l2dualinverse}
\frac{1}{\energy{\rr} } = 
&  \min_{\yy}  \quad
\yy^{\top} \AA'\RR^{-1}
                                    (\AA')^{\top}\yy
\\\nonumber
& \text{s.t. } \quad  \cc^{\top} \yy = 1
\end{align}
Note that one optimal assignment for the
program~\eqref{eq:l2dualinverse} is  
$\yytil = \frac{\yy^*}{\energy{\rr}}$.
Also observe that if we consider the program \eqref{eq:l2dualinverse} with
$\rr'$ instead of $\rr$ as the resistances, then $\yytil
= \frac{\yy^*}{\energy{\rr}}$ is still a feasible solution.
Hence, using the observation~\eqref{eq:edgewisePrimalDual}, we get
\[
\frac{1}{\energy{\rr'} }
\leq \frac{1}{\energy{\rr}^2}
\left(\yy^*\right)^{\top} \AA'
\left(\RR'\right)^{-1}
\left(\AA'\right)^{\top} \yy^*
=
\frac{1}{\energy{\rr}^2}
\sum_{e}\frac{\rr_e}{\rr_e'} \rr_e(\Delta^*_e)^2
\]
Factoring out the $\Psi(\rr)$ term gives
\[
\frac{1}{\energy{\rr'} }
\leq
\frac{1}{\energy{\rr}}
\left( 1 - 
\frac{\sum_{e} \left(1 -  \frac{\rr_e}{\rr_e'} \right) \rr_e(\Delta^*_e)^2}
{\energy{\rr}} \right)
\leq
\exp\left( - 
\frac{\sum_{e} \left(1 -  \frac{\rr_e}{\rr_e'} \right) \rr_e(\Delta^*_e)^2}
{\energy{\rr}} \right).
\]
Now consider the term $1 - \frac{\rr_e}{\rr'_e}$:
if $\rr'_e \geq 2 \rr_e$, then it is at least $1/2$.
Otherwise, it can be rearranged to
\[
\frac{\rr'_e - \rr_e}{\rr'_e}
\geq
\frac{\rr'_e - \rr_e}{2\rr_e}.
\]
So in either case, we have
\[
\frac{1}{\energy{\rr'} }
\leq
\frac{1}{\energy{\rr}}
\exp\left( - 
\frac{\sum_{e} \min\left\{1, \frac{\rr'_e - \rr_e}{\rr_e} \right\}
\rr_e(\Delta^*_e)^2}
{2\energy{\rr}} \right),
\]
which upon rearranging gives the desired result.
\end{proof}

\end{document}